\documentclass[11pt]{article}
\usepackage[T1]{fontenc}
\usepackage{amsmath,amsxtra,amssymb,latexsym,amscd,amsthm,pb-diagram,fancyhdr,euscript}
\usepackage{amsthm}
\usepackage{indentfirst}
\usepackage{epsfig}
\usepackage{mathrsfs}
\usepackage{slashed}
\usepackage{hyperref}
\usepackage{float}
\hypersetup{
   colorlinks,
    citecolor=blue,
    filecolor=black,
    linkcolor=blue,
    urlcolor=magenta
}
\hypersetup{linktocpage}

\renewcommand{\d}{\mathrm{d}}

\newcommand{\scri}{{\mathscr I}}


\newcommand{\thorn}{\mbox{\th}}
\newcommand{\hnabla}{\hat{\nabla}}
\newcommand{\hook}{{\setlength{\unitlength}{11pt}   
                   \begin{picture}(.833,.8)
                   \put(.15,.08){\line(1,0){.35}}
                   \put(.5,.08){\line(0,1){.5}}
                   \end{picture}}}

\newtheorem{theorem}{Theorem}

\newtheorem{remark}{Remark}

\newtheorem{defn}{Definition}[section]
\newtheorem{thm}{Theorem}[section]

\newtheorem{cor}{Corollary}[section]

\topmargin0cm
\headheight1cm
\headsep1cm
\oddsidemargin0pt
\evensidemargin0pt
\textheight21cm
\textwidth16cm
\begin{document}
\mbox{} \thispagestyle{empty}

\begin{center}
\bf{\Huge Cauchy and Goursat problems for the generalized spin zero rest-mass equations on Minkowski spacetime} \\

\vspace{0.1in}

{PHAM Truong Xuan\footnote{Faculty of Information Technology, Department of Mathematics, Thuyloi university, Khoa Cong nghe Thong tin, Bo mon Toan, Dai hoc Thuy loi, 175 Tay Son, Dong Da, Ha Noi, Viet Nam.\\
Email~: xuanpt@tlu.edu.vn and phamtruongxuan.k5@gmail.com}}
\end{center}

{\bf Abstract.} In this paper, we study the Cauchy and Goursat problems of the spin-$n/2$ zero rest-mass equations on Minkowski spacetime by using the conformal geometric method. In our strategy, we prove the wellposedness of the Cauchy problem in Einstein's cylinder. Then we establish pointwise decays of the fields and prove the energy equalities of the conformal fields between the null conformal boundaries $\scri^\pm$ and the hypersurface $\Sigma_0=\left\{ t=0 \right\}$. Finally, we prove the wellposedness of the Goursat problem in the partial conformal compactification by using the energy equalities and the generalisation of H\"ormander's result. 

{\bf Keywords.} Minkowski spacetime, spin-$n/2$ zero rest-mass fields, null infinity, Penrose's conformal compactification, Cauchy problem, Goursat problem.

{\bf Mathematics subject classification.} 35L05, 35P25, 35Q75, 83C57.

\tableofcontents

\section{Introduction}
The spin-$n/2$ zero rest-mass fields were studied since 1960's in the works of Sachs \cite{Sa61} and Penrose \cite{Pe1964,Pe1965}. The authors discovered the ''peeling-off'' properties of the fields along the outgoing null geodesic lines in Minkowski spacetime. Then, the ''peeling-off'' properties have been extensively to study in \cite{ChruDe2002,CoScho2003,KaNi2003,Shu}. Specifically, in the asymptotic flat spacetimes Mason and Nicolas \cite{MaNi2009,MaNi2012} constructed the optimal space of the initial data which guarantees the ''peeling-off'' property in Schwarzschild spacetime for the scalar, Dirac and Maxwell fields, i.e, spin-$0$, spin-$1/2$ and spin-$1$ fields respectively. Recently, Nicolas and Pham \cite{NiXu2019,Xuan2020} have extended the works of Mason and Nicolas for the wave and Dirac equations in Kerr spacetime.

The pointwise decays (also called Price's law) of the spin-$n/2$ zero rest-mass fields in Minkowski spacetime were established by Andersson et al. in \cite{ABJ} by analyzing Hertz potentials. The authors obtained the existence and pointwise estimates for the Hertz potentials using a weighted estimate for the spin-wave equation, then applied to give weighted estimates for the solutions of the spin-$n/2$ zero rest-mass field equations. Specifically, the pointwise decays for the Maxwell field on the black hole spacetimes such as Schwarzschild and Kerr spacetimes were studied by Tataru et al. \cite{MeTaTo2017} and the results for Dirac field was obtained by Smoller and Xie \cite{SmolXi}. Recently, the almost Price's law for Dirac and Maxwell fields has been studied in Schwarzschild and very slowly Kerr spacetimes in the works of Ma \cite{Ma2020,Ma2021}.

Another interesting aspect of these fields is the local integral formula which was established initially in Minkowski spacetime by Penrose \cite{Pe63}. Then, Joudioux \cite{Jo2011} extended the formula on the general curved spacetimes. The local integral formula gives the solution of the Goursat problem in the region near timelike infinity $i^\pm$ of Minkowski spacetime.

Concerning the Goursat problem of the spin non-zero field equations, Mason and Nicolas established the wellposedness of the scalar wave, Dirac and Maxwell equations in the asymptotically simple spacetimes in \cite{MaNi2004}. By using these results, they constructed the conformal scattering operators, i.e, the geometric scattering operators for these field equations in the asymptotic simple spacetimes. After that, the Goursat problem for the spin field equations in the asymptotic flat spacetime is also established in some recent works. In particular, the Goursat problem for the scalar wave equation has been studied in Schwarzschild spacetime by Nicolas \cite{Ni2016} and the ones for Dirac and Maxwell equations in the Reissner-Nordstr\"om-de Sitter spacetime have been treated by Mokdad \cite{Mo2019,Mo2021}. The authors in \cite{MaNi2004,Ni2016,Mo2019,Mo2021} have used the geometric methods to establish the wellposedness of the Goursat problem. In detail, they combined the vector field method (energy estimates) and the generalisation of H\"ormander's result (see \cite{Ho1990,Ni2006}) to obtain the full solution to the problem. The Goursat problem is an important step to construct the conformal scattering theory for the field equations in the asymptotic simple and flat spacetimes (in detail see \cite{MaNi2004,Ni2016} for the construction of the theory). In related work, the wellposedness of the Goursat problem and conformal scattering theories for the Regge–Wheeler and Zerilli equations in Schwarzschild spacetime have been obtained by Pham \cite{Xuan2021}. 

In the present paper, we study the Cauchy and Goursat problems for the spin-$n/2$ zero rest-mass equations in Minkowski spacetime by using geometric methods. We know that Minkowski spacetime is embedded fully into Einstein's cylinder by Penrose's conformal mapping. We prove the wellposedness of the Cauchy problem of the equations in Einstein's cylinder by using the matrix form of the equations and Leray's theorem for the wellposedness of the global hyperbolic systems (see Theorem \ref{cauchyproblem}). Then, we apply the results to wellposedness of the equations in the full and partial conformal compactification spacetimes. As consequences of the Cauchy problem are that we can define the trace operators on the null hypersurfaces $\scri^\pm$ and then establish the energy equality between the null boundaries $\scri^\pm$ and the Cauchy hypersurface $\Sigma_0=\left\{ t=0 \right\}$ in the full conformal compactification spacetime. Using again the fully conformal compactification of Minkowski spacetime where $i^\pm$ are finite points, we obtain the pointwise decays, i.e, decays in time of all the components of the origin sin-$n/2$ zero rest-mass fields (see Theorem \ref{decay}). We use these pointwise decays to prove the energy equality in the partial conformal compactification, where $i^\pm$ are infinite points (see Theorem \ref{egalite_energies}). 

We develop the methods in \cite{MaNi2004,Mo2019,Ni2016,Xuan2021} to establish the wellposedness of the Goursat problem. In particular, the Goursat problem will be considered in the partial conformal compactification spacetime in two parts. The first one we apply the generalisation of H\"ormander's result to obtain the solution in the future $\mathcal{I}^+(\mathcal{S})$ of the Cauchy hypersurface $\mathcal{S}$ which intersects strictly at the past of the support of initial data. In the second one, we extend the solution of the first part down to the initial hypersurface $\Sigma_0$. This corresponds to prove the wellposedness of the Cauchy problem in the domain $\mathcal{I}^-(\mathcal{S})$ with the initial data which is the restriction of the first step's solution on $\mathcal{S}$. We obtain the solution in the second step by using again the wellposedness of the Cauchy problem in the full conformal compactification and the energy equalities. Finally, the solution of the Goursat problem is a union of the ones obtained in the two parts $\mathcal{I}^+(\mathcal{S})$ and $\mathcal{I}^-(\mathcal{S})$ (see Theorem \ref{Goursatprob}). As a direct consequence of the wellposedness of the Goursat problem and the energy equalities we show that there exists a conformal scattering operator, i.e, a geometric scattering operator which associates the past scattering data on $\scri^-$ to the future scattering data on $\scri^+$ (see Corollary \ref{Con}).

The paper is organized as follows: Section \ref{S2} we recall the geometric setting of Minkowski spacetime which consists of the full and partial conformal compactifications, Section \ref{S3} we describe the spin-$n/2$ zero rest-mass fields and equations in the spin frames, Section \ref{S4} we prove the wellposedness of the Cauchy problem and establish the timelike decays, Section \ref{S5} relies on the energy fluxes of the fields and the proof of the energy equality, Section \ref{S6} we establish the wellposedness of the Goursat problem and construct a conformal scattering operator and finally in Appendix \ref{A} first we recall some formulas of curvature spinors and spinor form of commutators, then we establish the constraint system by using intrinsic space spinor derivatives and prove the existence of a nontrivial solution of this system, we give the generalisation of H\"ormander's result and some calculations which are necessary to prove the wellposedness of the Goursat problem.\\
{\bf Remarks and Notations.}\\
$\bullet$ For the higher spin-$n/2$ zero rest-mass fields ($n\geq 3$) our results valid only in Minkowski spacetime due to in the general curved spacetime the spin-$n/2$ massless equations have only trivial solution.\\
$\bullet$ We use the formalisms of abstract indices, $(n+1)$-component spinor, Newman-Penrose and Geroch-Held-Penrose.\\
$\bullet$ We use the notation $\underbrace{n_an_b...n_c}_{k \; factors} \underbrace{l_dl_e...l_f}_{n-k \; factors}$ to denote the sum of the $C_n^k$ components, where each component has $k$ factors $n_i$ and $n-k$ factors $l_j$ $(i,j \in \left\{ a,b...f \right\} )$.

\section{Geometric setting} \label{S2}
In this section, we recall the conformal structures of Minkowski spacetime (for more details see Penrose \cite{Pe1964,Pe1965} and also Nicolas \cite{Ni2015}) that are basic frameworks to study the wellposedness of Cauchy and Goursat problems. In spherical coordinates $(t,\, r,\, \theta,\, \varphi)$  Minkowski spacetime is $\mathbb{R}^{1+3}$- Lorentzian manifold $\mathbb{M}$ endowed with the metric
\begin{equation}
g = \d t^2 - \d r^2 - r^2 (\d \theta^2 + \sin^2\theta \d \varphi^2).
\end{equation}
The Newman-Penrose tetrad normalization can be chosen as
$$l^a = \frac{1}{\sqrt 2} (\partial_t + \partial_r), \; n^a = \frac{1}{\sqrt 2}(\partial_t - \partial_r), \; m^a = \frac{1}{r\sqrt 2} \left( \partial_\theta + \frac{i}{\sin\theta}\partial_\varphi \right).$$
They are associated with the normalized spin-frame $\left\{ o^A,\iota^A \right\}$ by
$$l^a = o^A o^{A'}, \; n^a = \iota^A \iota^{A'}, \; m^a = o^A \iota^{A'}.$$
The volume form associated with metric $g$ is
$$\mathrm{dVol}^4_g = r^2\sin\theta \d t\d r\d \theta \d \varphi = r^2 \d t \d r \d^2\omega,$$
where $\d^2\omega$ denotes the volume form of the unit $2$-sphere $S^2$.

\subsection{The full conformal compactification}\label{Embedd_Minkowski}
Following the advanced and retarded coordinates: $u = t-r \; , \; v = t+r$ we  put
$$p = \arctan u \; , \; q = \arctan v,$$
$$\tau = p + q = \arctan(t-r) + \arctan(t+r),$$
$$\zeta = q - p = \arctan (t+r) - \arctan (t-r).$$
Choosing the conformal factor
$$\Omega = \frac{2}{\sqrt{1+u^2}\sqrt{1+v^2}} = \frac{2}{\sqrt{1+(t-r)^2}\sqrt{1+(t+r)^2}},$$
we obtain the rescaled metric
\begin{equation}\label{Einstein_metric}
\hat g = \Omega^2 g = \d\tau^2 - \d \zeta^2 - \sin^2\zeta\d\omega^2,
\end{equation}
and the full conformal compactification of Minkowski spacetime is described by the domain
$$\hat{\mathbb M} = \left\{|\tau| + \zeta \leq \pi \; , \; \zeta \geq 0 \; , \; \omega\in S^2  \right\}.$$
We notice that the rescaled metric $\hat g$ can be extended analytically to the whole Einstein's cylinder $\mathfrak{C} = \mathbb R_\tau \times S^3_{\zeta, \theta, \varphi}$. The full conformal boundaries of Minkowski spacetime are described by
\begin{itemize}
\item[$\bullet$] The future and fast null infinities are
$$\scri^+ = \left\{ (\tau, \zeta,\omega); \; \tau + \zeta = \pi, \; \zeta \in ]0,\pi[, \; \omega \in S^2  \right\},$$
$$\scri^- = \left\{ (\tau, \zeta,\omega); \; \tau - \zeta = \pi, \; \zeta \in ]0,\pi[, \; \omega \in S^2  \right\},$$
which are smooth null hypersurfaces for $\hat g$.
\item[$\bullet$] The future and past timelike infinities are
$$i^\pm = \left\{ (\tau = \pm \pi, \; \zeta = 0, \; \omega); \; \omega \in S^2 \right\},$$
which are smooth points for $\hat g$.
\item[$\bullet$] The spacelike infinity is
$$i_0 = \left\{ (\tau = 0, \; \zeta = \pi, \; \omega); \; \omega \in S^2 \right\},$$
which is also a smooth point for $\hat g$.
\end{itemize}
The hypersurface $\Sigma_0=\left\{t = 0 \right\}$ in Minkowski spacetime is described by the $3-$sphere $S^3 = \left\{\tau = 0 \right\}$ excluding the point $i_0$ on Einstein's cylinder, i.e, $S^3 = \left\{t = 0 \right\} \cup i_0$.

We can choose the Newman-Penrose tetrad normalization as follows:
\begin{equation}\label{NMP}
\hat l^a = \frac{1}{\sqrt 2} \left(\partial_\tau + \partial_\zeta \right), \; \hat n^a = \frac{1}{\sqrt 2} \left( \partial_\tau - \partial_\zeta \right), \; \hat m^a = \frac{1}{\sqrt 2\sin\zeta} \left( \partial_\theta + \frac{i}{\sin\theta} \partial_\varphi \right).
\end{equation}
Therefore, we have
$$\partial_t = (1 + \cos\tau \cos\zeta)\partial_\tau - \sin\tau\sin\zeta\partial_\zeta,$$
$$\partial_r = - \sin\tau \sin\zeta \partial_\tau + (1 + \cos\tau \cos\zeta)\partial_\zeta.$$
The vector field $\partial_t$ is normal to the null hypersurface $\scri^\pm$ and tends to zero at $i^\pm$. We have also the following relations
$$\hat l^a = \frac{1+v^2}{2}l^a, \; \hat n^a = \frac{1+u^2}{2} n^a.$$

In the term of the associated spin-frame we have
$$\hat o^A = \sqrt{\frac{1+v^2}{2}} o^A = \Omega_2^{-1} o^A, \; \hat\iota^A = \sqrt{\frac{1+u^2}{2}}\iota^A = \Omega_1^{-1} \iota^A,$$
where
$$\Omega_1 = \sqrt{\frac{2}{1+u^2}}, \; \Omega_2 = \sqrt{\frac{2}{1+v^2}}.$$
Since $\hat l^a \hat n_a = l^a n_a = 1$, we obtain the relation of the dual $\left\{ o_A,\iota_A\right\}$ and its rescaling $ \left\{\hat o_A,\hat\iota_A \right\}$ is
$$\hat o_A = \Omega_1 o_A, \; \hat\iota_A = \Omega_2 \iota_A.$$

The rescaled scalar curvature is
$$\frac{1}{6}\mathrm{Scal}_{\mathfrak{C}}=1.$$

The volume form associated with the rescaled metric ${\hat g}$ is
$$\mathrm{dVol}^4_{{\hat g}} = \sqrt{|\hat{g}|}\d \tau \d \zeta\d^2\omega = \sin^2\zeta\d \tau \d \zeta\d^2\omega = \d \tau \d\mu_{S^3},$$
where $\d \mu_{S^3} = \sin^2\zeta\d\zeta\d^2\omega$ is the volume form of $3-$sphere $S^3$ with the Euclidean metric 
$$\sigma^2_{S^3} = \d \zeta^2 + \sin^2\zeta \d\omega^2.$$

We can calculate the spin coefficients by using the Ricci rotation coefficients (see \cite{Ni1995}) and get
\begin{gather}
\hat\kappa = \hat\varepsilon = \hat\sigma = \hat\gamma = \hat\lambda = \hat\tau = \hat\nu = \hat\pi = 0 , \label{spin-full-1}\\
\hat\rho = \hat\mu = \frac{\cot \zeta}{\sqrt 2}, \; \hat\alpha = - \hat\beta = -\frac{\cot\theta}{2\sqrt 2\sin\zeta}.\label{spin-full-2}
\end{gather}

\subsection{The partial conformal compactification}
The full conformal compactification spacetime $\hat{\mathbb{M}}$ and Einstein's cylinder $\mathfrak{C}$ are useful to prove the wellposedness of the Cauchy problem. However, since the scalar curvature is non zero on $\mathfrak{C}$, it is not convenient to study the wellposedness of the Goursat problem in $\hat{\mathbb{M}}$. Therefore, we consider the partial conformal compactification of Minkowski spacetime that has zero scalar curvature and is useful to prove the wellposedness the Goursat problem in Section \ref{S6}.

Consider the retard time variable $u = t-r$ and the conformal factor $\widetilde{\Omega} = 1/r = R$ we obtain the following expression for the rescaled metric $\widetilde{g}$:
\begin{equation}
\widetilde{g} = R^2g = R^2 \d u^2 - 2 \d u \d R - \d \omega^2.
\end{equation}
Since the rescaled metric $\widetilde{g}$ can be extended as an analytic metric on the domain $\mathbb{R}_u\times [0,+\infty[_R\times S^2_{\theta,\varphi}$, we can add to Minkowski spacetime the boundary $\mathbb{R}_u \times \left\{R=0\right\}\times S^2_{\theta,\varphi}$. As $r$ goes to $+\infty$, a point on this boundary $(u=u_0,R=0,\theta = \theta_0,\varphi=\varphi_0)$ is reached along an outgoing radial null geodesic
$$\gamma_{u_0,\theta_0,\varphi_0}(r) = (t=r+u_0,r,\theta=\theta_0,\varphi=\varphi_0).$$
This boundary describes the future null infinity $\scri^+$: 
$$\scri^+ = \mathbb R_u \times \left\{ R = 0 \right\} \times S_\omega^2.$$

Similarly, we can use an advanced time variable $v=t+r$ and the conformal factor $\widetilde{\Omega} = R$ to construct the rescaled metric and get the past null infinity $\scri^-$ as a boundary of Minkowski spacetime. We have three following points at infinity
\begin{itemize}
\item[$\bullet$] The future (resp. past) timelike infinity point $i^+$ (res. $i^-$) defined as the limit point of uniformly timelike curves as $t$ tend to $+\infty$ (resp. $-\infty$) is
$$i^\pm = \left\{ (u = \pm\infty, R = 0, \omega); \; \omega \in S^2    \right\}.$$
\item[$\bullet$] The spacelike infinity point $i_0$ defined as the limit point of uniformly spacelike curves as $r$ tend to $+\infty$ is
$$i_0 = \left\{ (u = \mp\infty, R = 0, \omega); \; \omega \in S^2 \right\}.$$
\end{itemize}
The partial conformal compactification can be described by the domain
$$\widetilde{\mathbb M} = \mathbb M \cup \scri^\pm.$$
\begin{remark}
We notice that the null infinity hypersurface $\scri^\pm$ in $\widetilde{\mathbb{M}}$ are the same null infinity hypersurfaces in $\hat{\mathbb{M}}$ due to they are reached by the outgoing (resp. incoming) radial null geodesics. The difference between the two conformal structures is that the points $i^\pm$ and $i_0$ are infinite in $\widetilde{\mathbb{M}}$ and finite in $\hat{\mathbb{M}}$.
\end{remark}

We chose the Newman-Penrose tetrad normalization as 
$$\widetilde{l}^a = -\frac{1}{\sqrt 2}\partial_R, \; \widetilde{n}^a = \sqrt 2 \left(\partial_u + \frac{R^2}{2}\partial_R \right), \; \widetilde{m}^a = \frac{1}{\sqrt 2}  \left( \partial_\theta + \frac{i}{\sin\theta}\partial_\varphi \right).$$
This shows that
$$\widetilde{l}^a = r^2 l^a, \; \widetilde{n}^a = n^a, \; \widetilde{m}^a = rm^a.$$
In the term of the associated spin-frame we have
$$\widetilde{o}^A = ro^A, \; \widetilde{\iota}^A = \iota^A, \; \widetilde{o}_A = o_A, \; \widetilde{\iota}_A = R\iota_A.$$

The rescaled scalar curvature is
$$\mathrm{Scal}_{\widetilde{g}}=0.$$
The volume form associated with the rescaled metric $\widetilde{g}$ is
$$\mathrm{dVol}^4_{\widetilde{g}} = \widetilde{\Omega}^4\mathrm{dVol}^4_{g} = R^2\d t \d r\d^2\omega = - \d t \d R\d^2\omega.$$
The rescaled spin coefficients can be calculated as (see \cite{MaNi2012} for the case $M > 0$):
\begin{equation}\label{spin-part-1}
\widetilde{\kappa} = \widetilde{\varepsilon} = \widetilde{\sigma} = \widetilde{\lambda} = \widetilde{\tau} = \widetilde{\nu} = \widetilde{\pi} = \widetilde{\rho} = \widetilde{\mu} = 0,
\end{equation}
\begin{equation}\label{spin-part-2}
\widetilde{\gamma} = \frac{R}{\sqrt 2}, \; \widetilde{\alpha} = - \widetilde{\beta} = -\frac{\cot\theta}{2\sqrt 2}.
\end{equation}

\section{The spin-$n/2$ zero rest-mass fields}\label{S3}
In this section, we give the detailed expressions of the generalized spin-$n/2$ zero rest-mass fields and equations in the term of spin components on the full and partial conformal compactification spacetimes. These expressions will be used to prove the wellposedness of Cauchy and Goursat problems, establish the pointwise decays and calculate the energy of the fields in sections \ref{S4}, \ref{S5} and \ref{S6}.

\subsection{The original equations}
Since the total symmetry of $\phi_{\underbrace{AB...F}_{n \; indexs}} = \phi_{\underbrace{(AB...F)}_{n \; indexs}}$, we have the formula of the spin-$n/2$ zero rest-mass field as follows
\begin{eqnarray}\label{Ofield}
\phi_{AB...F} &=& \phi_n o_Ao_B...o_F - \phi_{n-1}(\iota_Ao_B...o_F + ... + o_Ao_B...\iota_F)\cr
&& + ... + (-1)^{k}\phi_{n-k}\sum_{k=1}^{n-1} \underbrace{\iota_A\iota_B...\iota_C}_{k \; terms} \underbrace{o_D...o_F}_{n-k \; terms} + ... + (-1)^n \phi_0 \iota_A\iota_B...\iota_F,
\end{eqnarray}
where $\phi_k = \phi_{\underbrace{00...0}_{ n-k \; terms} \underbrace{11..1}_{k \; terms}} \; (0\leq k \leq n)$ is a contraction of $\phi_{AB...F}$ with $(n-k)$ omicrons $o^M$ and $k$ iotas $\iota^N$, where $M,\, N \in \left\{A,\,B...,F \right\}$. 

We recall that a scalar function $\eta$ is said to have weight $\left\{ r',\,r;\, t',\, t \right\}$
if under a rescaling of the spin-frame by nowhere vanishing complex scalar fields $\lambda$ and $\mu$ (see \cite[eq. (4.12.9), pp. 253, Vol. 1]{PeRi}):
$$o^A \mapsto \lambda o^A,\, \iota^A \mapsto \mu \iota^A,$$
it transforms as
$$\eta \mapsto \lambda^{r'}\mu^{r}\bar{\lambda}^{t'}\bar{\mu}^t.$$
If we consider normalized spin-frames (this requires that $\lambda\mu=1$ for preserving the normalisation), then we said that $\eta$ has type $\left\{p=r'-r,\, q= t'-t \right\}$ (see \cite[eq. (4.12.10), pp. 253, Vol. 1]{PeRi}).
Therefore, we can see that the weighted function $\phi_k$ has weight $(n-k,k;0,0)$ or simply has type $(p=n-2k,q=0)$.

Using \eqref{Ofield} we have
\begin{equation}
\phi_{AB...F}\bar{\phi}_{A'B'...F'} = |\phi_{n-k}|^2 \sum_{k=0}^n \underbrace{n_an_b...n_c}_{k \; terms} \underbrace{l_d...l_f}_{n-k \; terms} + A,
\end{equation}
where $A$ is the sum of the components involving $m_a$ or $\bar{m}_a$.

Using the Geroch-Held-Penrose formalism, the spin-$n/2$ massless equation $\nabla^{AA'}\phi_{AB...F} = 0$ has the following expression (see \cite[eq. $(4.12.44)$, pp. $260$, Vol. 1]{PeRi}):
\begin{align}\label{dirac}
\begin{cases}
\thorn \phi_k - \eth'\phi_{k-1} &= -(k-1)\lambda\phi_{k-2} + k\pi\phi_{k-1}+(n-k+1)\rho\phi_k - (n-k)\kappa\phi_{k+1},\\
\thorn'  \phi_{k} - \eth\phi_{k+1} &= (n-k-1)\sigma\phi_{k+2} - (n-k)\tau\phi_{k+1} - (k+1)\mu\phi_{k} + k\nu\phi_{k-1}, 
\end{cases}
\end{align}
where $k=1,2...n$ in the first equation and $k=0,1...n-1$ in the second equation and
\begin{align*}
\thorn \phi_k &= (l^a\partial_a - (n-2k)\varepsilon)\phi_k,\cr
\eth'\phi_{k-1} &= ({\bar m}^a\partial_a - (n-2k+2)\alpha)\phi_{k-1},\cr
\thorn'\phi_{k} &= ({n}^a\partial_a - (n-2k)\gamma)\phi_{k},\cr
\eth\phi_{k+1} &= (m^a\partial_a - (n-2k-2)\beta)\phi_{k+1}. 
\end{align*}

\subsection{The rescaled equations}
Under the conformal transformations of Minkowski metric $\hat{g}= \Omega^2g$ and $\widetilde{g}=\widetilde{\Omega}^2g$ we have  (see \cite[eq. (5.7.20), pp. 366, Vol. 1]{PeRi}):
$$\hnabla^{AA'}\hat{\phi}_{AB...F} = \Omega^{-3}\nabla^{AA'}\phi_{AB...F} ,\, \widetilde{\nabla}^{AA'}\widetilde{\phi}_{AB...F}=\widetilde{\Omega}^{-3}\nabla^{AA'}\phi_{AB...F},$$
where $\hat{\phi}_{AB...F} = \Omega^{-1}\phi_{AB...F}$ and $\widetilde{\phi}_{AB...F}=\widetilde{\Omega}^{-1}\phi_{AB...F}$. These equalities show that the equation $\nabla^{AA'}\phi_{AB...F} = 0$ is conformal invariant. In particular, if $\phi_{AB...F}$ satisfies the equation $\nabla^{AA'}\phi_{AB...F} = 0$, then $\hat{\phi}_{AB...F}$ and $\widetilde{\phi}_{AB...F}$ satisfy the following rescaled equations 
$$\hnabla^{AA'}\hat{\phi}_{AB...F} = 0,\, \widetilde{\nabla}^{AA'}\widetilde{\phi}_{AB...F}=0$$
respectively. Therefore, we can apply expression \eqref{dirac} to establish the rescaled equation in the full and partial conformal compactification spacetimes.

In $\hat{\mathbb{M}}$ the rescaled spin-frame $\left\{ \hat o_A,\hat \iota_A  \right\}$ is given by
$$\hat o_A = \Omega_1 o_A, \; \hat\iota_A = \Omega_2 \iota_A.$$
Therefore, we have
\begin{eqnarray*}
\hat\phi_{AB...F} &=& \Omega^{-1}\phi_{AB...F} = \Omega_1^{-1}\Omega_2^{-1} \phi_{AB...F} \cr
&=& \Omega_1^{-1}\Omega_2^{-1}\phi_n o_Ao_B...o_F - \Omega_1^{-1}\Omega_2^{-1}\phi_{n-1}(\iota_Ao_B...o_F + ... + o_Ao_B...\iota_F)\cr
&& + ... + \Omega_1^{-1}\Omega_2^{-1}(-1)^{k}\phi_{n-k}\sum_{k=1}^{n-1} \underbrace{\iota_A\iota_B...\iota_C}_{k \; terms}\underbrace{o_D...o_F}_{n-k \; terms} \cr
&& + ... + \Omega_1^{-1}\Omega_2^{-1} (-1)^n \phi_0 \iota_A\iota_B...\iota_F \cr
&=& \Omega_1^{-1-n}\Omega_2^{-1}\phi_n \hat o_A\hat o_B...\hat o_F - \Omega_1^{-n}\Omega_2^{-2}\phi_{n-1}(\hat\iota_A \hat o_B...\hat o_F + ... + \hat o_A\hat o_B...\hat \iota_F)\cr
&& + ... + \Omega_1^{-1-k}\Omega_2^{-1-(n-k)}(-1)^{k}\phi_{n-k}\sum_k \underbrace{\hat\iota_A\hat\iota_B...\hat\iota_C}_{k \; terms}\underbrace{\hat o_D...\hat o_F}_{n-k \; terms} \cr
&& + ... + \Omega_1^{-1}\Omega_2^{-1-n} (-1)^n \phi_0 \hat\iota_A\hat\iota_B...\hat\iota_F.
\end{eqnarray*}
This leads to
\begin{equation}
\hat\phi_{n-k} = \Omega_1^{-1-k}\Omega_2^{-1-(n-k)}(-1)^{k} \phi_{n-k} \; , \; 0 \leq k \leq n.
\end{equation}

Plugging the spin coefficients \eqref{spin-full-1} and \eqref{spin-full-2} into the expression \eqref{dirac}, we get the rescaled equation $\hnabla^{AA'} \hat{\phi}_{AB...F} = 0$ in $\hat{{\mathbb M}}$ as follows
\begin{align}\label{dirac-full}
\begin{cases}
\frac{1}{\sqrt 2}(\partial_\tau + \partial_\zeta) {\hat\phi}_k - \frac{1}{\sqrt 2\sin\zeta}\left( \partial_\theta - \frac{i}{\sin\theta}\partial_\varphi + (n-2k+2)\frac{\cot\theta}{2}  \right) {\hat\phi}_{k-1} &= (n-k+1)\frac{\cos\zeta}{2} {\hat\phi}_k,\cr
\frac{1}{\sqrt 2}(\partial_\tau - \partial_\zeta) {\hat\phi}_k - \frac{1}{\sqrt 2\sin\zeta}\left( \partial_\theta + \frac{i}{\sin\theta}\partial_\varphi  - (n-2k-2)\frac{\cot\theta}{2}  \right) {\hat\phi}_{k+1} &= -(k+1)\frac{\cos\zeta}{2} {\hat\phi}_k,
\end{cases}
\end{align}
where $k=1,2...n$ in the first equation and $k=0,1...n-1$ in the second equation.

On the other hand, in $\widetilde{\mathbb{M}}$ we have $\widetilde{\Omega} = 1/r$ and the rescaled spin-frame $\left\{ \widetilde{o}_A,\, \widetilde{\iota}_A  \right\}$ is given by
$$\widetilde{o}_A = o_A \; , \; \widetilde{\iota}_A = \widetilde{\Omega}\iota_A = R\iota_A.$$
Therefore, we have
\begin{eqnarray}
\widetilde{\phi}_{AB...F} &=& \widetilde{\Omega}^{-1} \phi_{AB...F} \cr
&=& \widetilde{\Omega}^{-1} \phi_n o_Ao_B...o_F - \widetilde{\Omega}^{-1} \phi_{n-1}(\iota_Ao_B...o_F + ... + o_Ao_B...\iota_F)\cr
&& + ... + \widetilde{\Omega}^{-1}(-1)^{k}\phi_{n-k}\sum_k \underbrace{\iota_A\iota_B...\iota_C}_{k \; terms} \underbrace{o_D...o_F}_{n-k \; terms} \cr
&& + ... + \widetilde{\Omega}^{-1} (-1)^n \phi_0 \iota_A\iota_B...\iota_F \cr
&=& r \phi_n \widetilde{o}_A \widetilde{o}_B...\widetilde{o}_F - r^2  \phi_{n-1}(\widetilde{\iota}_A\widetilde{o}_B...\widetilde{ o}_F + ... + \widetilde{o}_A\widetilde{o}_B...\widetilde{\iota}_F)\cr
&& + ... + r^{1+k}(-1)^{k}\phi_{n-k}\sum_k \underbrace{\widetilde{\iota}_A\widetilde{\iota}_B...\widetilde{\iota}_C}_{k \; terms}\underbrace{\widetilde{o}_D...\widetilde{o}_F}_{n-k \; terms} \cr
&& + ... + r^{1+n} (-1)^n \phi_0 \widetilde{\iota}_A\widetilde{\iota}_B...\widetilde{\iota}_F.
\end{eqnarray}
This leads to
\begin{equation}\label{eq-part}
\widetilde{\phi}_{n-k} = r^{1+k} \phi_{n-k}, \; 0 \leq k \leq n.
\end{equation}

Plugging the spin coefficients \eqref{spin-part-1} and \eqref{spin-part-2} into the expression \eqref{dirac}, we get the rescaled equation $\widetilde{\nabla}^{AA'}\widetilde{\phi}_{AB...F} = 0$ in $\widetilde{\mathbb M}$ as follows
\begin{align}\label{dirac-part}
\begin{cases}
-\frac{1}{\sqrt 2}\partial_R \widetilde{\phi}_k - \frac{1}{\sqrt 2}\left( \partial_\theta - \frac{i}{\sin\theta}\partial_\varphi + (n-2k+2)\frac{\cot\theta}{2}  \right) \widetilde{\phi}_{k-1} &= 0,\cr
\left(\sqrt 2 \partial_u + \frac{R^2}{\sqrt 2}\partial_R - (n-2k)\frac{R}{\sqrt 2} \right) \widetilde{\phi}_k - \frac{1}{\sqrt 2}\left( \partial_\theta + \frac{i}{\sin\theta}\partial_\varphi  - (n-2k-2)\frac{\cot\theta}{2}  \right) \widetilde{\phi}_{k+1} &= 0, 
\end{cases}
\end{align}
where $k=1,2...n$ in the first equation and $k=0,1...(n-1)$ in the second equation.

\section{The Cauchy problem and decays of the fields}\label{S4}
In this section we prove the wellposedness of the Cauchy problem for the rescaled equation $\hnabla^{AA'}\hat{\phi}_{AB...F} = 0$ in the whole Einstein's cylinder $\mathfrak{C} = \mathbb{R}\times S^3_{\zeta,\theta,\varphi}$. As consequences, we obtain the wellposedness of the rescaled equations $\hnabla^{AA'}\hat{\phi}_{AB...F} = 0$ and $\widetilde{\nabla}^{AA'}\widetilde{\phi}_{AB...F}=0$ in $\hat{\mathbb M}$ and $\widetilde{\mathbb M}$ respectively. We study also the pointwise decay, i.e, the decays in time of the components of the rescaled solution $\widetilde{\phi}_{AB...F}$ that is useful to prove the energy equality in $\widetilde{\mathbb{M}}$ in Section \ref{S5}.

\subsection{Wellposedness of Cauchy problem}
The Cauchy problem of the rescaled massless equation with the initial data on $S^3 = \left\{ \tau = 0 \right \}$ in $\mathfrak{C}$ reads
\begin{align}\label{Cauchy}
\begin{cases}
{\hnabla}^{AA'} {\hat\phi}_{AB...F} &= 0,\cr
{\hat\phi}_{AB...F}|_{S^3} &= {\hat \psi}_{AB...F} \in \mathcal{C}^\infty(S^3,\mathbb{S}_{(AB...F)}) \cap \mathcal{D},
\end{cases}
\end{align}
where $\mathcal{D}$ is the constraint space on $S^3 = \left\{ \tau = 0 \right \}$:
$$ \mathcal{D} = \left\{ \hat{\psi}_{AB...F} \in L^2(S^3,\mathbb{S}_{(AB...F)}): \,  D^{AB} {\hat\psi}_{ABC...F} = 0  \right\},$$
where $D^{AB}$ is the intrinsic space spinor derivative on $S^3 = \left\{ \tau = 0 \right \}$ (see Appendix \ref{app_3_existe_constraint}). The spin-$1/2$ case, i.e, Dirac equation $\hnabla^{AA'}\hat{\phi}_A=0$ does not have constraints.

We notice that $D^{AB}\left( {\hat\phi}_{ABC...F}|_{\tau = constant}\right)$ can be also understood as the projection of ${\hnabla}^{AA'} {\hat\phi}_{AB...F} = 0$ on the future-oriented timelike vector $\mathcal{T}^a = \sqrt{2}\partial_\tau$ (see also \cite[Section 2.1]{ABJ}, \cite[Section 2.2 and Remark 2.3]{MaNi2004}), i.e,
$$D^{AB} \left( {\hat\phi}_{ABC...F}|_{\tau = constant}\right) = \left(\mathcal{T}^{AA'} \hnabla_{A'}^B {\hat\phi}_{BAC...F}\right)|_{\tau=constant}.$$
Since the wellposedness of the spin-$1/2$ equation was established in \cite{Ni2002}, we consider the wellposedness of spin-$n/2$ equations with $n\geq 2$. 
In these cases the existence of non-trivial solutions of the constraint system is given in Appendix \ref{app_3_existe_constraint}.

We state and prove the wellposedness of the generalized spin-$n/2$ zero rest-mass equation $(n\geq 2)$ in the following theorem:
\begin{theorem}\label{cauchyproblem}(Cauchy problem)
The Cauchy problem for the rescaled massless
 equation \eqref{Cauchy} in $\mathfrak{C}$ is well-posed, i.e, for any ${\hat\psi}_{AB...F} \in \mathcal{C}^\infty(S^3, \mathbb{S}_{(AB...F)}) \cap \mathcal{D}$ there exists a unique ${\hat\phi}_{AB...F}$ solution of ${\hnabla}^{AA'}{\hat\phi}_{AB...F} = 0$ such that   
$${\hat\phi}_{AB...F} \in {\mathcal C}^\infty(\mathfrak{C}, \mathbb{S}_{(AB...F)}) \; ; \; {\hat\phi}_{AB...F}|_{\tau=0} = {\hat\psi}_{AB...F}.$$
\end{theorem}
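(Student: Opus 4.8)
The plan is to recast the rescaled system \eqref{dirac-full} as a first-order linear symmetric hyperbolic system for the $(n+1)$-component spinor unknown $\Phi = {}^{t}(\hat\phi_0,\hat\phi_1,\dots,\hat\phi_n)$ on the globally hyperbolic manifold $\mathfrak{C} = \mathbb{R}_\tau \times S^3$, whose slices $\{\tau = \mathrm{const}\} = S^3$ are compact spacelike Cauchy hypersurfaces for $\hat g$, and then to invoke Leray's theorem on existence and uniqueness of solutions of hyperbolic systems on a globally hyperbolic manifold. A preliminary point I would stress is that, although the coefficients $\cot\zeta$, $1/\sin\zeta$, $\cot\theta$, $1/\sin\theta$ appearing in \eqref{dirac-full} are singular at the poles $\zeta\in\{0,\pi\}$ and $\theta\in\{0,\pi\}$, these are merely artefacts of the spherical and spin-frame coordinates: the operator $\hnabla^{AA'}$ is a genuine smooth differential operator acting on smooth sections of the bundle $\mathbb{S}_{(AB...F)}$ over the smooth compact manifold $S^3$ (the rescaled metric $\hat g$ extends analytically to all of $\mathfrak{C}$, with $\frac{1}{6}\mathrm{Scal}_{\mathfrak{C}}=1$). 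I would therefore run the whole argument invariantly, the coordinate expression \eqref{dirac-full} serving only to exhibit the principal symbol.

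First I would write the system in matrix form
\[
\mathbf{A}^0\,\partial_\tau\Phi + \mathbf{A}^1\,\partial_\zeta\Phi + \mathbf{A}^2\,\partial_\theta\Phi + \mathbf{A}^3\,\partial_\varphi\Phi + \mathbf{B}\,\Phi = 0,
\]
collecting from \eqref{dirac-full} the $\partial_\tau$-terms into $\mathbf{A}^0$, the $\partial_\zeta$-terms into $\mathbf{A}^1$, the angular derivatives into $\mathbf{A}^2,\mathbf{A}^3$ and the zeroth-order ($\cot\zeta$, $\cot\theta$) terms into $\mathbf{B}$. The decisive step is to check that this system is symmetric hyperbolic with respect to the timelike direction $\mathcal{T}^a=\partial_\tau$, i.e. that each $\mathbf{A}^\mu$ is Hermitian and that the principal symbol $\mathbf{A}^\mu\xi_\mu$ is positive definite for $\xi = \d\tau$. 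This is the spinorial manifestation of the hyperbolicity of Dirac- and Weyl-type operators: the relevant positive-definite Hermitian form is exactly the energy density of the field, obtained by contracting the Hermitian spinor $\phi_{AB...F}\bar\phi_{A'B'...F'}$ with $n$ copies of the future-timelike $\mathcal{T}^{AA'}$, which for timelike $\mathcal{T}$ reduces to a strictly positive combination $\sum_k \binom{n}{k}|\hat\phi_k|^2$ of the moduli of the components. I would verify Hermiticity of the $\mathbf{A}^\mu$ and positivity of $\mathbf{A}^0$ directly from the Infeld--van der Waerden symbols, reading off the coefficients from \eqref{dirac-full}.

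Because $\hnabla^{AA'}\hat\phi_{AB...F}=0$ is formally overdetermined for $n\geq 2$ — it is a spinor with $n-1$ symmetric unprimed and one primed index, i.e. $2n$ scalar equations for the $n+1$ unknowns — I would split it, relative to the slice $\{\tau=0\}$, into an evolution part (the square symmetric hyperbolic subsystem above) and a constraint part, the latter being precisely the condition cutting out $\mathcal{D}$, namely $(\mathcal{T}^a\hnabla_{A'}^Z\hat\phi_{ZAC...F})|_{\tau=0}=0$. After solving the evolution system by Leray's theorem for data $\hat\psi_{AB...F}\in\mathcal{C}^\infty(S^3,\mathbb{S}_{(AB...F)})\cap\mathcal{D}$, I would recover the full equation by propagation of constraints: the constraint quantity satisfies a homogeneous first-order symmetric hyperbolic equation (derived by commuting derivatives and using the evolution equations) with vanishing Cauchy data on $S^3$, hence vanishes identically. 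Here it is essential that $\mathfrak{C}$ is conformally flat, so that the Buchdahl integrability obstruction — proportional to the Weyl curvature spinor contracted into the field, which for $n\geq 3$ forces triviality in a general curved background as recorded in the Remark of the Introduction — is absent; the nontriviality of $\mathcal{D}$ itself is supplied by Appendix \ref{app_3_existe_constraint}.

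Finally, Leray's theorem delivers a unique $\hat\phi_{AB...F}\in\mathcal{C}^\infty(\mathfrak{C},\mathbb{S}_{(AB...F)})$ solving the evolution system globally on the cylinder with $\hat\phi_{AB...F}|_{\tau=0}=\hat\psi_{AB...F}$; combined with the constraint propagation this section solves $\hnabla^{AA'}\hat\phi_{AB...F}=0$ on all of $\mathfrak{C}$, and uniqueness is inherited from the uniqueness statement for the symmetric hyperbolic system. I expect the main obstacle to be the intrinsic verification of the symmetric hyperbolic structure — establishing Hermiticity and the positivity of $\mathbf{A}^0$ while keeping track of the binomial weights that make the energy form definite — together with the constraint-propagation argument, both of which must be phrased on the smooth bundle over $\mathfrak{C}$ so as not to be derailed by the spurious coordinate singularities at the poles.
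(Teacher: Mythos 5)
Your proposal follows essentially the same route as the paper's proof: both split the overdetermined system into a square $(n+1)\times(n+1)$ first-order symmetric hyperbolic evolution system (Hermitian coefficient matrices, positive definite matrix multiplying $\partial_\tau$, coordinate singularities at the poles dismissed as artefacts), solve it globally on $\mathfrak{C}$ by Leray's theorem, and then propagate the constraint using the spinor-commutator identity whose curvature term vanishes because the Weyl spinor is conformally invariant and zero on Minkowski space. The only notable difference is that the paper exhibits the square evolution subsystem explicitly --- keeping the two extreme equations of \eqref{dirac-full} and adding the $(n-1)$ intermediate pairs so that each intermediate component appears through $\sqrt{2}\,\partial_\tau\hat\phi_k$ --- a selection your write-up leaves implicit when it refers to ``the square symmetric hyperbolic subsystem above.''
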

\begin{proof}
First, we show that the equation ${\hnabla}^{AA'} {\hat\phi}_{AB...F} =0$ can split into the constraint equation
\begin{equation}\label{con}
\left(\mathcal{T}^{AA'} {\hnabla}_{A'}^B \hat{\phi}_{ABC...F} \right)|_{\tau=constant} =0.
\end{equation}
for all $\tau$ and a symmetric hyperbolic evolution system. 

Indeed, the equation ${\hnabla}^{AA'} {\hat\phi}_{AB...F} =0$ can be expressed as a set of $2n$ scalar equations on the spin components of $\hat{\phi}_{AB...F}$ (see equation \eqref{dirac-full}):
\begin{align}\label{D}
\begin{cases}
\frac{1}{\sqrt 2}(\partial_\tau + \partial_\zeta) {\hat\phi}_k - \frac{1}{\sqrt 2\sin\zeta}\left( \partial_\theta - \frac{i}{\sin\theta}\partial_\varphi + (n-2k+2)\frac{\cot\theta}{2}  \right) {\hat\phi}_{k-1} &= (n-k+1)\frac{\cos\zeta}{2} {\hat\phi}_k \cr
&\mbox{with} \; 1 \leq k \leq n,\cr
\frac{1}{\sqrt 2}(\partial_\tau - \partial_\zeta) {\hat\phi}_k - \frac{1}{\sqrt 2\sin\zeta}\left( \partial_\theta + \frac{i}{\sin\theta}\partial_\varphi  - (n-2k-2)\frac{\cot\theta}{2}  \right) {\hat\phi}_{k+1} &= -(k+1)\frac{\cos\zeta}{2} {\hat\phi}_k \cr
&\mbox{with} \; 0 \leq k \leq n-1.
\end{cases}
\end{align}

The constrain system in the term of spin components is obtained by taking the differences of $(n-1)$ couples of the equations of the above system corresponding to $k=1,2...n-1$ respectively. Therefore, we have the constraint system which consists $(n-1)$ equations as follows
\begin{gather}\label{CONE}
{\sqrt 2}\partial_\zeta {\hat\phi}_k - \frac{1}{\sqrt 2\sin\zeta}\left( \partial_\theta - \frac{i}{\sin\theta}\partial_\varphi + (n-2k+2)\frac{\cot\theta}{2}  \right) {\hat\phi}_{k-1} \cr 
+ \frac{1}{\sqrt 2\sin\zeta}\left( \partial_\theta + \frac{i}{\sin\theta}\partial_\varphi  - (n-2k-2)\frac{\cot\theta}{2}  \right) {\hat\phi}_{k+1} = (n+2)\frac{\cos\zeta}{2} {\hat\phi}_k,
\end{gather}
for $k=1,2...n-1$. In the term of spinor derivatives this system corresponds to \eqref{con} (see Appendix \ref{app_3_existe_constraint}).

In order to obtain the evolution system we keep the first and the last equations of \eqref{D} that correspond to $k=n$ and $k=0$ respectively. Then we obtain $(n-1)$ equations which are the sums of two equations in $(n-1)$ couples of the equations of \eqref{D} corresponding to $k=1,2...n-1$ respectively. Therefore, we get the evolution system which consists $(n+1)$ equations as follows
\begin{align}\label{evolution}
\begin{cases}
\frac{1}{\sqrt 2}(\partial_\tau + \partial_\zeta) {\hat\phi}_n - \frac{1}{\sqrt 2\sin\zeta}\left( \partial_\theta - \frac{i}{\sin\theta}\partial_\varphi - (n-2)\frac{\cot\theta}{2}  \right) {\hat\phi}_{n-1} &= \frac{\cos\zeta}{2} {\hat\phi}_n,\cr
{\sqrt 2}\partial_\tau {\hat\phi}_k - \frac{1}{\sqrt 2\sin\zeta}\left( \partial_\theta - \frac{i}{\sin\theta}\partial_\varphi + (n-2k+2)\frac{\cot\theta}{2}  \right) {\hat\phi}_{k-1}  \cr
- \frac{1}{\sqrt 2\sin\zeta}\left( \partial_\theta + \frac{i}{\sin\theta}\partial_\varphi  - (n-2k-2)\frac{\cot\theta}{2}  \right) {\hat\phi}_{k+1} &= (n-2k)\frac{\cos\zeta}{2} {\hat\phi}_k \cr
\frac{1}{\sqrt 2}(\partial_\tau - \partial_\zeta) {\hat\phi}_0 - \frac{1}{\sqrt 2\sin\zeta}\left( \partial_\theta + \frac{i}{\sin\theta}\partial_\varphi  - (n-2)\frac{\cot\theta}{2}  \right) {\hat\phi}_{1} &= -\frac{\cos\zeta}{2} {\hat\phi}_0,
\end{cases}
\end{align} 
where $k=1,2...n-1$.

We rewrite the evolution system above under the matrix form by putting
$$\Phi = \left( \begin{matrix}\hat{\phi}_n\\ \hat{\phi}_{n-1}\\...\\ \hat{\phi}_0 \end{matrix} \right).$$
The effect of the derivative operator on $\Phi$, can be understood as the effect on each components of $\Phi$, for instance
$$\partial_\tau\Phi = \left( \begin{matrix}\partial_\tau\hat{\phi}_n\\ \partial_\tau\hat{\phi}_{n-1}\\...\\ \partial_\tau\hat{\phi}_0 \end{matrix} \right).$$
The matrix coefficient with $\partial_\tau$ and $\partial_\zeta$ are $(n+1)\times (n+1)-$matrix diagrams
$$A= \left(\begin{matrix}
\frac{1}{\sqrt 2}&&0&&...&&0&&0\\
0&&\sqrt 2&&...&&0&&0\\
...&&...&&...&&...&&...\\
0&&0&&...&&\sqrt 2&&0\\
0&&0&&...&&0&&\frac{1}{\sqrt 2}
\end{matrix} \right) \; , \; B= \left(\begin{matrix}
 \frac{1}{\sqrt 2}&&0&&...&&0&&0\\
0&&0&&...&&0&&0\\
...&&...&&...&&...&&...\\
0&&0&&...&&0&&0\\
0&&0&&...&&0&&-\frac{1}{\sqrt 2}
\end{matrix} \right)$$
respectively. The matrix coefficients associated with $\partial_\theta$ and $\partial_\varphi$ are
$$C= \left(\begin{matrix}
 0&&-\frac{1}{\sqrt 2\sin\zeta}&&...&&0&&0\\
-\frac{1}{\sqrt 2\sin\zeta}&&0&&...&&0&&0\\
...&&...&&...&&...&&...\\
0&&0&&...&&0&&-\frac{1}{\sqrt 2\sin\zeta}\\
0&&0&&...&&-\frac{1}{\sqrt 2\sin\zeta}&&0
\end{matrix} \right)$$
and
$$D= \left(\begin{matrix}
0&&\frac{i}{\sqrt 2\sin\zeta\sin\theta}&&...&&0&&0\\
-\frac{i}{\sqrt 2\sin\zeta\sin\theta}&&0&&...&&0&&0\\
...&&...&&...&&...&&...\\
0&&0&&...&&0&&\frac{i}{\sqrt 2\sin\zeta\sin\theta}\\
0&&0&&...&&-\frac{i}{\sqrt 2\sin\zeta\sin\theta}&&0
\end{matrix} \right)$$
respectively. 

Therefore, we obtain the matrix form of the evolution system as
$$A\partial_\tau\Phi + B\partial_\zeta\Phi + C\partial_\theta\Phi + D\partial_\varphi\Phi + H\Phi = 0$$
which is equivalent to
\begin{equation}\label{matrix_evolution}
\partial_\tau\Phi + A^{-1}B\partial_\zeta\Phi + A^{-1}C\partial_\theta\Phi + A^{-1}D\partial_\varphi\Phi + A^{-1}H\Phi = 0,
\end{equation}
where the matrix coefficients $A,B,C,D$ are given as above and $H$ is the matrix of zero order terms. The coefficients of $C$ and $D$ are singular at $\xi = 0,\pi$ and $\theta = 0,\pi$. These are coordinates singularities due to the choice of spherical coordinates on $S^3$. The spherical symmetry entails that these are not authentic singularities. Since $A,B,C,D$ are Hermitian and $A$ is diagonal we can easy check that $A^{-1}B, A^{-1}C, A^{-1}D$ are also Hermitian. Therefore, the evolution system \eqref{evolution} is a symmetric hyperbolic system. By using Leray's theorem (see \cite{Le1953}), for the initial data $\hat\psi_{AB...F} \in \mathcal{C}^\infty(S^3,\mathbb{S}_{(AB...F)})$ the evolution system \eqref{evolution} has a unique solution $\hat\phi_{AB...F} \in \mathcal{C}^\infty(\mathfrak{C},\mathbb{S}_{(AB...F)}) $. 

We will show that the solution of evolution system \eqref{evolution} is the solution of the original system \eqref{Cauchy} by checking that the constraints system is conserved under the evolution solution, i.e, $\left(\mathcal{T}^{AA'} {\hnabla}_{A'}^B \hat{\phi}_{ABC...F} \right)|_{\tau=constant} = 0$ for all $\tau$. Indeed, we put
$$ \hnabla_{A'}^B {\hat\phi}_{ABC...F} = {\hat\Xi}_{AA'C...F}.$$
The constraint of $\hnabla^{AA'}{\hat\phi}_{AB...F}$ on $\left\{ \tau = \mbox{constant} \right\}$ is the projection of $\hnabla^{AA'}{\hat\phi}_{AB...F}$ on $\mathcal{T}^a$:
$$\Gamma_{C...F} = \mathcal{T}^a{\hat\Xi}_{aC...F}, \hbox{   where  } a =AA'.$$
Since $\hat\phi_{AB...F}$ is a solution of the evolution system, we have
\begin{eqnarray*}
0= \frac{1}{2}\mathcal{T}^B_{A'} \hnabla_T \hat{\phi}_{ABC...F} &=& \hnabla_{A'}^B {\hat\phi}_{ABC...F} - (\mathcal{T}^{AA'}\hnabla^B_{A'}\hat{\phi}_{ABC...F})\mathcal{T}_{AA'}\cr
&=& {\hat\Xi}_{AA'C...F} - (\mathcal{T}^a{\hat\Xi}_{AA'C...F})\mathcal{T}_a.
\end{eqnarray*}
This leads to
$$\hnabla^{AA'}{\hat\Xi}_{AA'C...F} = \hnabla^{AA'} \left( (\mathcal{T}^a{\hat\Xi}_{AA'C...F})\mathcal{T}_a \right).$$
We have (see \cite[eq. (5.8.1), pp. 366, Vol. 1]{PeRi} or see Equation \eqref{app_3_wave2} in Appendix \ref{app_3_Commutator})
\begin{eqnarray*}
\hnabla^{AA'}{\hat\Xi}_{AA'C...F} &=& \hnabla^{AA'}\hnabla_{A'}^B {\hat\phi}_{ABC...F} = \hnabla^{A'(A}\hnabla_{A'}^{B)} {\hat\phi}_{ABC...F} \cr
&=& -(n-1)\hat\phi_{ABM(C...K}\hat\Psi_{F)}{^{ABM}} = 0,
\end{eqnarray*}
where $\hat{\Psi}_{ABCD}$ is the Weyl conformal spinor of Einstein's metric \eqref{Einstein_metric}, it will be disappeared due to $\hat{\Psi}_{ABCD} = \Psi_{ABCD}$ ($\Psi_{ABCD}$ is invariant under the conformal operator) and $\Psi_{ABCD} = 0$ in Minkowski spacetime. Therefore, we have
\begin{eqnarray}\label{constraints}
0 &=& \hnabla^{AA'} \left( (\mathcal{T}^a{\hat\Xi}_{aC...F})\mathcal{T}_a \right) \cr
& =& \mathcal{T}_a \hnabla^a(\mathcal{T}^a{\hat\Xi}_{aC...F}) + (\mathcal{T}^a{\hat\Xi}_{aC...F}) \hnabla^a \mathcal{T}_a \cr
& =& \partial_\tau (\mathcal{T}^a{\hat\Xi}_{aC...F}) + (\mathcal{T}^a{\hat\Xi}_{aC...F})\hnabla^a \mathcal{T}_a \cr
& =& \frac{1}{\sqrt 2} (\hat D + \hat {D}')(\mathcal{T}^a{\hat\Xi}_{aC...F}) + (\mathcal{T}^a{\hat\Xi}_{aC...F}) \hnabla^a \mathcal{T}_a\cr
& =& \frac{1}{\sqrt 2} (\hat D + \hat {D}')\Gamma_{C...F} + (\mathcal{T}^a{\hat\Xi}_{aC...F}) \hnabla^a \mathcal{T}_a.
\end{eqnarray}
We have (see \cite[eq. (4.5.26), pp. 227, Vol. 1]{PeRi}): 
$$\hat D\hat{o}_A = \hat\varepsilon \hat{o}_A - \hat\kappa \hat{\iota}_A = 0.$$
By the same way
$$\hat D\hat{\iota}_A = -\hat{\varepsilon}\hat{\iota}_A + \hat{\pi}\hat{o}_A = 0,$$
$$\hat{D}' \hat{o}_A = \hat{\gamma}\hat{o}_A - \hat{\tau}\hat{\iota}_A = 0,$$
$$\hat{D}' \hat{\iota}_A = -\hat{\gamma}\hat{o}_A + \hat{\nu}\hat{\iota}_A = 0.$$
Therefore, $\hat{D} + \hat{D}'$ acts only on the weighted scalar coefficients of the spinor field $(\mathcal{T}^a{\hat\Xi}_{aC...F})$. Using the fact that  $v=\hnabla^a \mathcal{T}_a = 0$ (due to $\partial_\tau$ is a Killing vector field), by projecting the equation \eqref{constraints} on the spin-frame $\left\{ \hat{o}_A,\hat{\iota}_A \right\}$, we get its scalar form as follows
$$\frac{1}{\sqrt 2} \partial_\tau \hat\Gamma = 0$$
where $\hat\Gamma$ is the matrix components of $\mathcal{T}^a{\hat\Xi}_{aC...F}$. This equation has a unique solution $\hat\Gamma=constant$. Since the initial condition $\hat\Gamma|_{\tau = 0} = \left(\mathcal{T}^{AA'} {\hnabla}_{A'}^B \hat{\phi}_{BA...F} \right)|_{\tau=0} = 0$, the solution $\hat\Gamma$ is equal to zero, i.e,
$$\left( \mathcal{T}^{AA'}\hnabla_{A'}^B {\hat\phi}_{ABC...F} \right)|_{\tau=constant} = 0 \,\, \mbox{for all} \, \tau.$$
This shows that the constraints system is conserved and our proof is completed.
\end{proof}
Immediately the Cauchy problem of the spin-$n/2$ massless equation is well-posed in the full conformal compactification spacetime $\hat{ \mathbb M}$: 
\begin{cor}
The solution of the system \eqref{Cauchy} in the full conformal compactification $\hat{\mathbb{M}}$ is the constraint of the solution of the system \eqref{Cauchy} in $\mathfrak{C}$ on $\hat{\mathbb{M}}$.
\end{cor}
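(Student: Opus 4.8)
The plan is to read the corollary as a pure restriction statement. Theorem \ref{cauchyproblem} already furnishes a global smooth solution on the entire Einstein cylinder $\mathfrak{C}$, and since $\hat{\mathbb{M}}$ sits inside $\mathfrak{C}$ as a subdomain sharing the same initial slice $S^3 = \{\tau = 0\}$, the solution on $\hat{\mathbb{M}}$ ought to be nothing but the trace of the global one. Accordingly the argument splits into a (light) existence half and a (light) uniqueness half.

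First I would establish existence by restriction. Given $\hat\psi_{AB...F} \in \mathcal{C}^\infty(S^3, \mathbb{S}_{(AB...F)}) \cap \mathcal{D}$, Theorem \ref{cauchyproblem} produces a unique $\hat\phi_{AB...F} \in \mathcal{C}^\infty(\mathfrak{C}, \mathbb{S}_{(AB...F)})$ solving $\hnabla^{AA'}\hat\phi_{AB...F}=0$ with $\hat\phi_{AB...F}|_{\tau=0} = \hat\psi_{AB...F}$. Since $\hat{\mathbb{M}} = \{|\tau|+\zeta\le\pi,\ \zeta\ge 0,\ \omega\in S^2\} \subset \mathfrak{C}$ and contains the full slice $S^3$ (at $\tau=0$ the defining inequality reduces to $\zeta\le\pi$, which holds on all of $S^3$, including $i_0$ at $\zeta=\pi$), the restriction $\hat\phi_{AB...F}|_{\hat{\mathbb{M}}}$ is smooth, satisfies the same equation because the massless operator is local and $\hat{\mathbb{M}}$ is a subdomain, and carries the prescribed data on $\Sigma_0$. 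Thus it solves \eqref{Cauchy} in $\hat{\mathbb{M}}$.

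Next I would close the uniqueness half by a domain-of-dependence argument. The evolution system was shown in the proof of Theorem \ref{cauchyproblem} to be symmetric hyperbolic (equation \eqref{matrix_evolution}), hence it has finite propagation speed, and the cylinder $\mathfrak{C} = \mathbb{R}_\tau \times S^3$ is globally hyperbolic with the slices $\{\tau = \mathrm{constant}\}$ as Cauchy surfaces. Consequently any solution on $\hat{\mathbb{M}}$ with data $\hat\psi_{AB...F}$ on $S^3$ is uniquely determined on the domain of dependence of $S^3$; since $\hat{\mathbb{M}} \subset \mathfrak{C}$ and $S^3$ is a Cauchy surface of $\mathfrak{C}$, that domain covers $\hat{\mathbb{M}}$, forcing the solution to agree with $\hat\phi_{AB...F}|_{\hat{\mathbb{M}}}$ there. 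The constraint propagates exactly as in the theorem, so the restricted field again solves the full system \eqref{Cauchy}, not merely the evolution part.

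The only point requiring a moment of care — rather than a genuine obstacle — is the behaviour at the conformal boundary $\scri^\pm$ and the points $i^\pm, i_0$: one notes that $\hat g$ extends analytically to all of $\mathfrak{C}$ and that these loci are, as recorded in Section \ref{S2}, smooth null hypersurfaces (resp. smooth points) for $\hat g$. Hence the global solution of Theorem \ref{cauchyproblem} is automatically regular across them, and its restriction inherits $\mathcal{C}^\infty$ regularity up to and including the boundary of $\hat{\mathbb{M}}$. No separate boundary analysis is needed, which is precisely why this statement is a corollary rather than a theorem.
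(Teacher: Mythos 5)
Your proposal is correct and takes essentially the same route as the paper: the paper states this corollary without any separate proof, treating it as an immediate consequence of Theorem \ref{cauchyproblem}, with the solution on $\hat{\mathbb{M}}$ being precisely the restriction of the global solution on the Einstein cylinder $\mathfrak{C}$. Your additional justifications (locality of the massless operator, uniqueness via finite propagation speed since $\hat{\mathbb{M}}$ lies in the domain of dependence of $S^3$, and smoothness across $\scri^\pm$, $i^\pm$ and $i_0$ because these are interior objects of $\mathfrak{C}$) simply make explicit what the paper leaves implicit.
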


Considering the Cauchy problem in the partial conformal compactification spacetime $\widetilde{\mathbb {M}}$. Since $i_0$ is still at infinity, we need to suppose that the support of the initial data is compact.
\begin{cor}
The Cauchy problem of the system \eqref{Cauchy} in $\widetilde{\mathbb M}$ with the initial data $\widetilde{\psi}_{AB...F} \in \mathcal{C}^\infty_0(\Sigma_0,\mathbb{S}_{(AB...F)}) \cap \mathcal{D}$ is well-posed, i.e, for any $\widetilde{\psi}_{AB...F} \in \mathcal{C}_0^\infty(\Sigma_0,\mathbb{S}_{(AB...F)}) \cap \mathcal{D}$ there exists a unique $\widetilde{\phi}_{AB...F}$ solution of $\widetilde{\nabla}^{AA'}\widetilde{\phi}_{AB...F} = 0$ such that   
$$\widetilde{\phi}_{AB...F} \in \mathcal{C}^\infty(\widetilde{\mathbb{M}},\mathbb{S}_{(AB...F)}) \; ; \; \widetilde{\phi}_{AB...F}|_{t=0} = \widetilde{\psi}_{AB...F},$$
where we also denote by $\mathcal{D}$ the constraint space on $\Sigma_0 = \left\{ t = 0 \right\}$ in $\widetilde{\mathbb{M}}$.
\end{cor}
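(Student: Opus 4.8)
The plan is to deduce this corollary directly from Theorem \ref{cauchyproblem} by transporting the already-established well-posedness in the Einstein cylinder $\mathfrak{C}$ over to $\widetilde{\mathbb{M}}$ through the conformal invariance of the spin-$n/2$ massless equation. Concretely, I would first recall that both $\hat{\mathbb{M}}$ and $\widetilde{\mathbb{M}}$ are conformal rescalings of the same physical Minkowski spacetime $\mathbb{M}$, with rescaled fields $\hat{\phi}_{AB\ldots F} = \Omega^{-1}\phi_{AB\ldots F}$ and $\widetilde{\phi}_{AB\ldots F} = \widetilde{\Omega}^{-1}\phi_{AB\ldots F}$ respectively, so on the common region $\mathbb{M}$ the two rescaled fields differ only by the smooth nonvanishing factor $\widetilde{\Omega}/\Omega$. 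Given initial data $\widetilde{\psi}_{AB\ldots F} \in \mathcal{C}^\infty_0(\Sigma_0, \mathbb{S}_{(AB\ldots F)}) \cap \mathcal{D}$, I would convert it into data $\hat{\psi}_{AB\ldots F}$ on the $3$-sphere $S^3 = \{\tau = 0\}$ via this rescaling; the key point enabling the conversion is that $\Sigma_0 = \{t=0\}$ sits inside the cylinder as $S^3 \setminus \{i_0\}$, and the compact support of $\widetilde{\psi}$ guarantees that after rescaling the data extends to a smooth field on all of $S^3$ (in particular it vanishes in a neighbourhood of $i_0$, so there is no obstruction at the spacelike infinity point).

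Next I would invoke Theorem \ref{cauchyproblem} to obtain a unique global smooth solution $\hat{\phi}_{AB\ldots F} \in \mathcal{C}^\infty(\mathfrak{C}, \mathbb{S}_{(AB\ldots F)})$ of the rescaled equation in the cylinder with $\hat{\phi}_{AB\ldots F}|_{\tau=0} = \hat{\psi}_{AB\ldots F}$; one must check that the transported data still lies in the constraint space $\mathcal{D}$, which follows because the constraint equation is itself conformally invariant and is preserved under the identification of data. I would then restrict $\hat{\phi}$ to the image of $\widetilde{\mathbb{M}}$ inside $\mathfrak{C}$ and undo the rescaling to define $\widetilde{\phi}_{AB\ldots F} = \widetilde{\Omega}^{-1}\Omega\,\hat{\phi}_{AB\ldots F}$ on $\widetilde{\mathbb{M}} = \mathbb{M} \cup \scri^\pm$. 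By conformal invariance this $\widetilde{\phi}$ solves $\widetilde{\nabla}^{AA'}\widetilde{\phi}_{AB\ldots F} = 0$, it is smooth on $\widetilde{\mathbb{M}}$ (including up to $\scri^\pm$, where $\widetilde{\Omega}$ and $\Omega$ both behave well), and it matches $\widetilde{\psi}_{AB\ldots F}$ on $\Sigma_0$; uniqueness transfers from the cylinder because the rescaling map is a bijection on smooth fields.

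The main obstacle I anticipate is behaviour at spacelike infinity $i_0$, which is a genuine point on the Einstein cylinder but is pushed to infinity in the partial compactification $\widetilde{\mathbb{M}}$. This is precisely why the hypothesis of compact support on $\Sigma_0$ is imposed: it ensures the data transported to $S^3$ vanishes near $i_0$, so that by finite propagation speed (domain of dependence for the symmetric hyperbolic system in $\mathfrak{C}$) the cylinder solution $\hat{\phi}$ also vanishes in a whole neighbourhood of $i_0$ and of the portions of $\scri^\pm$ adjacent to it. I would therefore make the causal-support argument explicit: the support of $\hat{\phi}$ stays away from $i_0$, so the restriction to $\widetilde{\mathbb{M}}$ and the subsequent rescaling by $\widetilde{\Omega}^{-1}\Omega$ produce no singularities, and the resulting field is globally smooth on $\widetilde{\mathbb{M}}$. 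The remaining verifications --- conformal invariance of the equation and of the constraint, smoothness of the conversion factor on the overlap, and the bijectivity giving uniqueness --- are routine once the support localization near $i_0$ is in place.
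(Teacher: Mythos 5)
Your proposal is correct and follows essentially the same route as the paper: transport the compactly supported data to the $3$-sphere $S^3=\{\tau=0\}$ via the conformal rescaling $\hat{\psi}_{AB...F}=\Omega^{-1}\widetilde{\Omega}\,\widetilde{\psi}_{AB...F}$, extend it by zero across the smooth point $i_0$, apply Theorem \ref{cauchyproblem} on the Einstein cylinder, and then restrict and rescale back to $\widetilde{\mathbb{M}}$. Your additional remarks (conformal invariance of the constraint, finite propagation speed keeping the support away from $i_0$, and smoothness of the ratio $\Omega/\widetilde{\Omega}$ up to $\scri^\pm$) simply make explicit the verifications that the paper's terse proof leaves implicit.
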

\begin{proof}
Using the full conformal mapping, we can transform the domain $\widetilde{\mathbb M}$ into Einstein's cylinder $\mathfrak{C}$. Now the initial data $\hat{\psi}_{AB...F} = \Omega^{-1}\widetilde{\Omega}\widetilde{\psi}_{AB...F}$ is zero in the neighbourhood of $i_0$ which is a smooth point on the cylinder, then we extend the initial data which is zero in the rest of the support. Applying Theorem \ref{cauchyproblem}, we obtain that the solution will be the restriction of that of the Cauchy problem in $\mathfrak{C}$ on $\widetilde{\mathbb M}$.
\end{proof}

\subsection{Pointwise decays}
The decays along the outgoing null geodesics, i.e, the ''peeling-off'' property of the spin-$n/2$ zero rest mass fields in Minkowski spacetime were obtained in \cite{Sa61,Pe1965,Shu}. The pointwise decays, i.e, decays in time of these fields and their derivations were established in \cite{ABJ} via analyzing Hertz potentials. Here, we give another approach to obtain the pointwise decay of spin-$n/2$ zero rest-mass fields by using the full conformal compactification spacetime. The timelike decays obtained below are sufficient to prove that energy equalities of the rescaled field $\widetilde{\phi}_{AB...F}$ between the null conformal boundaries $\scri^\pm$ and the hypersurface $\Sigma_0=\left\{ t=0 \right\}$ in $\widetilde{\mathbb{M}}$. The main theorem of this section is
\begin{theorem}\label{decay}
There exists two constants $C^\pm_k$ such that 
\begin{equation*}
\lim_{t \rightarrow \pm \infty} t^{n+2}\phi_{n-k} = C^\pm_k.
\end{equation*}
In other words, all of the components of spin-$n/2$ zero rest-mass field $\phi_{AB...F}$ decays as $1/t^{n+2}$ along the integral line of $\partial_t$. As a direct consequence of this decay result, on the partial conformal compactification spacetime $\widetilde{\mathbb{M}}$, we have
\begin{equation*}
\lim_{t \rightarrow \pm \infty} \frac{t^{n+2}}{r^{k+1}} \widetilde{\phi}_{n-k} = C^\pm_k.
\end{equation*}
\end{theorem}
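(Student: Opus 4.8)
The plan is to extract the decay entirely from the conformal factor, using that the rescaled field is regular on the Einstein cylinder up to and including the smooth points $i^\pm$. Inverting the relation $\hat\phi_{n-k} = \Omega_1^{-1-k}\Omega_2^{-1-(n-k)}(-1)^{k}\phi_{n-k}$ established in Section \ref{S3}, I would write
$$\phi_{n-k} = (-1)^{k}\,\Omega_1^{\,1+k}\,\Omega_2^{\,1+(n-k)}\,\hat\phi_{n-k},$$
so that the behaviour of $\phi_{n-k}$ along an integral line of $\partial_t$ is governed by the product $\Omega_1^{1+k}\Omega_2^{1+(n-k)}$ times the rescaled component $\hat\phi_{n-k}$, which should be bounded near the relevant limit point.

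First I would pin down that limit point. Fixing $(r,\theta,\varphi)$ with $r>0$ and letting $t\to\pm\infty$, both $u=t-r$ and $v=t+r$ tend to $\pm\infty$, and from $\tau=\arctan u+\arctan v$, $\zeta=\arctan v-\arctan u$ one gets $(\tau,\zeta)\to(\pm\pi,0)$; that is, the integral curve of $\partial_t$ runs into the timelike infinity point $i^\pm$, which by Section \ref{S2} is a smooth point for $\hat g$. Next I would record the conformal factor asymptotics: since $\Omega_1=\sqrt{2/(1+u^2)}$ and $\Omega_2=\sqrt{2/(1+v^2)}$, along this curve $\Omega_1\sim\sqrt2/|t|$ and $\Omega_2\sim\sqrt2/|t|$, hence
$$\Omega_1^{\,1+k}\,\Omega_2^{\,1+(n-k)}\sim \left(\sqrt2\right)^{\,n+2}\,|t|^{-(n+2)},$$
the exponent being exactly $(1+k)+(1+(n-k))=n+2$. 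This is what produces the decay rate $1/t^{n+2}$ for every component, independent of $k$.

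The heart of the argument, and the step I expect to be the main obstacle, is showing that $\hat\phi_{n-k}$ has a finite limit as one approaches $i^\pm$ along such a curve. By Theorem \ref{cauchyproblem} the rescaled spinor $\hat\phi_{AB...F}$ is $\mathcal C^\infty$ on all of $\mathfrak C$, so at the smooth point $i^\pm$ it takes a well-defined finite value; the delicate point is that the Newman--Penrose dyad $\{\hat o_A,\hat\iota_A\}$ degenerates at $\zeta=0$, because $\hat m^a$ carries the factor $1/\sin\zeta$. I would argue that this is merely the usual angular coordinate singularity of spherical-type coordinates on $S^3$, the vectors $\hat l^a,\hat n^a$ remaining regular at $\zeta=0$: approaching $i^\pm$ radially at fixed $(\theta,\varphi)$, the dyad tends to a definite limiting frame, so each scalar component $\hat\phi_{n-k}$ stays bounded and converges. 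Granting this,
$$t^{n+2}\phi_{n-k} = (-1)^{k}\,t^{n+2}\,\Omega_1^{\,1+k}\,\Omega_2^{\,1+(n-k)}\,\hat\phi_{n-k} \longrightarrow C^\pm_k,$$
where $C^\pm_k=(-1)^{k}(\pm1)^{n}\left(\sqrt2\right)^{n+2}\,\lim_{i^\pm}\hat\phi_{n-k}$, the sign $(\pm1)^{n}$ arising from $t^{n+2}/|t|^{n+2}$ as $t\to\pm\infty$.

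Finally, the statement on the partial conformal compactification is immediate: from $\widetilde\phi_{n-k}=r^{1+k}\phi_{n-k}$ in Section \ref{S3} one has $t^{n+2}r^{-(k+1)}\widetilde\phi_{n-k}=t^{n+2}\phi_{n-k}$, so the same limit $C^\pm_k$ follows with no further computation.
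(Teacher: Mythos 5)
Your proposal is correct and takes essentially the same route as the paper's proof: both invert the conformal rescaling relation between $\phi_{n-k}$ and $\hat\phi_{n-k}$, use the smoothness of $\hat\phi_{AB...F}$ on the whole Einstein cylinder (Theorem \ref{cauchyproblem}) so that the rescaled components have finite limits at the smooth points $i^\pm$, read the rate $t^{-(n+2)}$ off the asymptotics $\Omega_1,\Omega_2\sim\sqrt{2}/|t|$, and deduce the partial-compactification statement from $\widetilde{\phi}_{n-k}=r^{1+k}\phi_{n-k}$. Your additional remarks (identifying the limit point $i^\pm$, the coordinate nature of the dyad degeneracy at $\zeta=0$, and the explicit constants $(\sqrt2)^{n+2}$ and $(\pm1)^n$) merely make explicit what the paper's terser computation leaves implicit.
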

\begin{proof}
In the full conformal compactification $\widehat{\mathbb{M}}$, we have $\hat \phi_{AB...F}(i^+) = \lim_{t\rightarrow +\infty}\Omega^{-1} \phi_{AB...F}$. This leads to exist constants $C^+_k \, (0\leq k \leq n)$ satisfying
\begin{eqnarray*}
C^+_k &=& \lim_{t\rightarrow +\infty}\Omega_1^{-(n-k)} \Omega_2^{-k} \Omega^{-1} \phi_{n-k} \cr 
&=& \lim_{t\rightarrow +\infty} \left(\frac{\sqrt{1+(t-r)^2}}{\sqrt 2}\right)^{n-k} \left( \frac{\sqrt{1+(t+r)^2}}{\sqrt 2} \right)^{k} \frac{\sqrt{1+(t-r)^2}\sqrt{1+(t+r)^2}} {2} \phi_{n-k} \cr
&=& \lim_{t\rightarrow +\infty} t^{n+2}\phi_{n-k}.
\end{eqnarray*} 
Similarly, we can show that there exists constants $C^-_k$ such that
$$\lim_{t \rightarrow -\infty} t^{n+2}\phi_{n-k} = C^-_k.$$
The last equations in Proposition \ref{decay} are a direct consequence of the above decay results and the equations \eqref{eq-part}
$$\widetilde{\phi}_{n-k} = r^{1+k} \phi_{n-k}, \; 0 \leq k \leq n.$$
\end{proof}

\section{Energy fluxes}\label{S5}
In this section, we give the detailed calculations for the energy fluxes of the rescaled fields $\hat{\phi}_{AB...F}$ and $\widetilde{\phi}_{AB...F}$ on the full and partial conformal compactification spacetimes $\hat{\mathbb{M}}$ and $\widetilde{\mathbb{M}}$ respectively.
Then, we prove the energy equalities of the rescaled fields between the conformal boundaries $\scri^\pm$ and the hypersurface $S^3 = \left\{ \tau=0 \right\}$ (resp. $\Sigma_0=\left\{ t=0 \right\}$) in $\hat{\mathbb{M}}$ (resp. $\widetilde{\mathbb{M}}$). The energy equalities play an important role to establish the wellposedness of the Goursat problem.

Let ${\mathcal S}$ be a spacelike hypersurface with the future-oriented unit normal vector field $\nu^a$ in Minkowski spacetime $\mathbb M$ we define the current conserved energy by
$$J_a= \phi_{AB...F} \bar{\phi}_{A'B'...F'} \tau^b\tau^c...\tau^f = \left( |\phi_{n-k}|^2 \sum_{k=0}^n \underbrace{n_an_b...n_c}_{k \; terms}\underbrace{l_d...l_f}_{n-k \; terms} + A \right) \tau^b\tau^c...\tau^f,$$
where $\tau^.$ are timelike vector fields, which doesn't change when we changing the metric by using the conformal mapping. 

We have
\begin{align}\label{phi}
{\phi}_{AB...F}\bar{\phi}_{A'B'...F'} &= |{\phi}_n|^2 \underbrace{l_a...l_f}_{n \; terms} + |\phi_0|^2 \underbrace{n_a...n_f}_{n \; terms} \cr
& + \sum_{k=1}^{n-1}|\phi_{n-k}|^2 \left( n_a \underbrace{ n_b...n_c}_{k-1 \; terms} \underbrace{l_d... l_f}_{n-k \; terms}  + l_a \underbrace{n_b...n_c n_d}_{k \; terms} \underbrace{l_e...l_f}_{n-k-1 \; terms}  \right) + A,
\end{align}
where $A$ is the sum of the components that contain $m^a$ or $\bar{m}^a$. We notice that the sum $A$ will be vanished in the energy fluxes due to the normalization condition of Newman-Penrose tetrad.

Since $\mathcal{S}$ is a spacelike hypersurface, we can choose the transversal vector to $\mathcal{S}$ is also $\nu^a$. The energy flux of the spin field $\phi_{AB...F}$ through ${\mathcal S}$ is defined by
\begin{equation}\label{formula-energy}
{\mathcal E}_{\mathcal S} (\phi_{AB...F}) = \int_{\mathcal S} J_a \nu^a (\nu^a \hook \mathrm{dVol}^4_g) =  \int_{\mathcal S} J_a \nu^a \d \mu_{\mathcal{S}}. 
\end{equation}
Here, we denote the contraction of a vector field with the volume form by the symbol $\hook$\footnote{Let $\mathrm{dVol}^n_g = \d x^1\wedge \d x^2...\wedge \d x^n$ be the volume form associated with the metric $g$ in the local coordinates on the manifold $(M,g)$ of dimension $n$. The contraction of a vector field $X^a\partial_a= \sum_{i=1}^nx^i \partial_{x^i}$ with $\mathrm{dVol}^n_g$ is $X^a\hook \mathrm{dVol}^n_g = \sum_{i=1}^n (-1)^{i-1} x^i \d x^1\wedge ... \widehat{\d x^i} \wedge ... \wedge \d x^n.$.}

In the rescaled spacetime $\widehat{\mathbb{M}}$ (or $\widetilde{\mathbb{M}}$) we recall that
$$\hat g := \Omega^2 g, \; \hat \phi_{AB...F}: = \Omega^{-1}\phi_{AB...F}.$$
The current conserved energy is given by
$$\hat J_a= \hat\phi_{AB...F} \bar{\hat\phi}_{A'B'...F'} \tau^b\tau^c...\tau^f$$
and the unit normal vector to ${\mathcal S}$ for $\hat g$ is now
$$\hat\nu^a = \Omega^{-1}\nu^a.$$
We denote by $ \mu_{\mathcal{S}}$ (resp. $\hat {\mu}_{\mathcal{S}}$) the measure induced on $\mathcal{S}$ by $g$ (resp. $\hat g$), then
$$\hat{\mu}_{\mathcal{S}} = \Omega^3{\mu}_{\mathcal{S}} \; .$$
The energy of the rescaled field on ${\mathcal S}$ is 
\begin{equation}\label{e1}
\hat{\mathcal E}_{\mathcal S} (\hat\phi_{AB...F}) = \int_{\mathcal S} \hat J_a \hat\nu^a \d \hat{\mu}_{\mathcal{S}} =  \int_{\mathcal S}  \Omega^{-2}J_a \Omega^{-1}\nu^a \Omega^3\d \mu_{\mathcal{S}} =  {\mathcal E}_{\mathcal S} (\phi_{AB...F}).
\end{equation}
Therefore, if the vector fields $\tau^b,\,\tau^c,...,\tau^f$ don't change, then the energy on a spacelike hypersurface is conformally invariant.

Moreover, we follow the convention used by Penrose and Rindler \cite{PeRi} about the Hodge dual of a 1-form $\alpha$ on a spacetime $({\cal M},g)$ (i.e. a $4-$dimensional Lorentzian manifold that is oriented and time-oriented)
\begin{equation*}
(*\alpha)_{abcd} = e_{abcd}{\alpha}^d,
\end{equation*}
where $e_{abcd}$ is the volume form on $({\cal M},g)$, denoted simply by $\mathrm{dVol}_g$. We shall use the following differential operator of the Hodge star
\begin{equation*}
\d *\alpha = -\frac{1}{4}(\nabla_a\alpha^a)\mathrm{dVol}^4_g.
\end{equation*}
If $\mathcal{S}$ is the boundary of a bounded open set $\Omega$ and has outgoing
orientation, using Stokes theorem, we have
\begin{equation}\label{Stokesformula}
-4\int_{\mathcal{S}}*\alpha = \int_{\Omega}(\nabla_a\alpha^a)\mathrm{dVol}^4_g.
\end{equation}
Let $\hat{\phi}_{AB...F}$ be a solution of $\hat{\nabla}^{AA'}\hat{\phi}_{AB...F} =0$ with smooth and compactly supported initial data on the rescaled spacetime $(\hat{\mathbb{M}},\hat{g})$ (resp. $(\widetilde{\mathbb{M}},\widetilde{g})$). By using \eqref{Stokesformula} we define the rescaled energy flux associated with $\tau^a=\tau^b=...=\tau^f = \partial_t$, across an oriented hypersurface $\mathcal{S}$ as follows
\begin{equation}\label{e}
\hat{\mathcal{E}}_{\mathcal {S}}(\hat{\phi}_{AB...F}) = -4\int_{\mathcal{S}} *\hat{J}_a\d x^a = \int_{\mathcal{S}} \hat{J}_a\hat{N}^a\hat{L}\hook \mathrm{dVol}^4_{\hat g},
\end{equation}
where $\hat{L}$ is a transverse vector to $\mathcal{S}$ and $\hat{N}$ is the normal vector field to $\mathcal{S}$ such that
$\hat{L}^a\hat{N}_a=1$.
\begin{remark}
The formula \eqref{e1} will be useful to calculate the energy fluxes through the Cauchy hypersurface $\Sigma_0=\left\{ t=0\right\}$ in $\widetilde{\mathbb{M}}$. The formula \eqref{e} will be useful to calculate the energy fluxes across the conformal null boundary $\scri^+$ and the hyperboloid hypersurface $\mathcal{S}_T$ in the next sections.
\end{remark}

\subsection{Energy fluxes in the full conformal compactification}
We choose the vectors $\tau^b,\tau^c...\tau^f$ as follows
$$\tau^b = \tau^c = ... = \tau^f = \partial_\tau = \frac{1}{\sqrt 2} (\hat l^a + \hat n^a) \, .$$
The unit normal vector to the hypersurface $\Sigma_0$ is 
$${\hat{\nu}}_{\Sigma_0}^a = \partial_\tau = \frac{1}{\sqrt 2} (\hat l^a + \hat n^a) \, .$$
Combining with the expression \eqref{phi}, we can calculate
\begin{eqnarray*}
\hat{J}_a {\hat{\nu}}_{\Sigma_0}^a &=& \left( \frac{1}{\sqrt 2} \right)^n \left( |\hat{\phi}_{n-k}|^2 \sum_{k=0}^n \underbrace{\hat{n}_a\hat{n}_b...\hat{n}_c}_{k \; terms} \underbrace{\hat{l}_d...\hat{l}_f }_{n-k \; terms}\right) (\hat l^b + \hat n^b) (\hat l^c + \hat n^c)...(\hat l^f + \hat n^f) (\hat l^a + \hat n^a) \cr
&=& \left( \frac{1}{\sqrt 2} \right)^n \sum_{k=0}^{n}C^k_n |\hat\phi_{n-k}|^2.
\end{eqnarray*}
By using \eqref{e1} we have
\begin{equation}\label{energy-sigma-1}
\hat{\mathcal E}_{\Sigma_0}(\hat\phi_{AB...F}) = \left( \frac{1}{\sqrt 2} \right)^n \int_{\Sigma_0} \sum_{k=0}^{n}C^k_n |\hat\phi_{n-k}|^2 \d\mu_{S^3}.
\end{equation}

Since the Cauchy problem of the generalized spin-$n/2$ zero rest-mass equation is well-posed, we can define trace operator $\hat{\phi}_{AB...F}|_{\scri^+}$ on the null infinity hypersurface $\scri^+$ and then determine the energy through this hypersurface. In particular, the normal vector to the future null infinity $\scri^+$ is
$${\hat{\mathcal{N}}}_{\scri^+}^a = \hat n^a = \frac{1}{\sqrt{2}} (\partial_\tau - \partial_\zeta),$$
hence the transversal vector to the null infinity $\scri^+$ is
$${\hat{\mathcal{L}}}_{\scri^+}^a = \sqrt {2} \partial_\tau \, .$$
Using again the expression \eqref{phi}, we have
\begin{eqnarray*}
\hat{J}_a {\hat{\mathcal{N}}}_{\scri^+}^a &=& \left( \frac{1}{\sqrt 2} \right)^{n-1}\left( |\hat{\phi}_{n-k}|^2 \sum_{k=0}^n \underbrace{\hat{n}_a\hat{n}_b...\hat{n}_c}_{k \; terms}\underbrace{\hat{l}_d...\hat{l}_f }_{n-k \; terms}\right) (\hat l^b + \hat n^b) (\hat l^c + \hat n^c)...(\hat l^f + \hat n^f) \hat n^a \cr
&=& \left( \frac{1}{\sqrt 2} \right)^{n-1}\sum_{k=0}^{n-1} C^{k}_{n-1} |\hat\phi_{n-k}|^2.
\end{eqnarray*}
Therefore, use \eqref{e} we have
\begin{eqnarray*}
\hat{\mathcal E}_{\scri^+}(\hat\phi_{AB...F}) &=& \int_{\scri^+} \hat{J}_a {\hat{\mathcal{N}}}_{\scri^+}^a ({\hat{\mathcal{L}}}_{\scri^+}^a \hook \mathrm{dVol^4_{\hat g}}) \cr
&=& \left( \frac{1}{\sqrt 2} \right)^{n-2} \int_{\scri^+}\sum_{k=0}^{n-1} C^{k}_{n-1} |\hat\phi_{n-k}|^2 \d \mu_{S^3}.
\end{eqnarray*}

\subsection{Energy fluxes in the partial conformal compactification}
We choose the vectors $\tau^b, \tau^c...\tau^f$ as follows
$$\tau^b = \tau^c = ... = \tau^f = \partial_t = \partial_u = \frac{1}{\sqrt 2} (\widetilde{n}^a + R^2 \widetilde{l}^a).$$
The unit normal vector to the hypersurface $\Sigma_0$ is
$$\widetilde{\nu}_{\Sigma_0}^a = r\partial_t = \frac{r}{\sqrt 2} (\widetilde{n}^a + R^2 \widetilde{l}^a).$$
We have 
\begin{eqnarray*}
\tau^b\tau^c...\tau^f &=& \left( \frac{1}{\sqrt 2} \right)^{n-1} (\widetilde{n}^b + R^2 \widetilde{l}^b ) (\widetilde {n}^c + R^2 \widetilde{l}^c)...(\widetilde{n}^f + R^2 \widetilde{l}^f) \cr
&=& \left( \frac{1}{\sqrt 2} \right)^{n-1} \left( \underbrace{\widetilde{n}^b \widetilde{n}^c...\widetilde{n}^f}_{n-1 \; terms} + R^{2(n-1)} \underbrace{\widetilde{l}^b \widetilde{l}^c...\widetilde{l}^f}_{n-1 \; terms} \right) \cr
&&+ \left( \frac{1}{\sqrt 2} \right)^{n-1} \left( \sum_{k=1}^{n-1}R^{2(k-1)}\underbrace{ \widetilde{l}^b...\widetilde{l}^c}_{k-1 \; terms} \underbrace{\widetilde{n}^d...\widetilde{n}^f}_{n-k \; terms}  + \sum_{k=1}^{n-1}R^{2k}\underbrace{ \widetilde{l}^b...\widetilde{l}^c \widetilde{l}^d}_{k \; terms} \underbrace{\widetilde{n}^e...\widetilde{n}^f}_{n-1-k \; terms} \right) \, ,
\end{eqnarray*}
and using the expression \eqref{phi},
\begin{eqnarray*}
\widetilde{\phi}_{AB...F}\bar{\widetilde{\phi}}_{A'B'...F'} &=& |\widetilde{\phi}_n|^2 \underbrace{\widetilde{l}_a...\widetilde{l}_f}_{n \; terms} + |\widetilde{\phi}_0|^2 \underbrace{\widetilde{n}_a...\widetilde{n}_f}_{n \; terms} \cr
&& + \sum_{k=1}^{n-1}|\widetilde{\phi}_{n-k}|^2 \left( \widetilde{n}_a \underbrace{\widetilde{n}_b...\widetilde{n}_c}_{k-1 \; terms} \underbrace{\widetilde{l}_d...\widetilde{l}_f}_{n-k \; terms}  + \widetilde{l}_a \underbrace{\widetilde{n}_b...\widetilde{n}_c\widetilde{n}_d}_{k \; terms} \underbrace{\widetilde{l}_e...\widetilde{l}_f}_{n-k-1 \; terms}  \right) + A,
\end{eqnarray*}
we obtain that
$$\widetilde{J}_a = \left( \frac{1}{\sqrt 2} \right)^{n-1} \left( \sum_{k=0}^{n-1}R^{2k}|\widetilde{\phi}_{n-k}|^2\widetilde{l}_a + \sum_{k=1}^n R^{2(k-1)}|\widetilde{\phi}_{n-k}|^2 \widetilde{n}_a     \right),$$
where $A$ vanishes due to the normalization condition of the Newman-Penrose tetrad. 

Using \eqref{e1} we can calculate that
\begin{eqnarray*}
\widetilde{J}_a \widetilde{\nu}_{\Sigma_0}^a &=& \left( \frac{1}{\sqrt 2} \right)^{n}r \left( \sum_{k=0}^{n-1}R^{2k}C_{n-1}^k|\widetilde{\phi}_{n-k}|^2\widetilde{l}_a + \sum_{k=1}^n R^{2(k-1)}C^{k-1}_{n-1}|\widetilde{\phi}_{n-k}|^2 \widetilde{n}_a \right) (\widetilde{n}^a + R^2 \widetilde{l}^a)\cr
&=& \left( \frac{1}{\sqrt 2} \right)^{n} \left( \sum_{k=0}^{n-1} R^{2k-1} C^k_{n-1} |\widetilde{\phi}_{n-k}|^2  + \sum_{k=1}^{n} R^{2k-1} C^{k-1}_{n-1} |\widetilde{\phi}_{n-k}|^2  \right),
\end{eqnarray*}
hence
\begin{eqnarray} \label{energy-sigma-2}
\widetilde{{\mathcal E}}_{\Sigma_0} (\widetilde{\phi}_{AB...F}) &=& \int_{\Sigma_0} \widetilde{J}_a \widetilde{{\nu}}_{\Sigma_0}^a (\widetilde{\nu}_{\Sigma_0}^a \hook \mathrm{dVol^4_{\widetilde{g}}}) = \int_{\Sigma_0} \widetilde{J}_a \widetilde{{\nu}}_{\Sigma_0}^a (r\partial_t \hook \mathrm{dVol^4_{\widetilde{g}}}) \cr
&=& \left( \frac{1}{\sqrt 2} \right)^n\int_{\Sigma_0} \left( \sum_{k=0}^{n-1} R^{2k} C^k_{n-1} |\widetilde{ \phi}_{n-k}|^2  + \sum_{k=1}^{n} R^{2k} C^{k-1}_{n-1} |\widetilde{\phi}_{n-k}|^2  \right) \d r \d^2\omega.
\end{eqnarray}
The normal vector to the null infinity hypersurface $\scri_T^+ = \scri^+ \cap \left\{ u \leq T \right\}$ is
$$\widetilde{{\mathcal{N}}}_{\scri_T^+}^a = \widetilde{n}^a = \sqrt {2} \partial_u,$$
hence the transversal vector to the null infinity hypersurface $\scri_T^+ = \scri^+ \cap \left\{ u \leq T \right\}$ is
$$\widetilde{\mathcal{L}}_{\scri_T^+}^a = -\frac{1}{\sqrt{2}} \partial_R.$$
With the supported compact initial data on $\Sigma_0$, the solution $\widetilde{\phi}_{AB...F}$ has the support on $\scri^+$ far away from $i_0$ (that is a consequence of the finite propagation speed). Therefore, we can calculate that
\begin{eqnarray*}
\widetilde{J}_a \widetilde{{\mathcal{N}}}_{\scri_T^+}^a &=& \left( \frac{1}{\sqrt 2} \right)^{n-1} \left( \sum_{k=0}^{n-1}R^{2k}|\widetilde{\phi}_{n-k}|^2\widetilde{l}_a + \sum_{k=1}^n R^{2(k-1)}|\widetilde{\phi}_{n-k}|^2 \widetilde{n}_a \right) \widetilde{n}^a \cr
&=& \left( \frac{1}{\sqrt 2} \right)^{n-1} |\widetilde{\phi}_n|^2 \; \mbox{(due to on $\scri_T^+$ we have $R^{2k} = 0 \; \mbox{with} \; k\geq 1$)}.
\end{eqnarray*}
Using \eqref{e} we get
\begin{eqnarray*}
\widetilde{{\mathcal E}}_{\scri^+_T} (\widetilde{\phi}_{AB...F}) &=& \int_{\scri^+_T}\widetilde{J}_a \widetilde{\mathcal{N}}_{\scri_T^+}^a (\widetilde{{\mathcal{L}}}_{\scri_T^+}^a \hook \mathrm{dVol^4_{\widetilde{g}}}) \cr
&=& \left( \frac{1}{\sqrt 2} \right)^{n} \int_{\scri^+_T} |\widetilde{\phi}_n|^2 \d u \d^2\omega,
\end{eqnarray*}
and we can define
\begin{equation}
\widetilde{\mathcal E}_{\scri^+} (\widetilde{\phi}_{AB...F}) = \lim_{T\rightarrow +\infty}\widetilde{{\mathcal E}}_{\scri^+_T} (\widetilde{\phi}_{AB...F}).
\end{equation}

\begin{figure}[H]
\begin{center}
\includegraphics[scale=0.3]{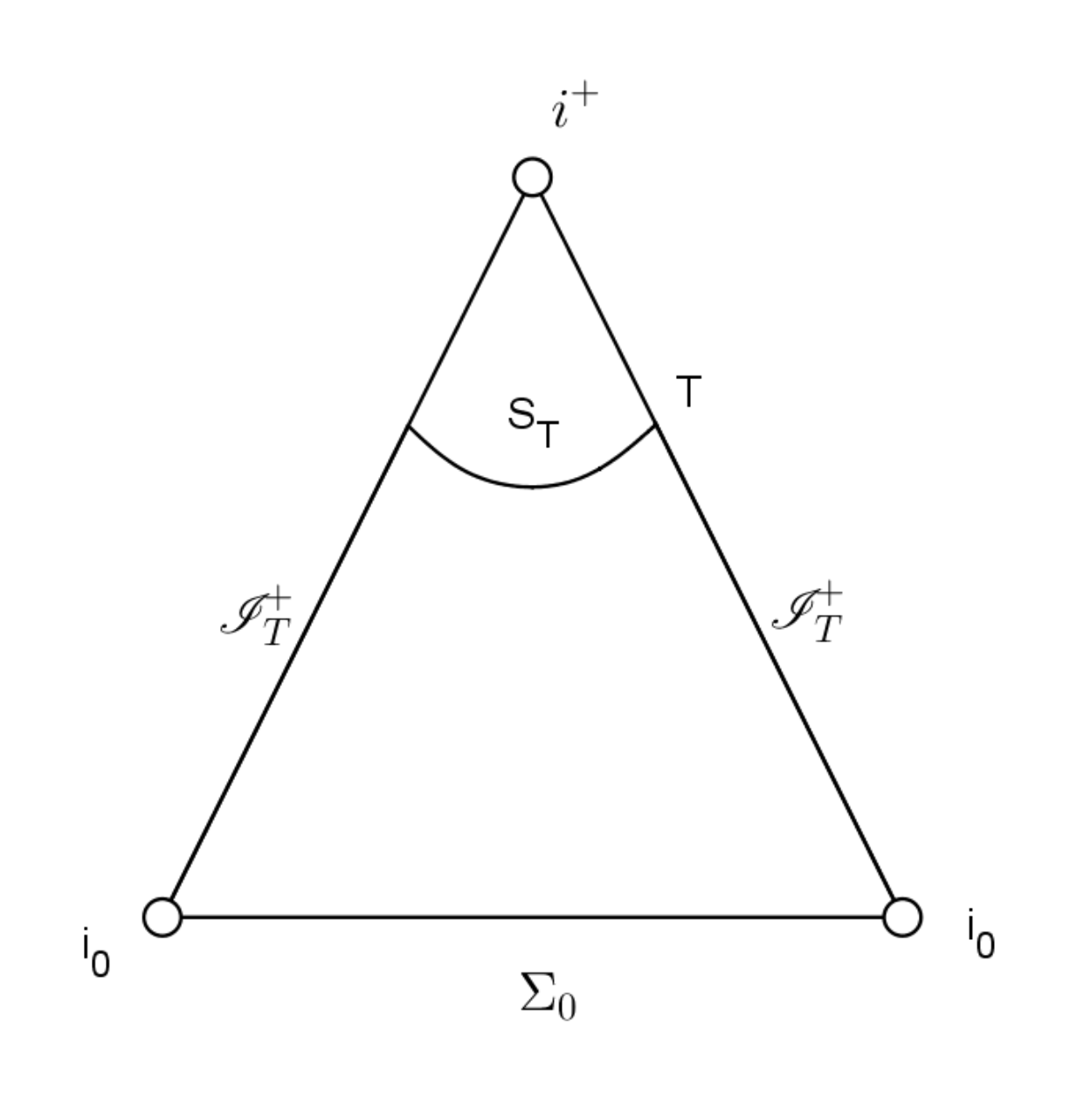}
\caption{The spacelike hypersurface $S_T$ in the partial conformal compactification spacetime $\widetilde{\mathbb{M}}$.}
\end{center}
\end{figure}
To prove the energy equality in the partial conformal compactification, we define a Cauchy (hyperboloid) hypersurface
$$S_T = \left\{ (t,r,\omega) \in {\mathbb R}_t \times {\mathbb R}_r \times S^2 \; ; \; t = T + \sqrt{1+r^2}  \right\},$$
which tends to $i^+$ as $T$ tends to $+\infty$. Since the initial data has a compact support on $\Sigma_0$, we obtain a closed form of the hypersurfaces $\Sigma_0, \scri_T^+$ and $S_T$. 

Now the conormal vector to the hypersurface $S_T$ is
$$\widetilde{{\mathcal N}}_a \d x^a = \d t - \frac{r}{\sqrt {1+r^2}} \d r.$$
Hence the unit normal vector to $S_T$ is
\begin{align}
\widetilde{{\mathcal N}}^a \partial x^a &= r^2 \left( \partial_t + \frac{r}{\sqrt {1+r^2}} \partial_r \right) = r^2\partial_t - \frac{r}{\sqrt{1+r^2}}\partial_R \cr
&= \frac{r^2}{\sqrt 2} \left( R^2\widetilde{l}^a + \widetilde{n}^a \right) + \frac{r\sqrt 2}{\sqrt{1+r^2}} \widetilde{l}^a \cr
&= \left( \frac{1}{\sqrt 2} + \frac{r\sqrt 2}{\sqrt{1+r^2}} \right) \widetilde{l}^a + \frac{r^2}{\sqrt 2}\widetilde{n}^a.
\end{align}
The transversal vector $\widetilde{{\mathcal L}}^a$ sastifies $\left < \widetilde{{\mathcal N}}^a,\widetilde{{\mathcal L}}^a \right > = 1$ that is 
$$\widetilde{{\mathcal L}}^a = \frac{1+r^2}{1+2r^2} \left( \partial_t - \frac{r}{\sqrt {1+r^2}} \partial_r \right).$$
Then, the contraction of $\widetilde{{\mathcal L}}^a$ with the volume form for $\widetilde{g}$ is
$$\widetilde{{\mathcal L}}^a \hook \mathrm{dVol}^4_{\widetilde{g}} = \frac{1+r^2}{1+2r^2}R^2\left( \d r + \frac{r}{\sqrt{1+r^2}}\d t\right) \d^2\omega.$$

On $S_T$ we have
$$\d t = \frac{r}{\sqrt{1+r^2}}\d r$$
hence
$$\widetilde{{\mathcal L}}^a \hook \mathrm{dVol}^4_{\widetilde{g}} = \frac{1+r^2}{1+2r^2}R^2 \left(1+ \frac{r^2}{1+r^2} \right)\d r\d^2\omega = R^2 \d r \d^2\omega.$$
Now we can calculate
\begin{eqnarray*}
\tau^b\tau^c...\tau^f &=& \left( \frac{1}{\sqrt 2} \right)^{n-1} (\widetilde{n}^b + R^2\widetilde{l}^b ) (\widetilde {n}^c + R^2\widetilde{l}^c)...(\widetilde{n}^f + R^2 \widetilde{l}^f) \cr
&=& \left( \frac{1}{\sqrt 2} \right)^{n-1} \left( \underbrace{\widetilde{n}^b \widetilde{n}^c...\widetilde{n}^f}_{n-1 \; terms} + R^{2(n-1)} \underbrace{\widetilde{l}^b \widetilde{l}^c...\widetilde{l}^f}_{n-1 \; terms} \right) \cr
&&+ \left( \frac{1}{\sqrt 2} \right)^{n-1} \left( \sum_{k=1}^{n-1}R^{2(k-1)}\underbrace{ \widetilde{l}^b...\widetilde{l}^c}_{k-1 \; terms} \underbrace{\widetilde{n}^d...\widetilde{n}^f}_{n-k \; terms}  + \sum_{k=1}^{n-1}R^{2k}\underbrace{ \widetilde{l}^b...\widetilde{l}^c \widetilde{l}^d}_{k \; terms} \underbrace{\widetilde{n}^e...\widetilde{n}^f}_{n-1-k \; terms} \right).
\end{eqnarray*}
Combining with Formula \eqref{phi} we get
\begin{eqnarray*}
\widetilde{\phi}_{AB...F}\bar{\widetilde{\phi}}_{A'B'...F'} &=& |\widetilde{\phi}_n|^2 \underbrace{\widetilde{l}^a...\widetilde{l}^f}_{n \; terms} + |\widetilde{\phi}_0|^2 \underbrace{\widetilde{n}^a...\widetilde{n}^f}_{n \; terms} \cr
&& + \sum_{k=1}^{n-1}|\widetilde{\phi}_{n-k}|^2 \left( \widetilde{n}^a \underbrace{\widetilde{n}^b...\widetilde{n}^c}_{k-1 \; terms} \underbrace{\widetilde{l}^d...\widetilde{l}^f}_{n-k \; terms}  + \widetilde{l}^a \underbrace{\widetilde{n}^b...\widetilde{n}^c\widetilde{n}^d}_{k \; terms} \underbrace{\widetilde{l}^e...\widetilde{l}^f}_{n-k-1 \; terms}  \right) + A,
\end{eqnarray*}
we obtain that
\begin{eqnarray*}
\left <\widetilde{J}^a, \widetilde{{\mathcal N}}^a \right >\widetilde{{\mathcal N}}^a &=& \widetilde{\phi}_{AB...F}\bar{\widetilde{\phi}}_{A'B'...F'}\tau^b\tau^c...\tau^f\widetilde{N}^a\cr
&=& \left( \frac{1}{\sqrt 2} \right)^{n-1} \left( \frac{1}{\sqrt 2} + \frac{r\sqrt 2}{\sqrt{1+r^2}} \right)\sum_{k=1}^{n}R^{2(k-1)}C_{n-1}^{k-1}|\widetilde{\phi}_{n-k}|^2 \cr
&&+ \left( \frac{1}{\sqrt 2} \right)^{n}r^2 \sum_{k=0}^{n-1}R^{2k}C_{n-1}^{k}|\widetilde{\phi}_{n-k}|^2,
\end{eqnarray*}
where $A$ vanishes due to the normalization condition of the Newman-Penrose tetrad. By associating with $\widetilde{{\mathcal L}}^a \hook \mathrm{dvol}^4_{\widetilde{g}}$ and by using \eqref{e} we calculate the energy of $\widetilde{\phi}_{AB...F}$ across $S_T$ as follows
\begin{eqnarray}\label{energy_on_St}
\widetilde{{\mathcal E}}_{S_T} (\widetilde{\phi}_{AB...F}) &=& \int_{S_T} \left <\widetilde{J}^a, \widetilde{{\mathcal N}}^a \right > (\tilde{{\mathcal L}}^a \hook \mathrm{dVol}^4_{\widetilde{g}}) \cr
&=&  \int_{S_T} \left( \frac{1}{\sqrt 2} \right)^{n-1} \left( \frac{1}{\sqrt 2} + \frac{r\sqrt 2}{\sqrt{1+r^2}} \right)\sum_{k=1}^{n}R^{2k}C_{n-1}^{k-1}|\widetilde{\phi}_{n-k}|^2 \d r \d^2\omega \cr
&&+  \int_{S_T} \left( \frac{1}{\sqrt 2} \right)^{n} \sum_{k=0}^{n-1}R^{2k}C_{n-1}^{k}|\widetilde{\phi}_{n-k}|^2\d r \d^2\omega.
\end{eqnarray}

\subsection{Energy equalities}
\begin{theorem}\label{egalite_energies}
In the conformal compactification spacetimes $\hat{\mathbb{M}}$ and $\widetilde{\mathbb{M}}$ we have energy equalities of the spin-$n/2$ zero rest-mass field through the hypersurfaces $\scri^\pm$, $S^3=\left\{ \tau = 0 \right\}$ in $\hat{\mathbb{M}}$ and $\Sigma_0 = \left\{ t=0 \right\}$ in $\widetilde{\mathbb{M}}$ as follows
\begin{equation}\label{eqen1}
\hat{\mathcal E}_{\scri^\pm}(\hat{\phi}_{AB...F}) = \hat{\mathcal E}_{S^3}({\hat\phi}_{AB...F}),
\end{equation}
\begin{equation}\label{eqen2}
\widetilde{{\mathcal E}}_{\scri^\pm}(\widetilde{\phi}_{AB...F}) = \widetilde{{\mathcal E}}_{\Sigma_0}(\widetilde{\phi}_{AB...F}).
\end{equation} 
\end{theorem}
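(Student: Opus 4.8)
The plan is to prove both identities by the single mechanism used throughout conformal scattering: the energy current is divergence-free, so the divergence theorem over a suitable spacetime region equates the fluxes through the pieces of its boundary, and the only real work is to show that the boundary pieces other than $\Sigma_0$ and $\scri^+$ contribute nothing. The starting point is that the symmetric spinor $T_{AB\ldots F A'B'\ldots F'} = \phi_{AB\ldots F}\bar\phi_{A'B'\ldots F'}$ is divergence-free, $\hnabla^{AA'}T_{AB\ldots F A'B'\ldots F'}=0$, as an immediate consequence of the massless equation $\hnabla^{AA'}\hat\phi_{AB\ldots F}=0$ and its complex conjugate. Contracting with the vector fields $\tau^b=\cdots=\tau^f$, chosen Killing for the relevant rescaled metric ($\partial_\tau$ for $\hat g$, $\partial_u=\partial_t$ for $\widetilde g$), the current $\hat J_a$ (resp. $\widetilde J_a$) is then divergence-free, because the remaining terms carry $\hnabla^{(a}\tau^{b)}=0$ contracted against the symmetric $T$. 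This conservation law drives both equalities.

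For the full compactification \eqref{eqen1} I would apply the divergence theorem to $\hat J_a$ on the compact region $\hat{\mathbb M}\cap\{\tau\ge0\}$, whose boundary is $\Sigma_0\cup\scri^+\cup\{i^+\}$, with $i_0$ lying in the closure of $\Sigma_0$. By Theorem \ref{cauchyproblem} the solution $\hat\phi_{AB\ldots F}$ extends smoothly to the whole cylinder $\mathfrak C\supset\hat{\mathbb M}$, so the energy density is bounded near the smooth points $i^+$ and $i_0$; being of measure zero, these make no contribution. Stokes' theorem then gives directly $\hat{\mathcal E}_{\Sigma_0}(\hat\phi_{AB\ldots F})=\hat{\mathcal E}_{\scri^+}(\hat\phi_{AB\ldots F})$, the fluxes being exactly those already computed in \eqref{energy-sigma-1} and in the formula for $\hat{\mathcal E}_{\scri^+}$. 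The fact that the two integrands do not match term by term is irrelevant: their equality is enforced by the conservation law rather than checked by hand.

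For the partial compactification \eqref{eqen2} the point $i^+$ is now at infinity, so I would instead close the region with the spacelike hypersurface $S_T$ introduced above, applying the divergence theorem to $\widetilde J_a$ on the domain bounded below by $\Sigma_0$ and above by $\scri^+_T\cup S_T$. Since the initial data is compactly supported, finite propagation speed confines the solution away from $i_0$, so the region is genuinely closed and there is no lateral boundary contribution. Conservation then yields $\widetilde{\mathcal E}_{\Sigma_0}(\widetilde\phi_{AB\ldots F})=\widetilde{\mathcal E}_{\scri^+_T}(\widetilde\phi_{AB\ldots F})+\widetilde{\mathcal E}_{S_T}(\widetilde\phi_{AB\ldots F})$, and letting $T\to+\infty$ sends the first right-hand term to $\widetilde{\mathcal E}_{\scri^+}(\widetilde\phi_{AB\ldots F})$ by definition.

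The crux is therefore to show $\widetilde{\mathcal E}_{S_T}(\widetilde\phi_{AB\ldots F})\to0$ as $T\to+\infty$, and this is exactly where Theorem \ref{decay} enters. On $S_T$ one has $t=T+\sqrt{1+r^2}$, and using $\widetilde\phi_{n-k}=r^{1+k}\phi_{n-k}$ with $R=1/r$, every term in the integrand of \eqref{energy_on_St} reduces to a multiple of $r^2|\phi_{n-k}|^2$ times factors that remain bounded as $r\to\infty$. The pointwise decay $|\phi_{n-k}|\lesssim t^{-(n+2)}$ then bounds the $S_T$-density by $C\,r^2\,(T+\sqrt{1+r^2})^{-2(n+2)}$, and the substitution $r=Ts$ gives $\int_0^\infty r^2(T+r)^{-2(n+2)}\,\d r = O(T^{-(2n+1)})\to0$ for every $n\ge1$. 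A dominated-convergence argument, using the uniform-in-$T$ control supplied by the decay, makes this rigorous and yields $\widetilde{\mathcal E}_{S_T}\to0$, completing \eqref{eqen2}. The main technical obstacle is precisely this limit: one must verify that the decay of Theorem \ref{decay} is strong enough and sufficiently uniform over $S_T$ to be integrated, and check that the bounded geometric factors (such as $\frac{1}{\sqrt2}+\frac{r\sqrt2}{\sqrt{1+r^2}}$) do not spoil the decay rate.
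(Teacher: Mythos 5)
Your proposal is correct and follows essentially the same route as the paper: conservation of the Killing-contracted current, the divergence theorem applied directly in $\hat{\mathbb M}$ for \eqref{eqen1} and on the region bounded by $\Sigma_0$, $\scri^+_T$ and $S_T$ for \eqref{eqen2}, with the pointwise decay of Theorem \ref{decay} forcing $\widetilde{\mathcal E}_{S_T}(\widetilde\phi_{AB...F})\to 0$. The only cosmetic difference is your final estimate (substitution $r=Ts$, giving $O(T^{-(2n+1)})$) versus the paper's bound of the integrand by $1/t(r)^4$ and splitting of the $r$-integral at $r=T$ (giving $O(T^{-3})$); both close the argument.
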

\begin{proof}
Since the vector fields $\partial_t$ and $\partial_\tau$ are Killing, we can obtain the conservation laws
$$\hnabla^a\hat J_a = \widetilde{\nabla}^a\widetilde{J}_a = 0.$$
In $\hat{\mathbb{M}}$, we consider the domain $W$ which is formed by the hypersurfaces $S^3$ and $\scri^+$. Integrating the conservation law $\hnabla^a\hat J_a = 0$ on $W$ and by using the divergence theorem we obtain the energy equality \eqref{eqen1} between the energies of field through $\scri^+$ and $S^3$. The proof is similar for the energy of the field through $\scri^-$.  

In $\widetilde{\mathbb{M}}$, since $i^+$ is infinite, we consider the domain $\mathcal{W}$ which is formed by the hypersurfaces $\Sigma_0,\scri^+_T$ and $S_T$. Integrating the conservation law $\widetilde{\nabla}^a\widetilde{J}_a = 0$ on $\mathcal{W}$ and by using again the divergence theorem we obtain that
$$\widetilde{{\mathcal E}}_{\scri^+_T}(\widetilde{\phi}_{AB...F}) + \widetilde{{\mathcal E}}_{S_T}(\widetilde{\phi}_{AB...F}) = \widetilde{{\mathcal E}}_{\Sigma_0}(\widetilde{\phi}_{AB...F}).$$
Taking the limit as $T$ tend to $+\infty$ for the equality above, we get
$$\widetilde{{\mathcal E}}_{\scri^+}(\widetilde{\phi}_{AB...F}) + \lim_{T\rightarrow +\infty}\widetilde{{\mathcal E}}_{S_T}(\widetilde{\phi}_{AB...F}) = \widetilde{{\mathcal E}}_{\Sigma_0}(\widetilde{\phi}_{AB...F}) \; .$$
Since the pointwise decays of the components $\widetilde{\phi}_{n-k}$ obtained in Theorem \ref{decay} and the formula \eqref{energy_on_St} of the energy flux through $S_T$, we have
\begin{eqnarray*}
&&\lim_{T\rightarrow +\infty}\widetilde{{\mathcal E}}_{S_T}(\widetilde{\phi}_{AB...F}) \cr
&=& \lim_{T\rightarrow +\infty} \int_{S_T} \left( \frac{1}{\sqrt 2} \right)^{n-1} \left( \frac{1}{\sqrt 2} + \frac{r\sqrt 2}{\sqrt{1+r^2}} \right)\sum_{k=1}^{n}R^{2k}C_{n-1}^{k-1}|\widetilde{\phi}_{n-k}|^2 \d r \d^2\omega \cr
&&+ \lim_{T\rightarrow +\infty} \int_{S_T} \left( \frac{1}{\sqrt 2} \right)^{n} \sum_{k=0}^{n-1}R^{2k}C_{n-1}^{k}|\widetilde{\phi}_{n-k}|^2\d r \d^2\omega  \cr
&\leq & \lim_{T\rightarrow +\infty} \int_{S_T} \left( \frac{1}{\sqrt 2} \right)^{n-1} \left( \frac{1}{\sqrt 2} + \frac{r\sqrt 2}{\sqrt{1+r^2}} \right)\sum_{k=1}^{n}R^{2k}C_{n-1}^{k-1} \left|\frac{r^{k+1}}{t(r)^{n+2}} \right|^2 \d r \d^2\omega \cr
&& + \lim_{T\rightarrow +\infty} \int_{S_T} \left( \frac{1}{\sqrt 2} \right)^{n} \sum_{k=0}^{n-1}R^{2k}C_{n-1}^{k} \left|\frac{r^{k+1}}{t(r)^{n+2}} \right|^2\d r \d^2\omega  \cr
&=& \lim_{T\rightarrow +\infty} \int_{S_T} \left( \frac{1}{\sqrt 2} \right)^{n-1} \left( \frac{1}{\sqrt 2} + \frac{r\sqrt 2}{\sqrt{1+r^2}} \right)\sum_{k=1}^{n} C_{n-1}^{k-1} \left|\frac{r}{t(r)^{n+2}} \right|^2 \d r \d^2\omega \cr
&& + \lim_{T\rightarrow +\infty} \int_{S_T} \left( \frac{1}{\sqrt 2} \right)^{n} \sum_{k=0}^{n-1}C_{n-1}^{k} \left|\frac{r}{t(r)^{n+2}}\right|^2\d r \d^2\omega,
\end{eqnarray*}
where $t(r) = T + \sqrt{1+r^2}$. We can control the right-hand side of the inequality above as follows
\begin{eqnarray*}
\mbox{Right-hand side}&\leq& C \lim_{T\rightarrow +\infty}\int_{S_\omega^2} \int_{r=0}^{+\infty} \frac{1}{t(r)^4}  \d r \d \omega^2 \cr
&\leq& 2\pi C \lim_{T\rightarrow +\infty}\int_{r=0}^{+\infty} \frac{1}{t(r)^4} \d r  \cr
&=& 2\pi C \lim_{T\rightarrow +\infty} \left( \int_{r=0}^{T}  \frac{1}{t(r)^4}  \d r + \int_{r=T}^{+\infty}  \frac{1}{t(r)^4}  \d r \right) \cr
&\leq& 2\pi C \lim_{T\rightarrow +\infty} \left( \int_{r=0}^{T} \frac{1}{T^4} \d r + \int_{r=T}^{+\infty} \frac{1}{r^4}  \d r \right) \cr
&=& 2\pi C \lim_{T\rightarrow +\infty} \frac{2}{3T^3} = 0,
\end{eqnarray*}
due to $t(r) > r$ and $t(r) > T$. Therefore,
$$ \lim_{T\rightarrow +\infty}\widetilde{{\mathcal E}}_{S_T}(\widetilde{\phi}_{AB...F}) = 0.$$
Therefore we can obtain the energy equality \eqref{eqen2} in the partial comformal compactification spacetime $\widetilde{\mathbb{M}}$:
$$\widetilde{{\mathcal E}}_{\scri^+}(\widetilde{\phi}_{AB...F}) = \widetilde{{\mathcal E}}_{\Sigma_0}(\widetilde{\phi}_{AB...F}).$$
The proof is similar for the energy of field through $\scri^-$.
\end{proof}
\begin{cor}\label{equality_energy_2}
If we cut the partial conformal compactification $\widetilde{{\mathbb{M}}}$ by a spacelike hypersurface $\mathcal{S}$, suppose that $\mathcal{S}$ intersects $\scri^+$ at $Q$. Then we also have the energy equality
$$ \widetilde{{\mathcal E}}_{\scri^{+,Q}}(\widetilde{\phi}_{AB...F}) = \widetilde{{\mathcal E}}_{\mathcal{S}}(\widetilde{\phi}_{AB...F})  \, ,$$
where $\scri^{+,Q}$ is the future part of $Q$ in $\scri^+$.
\end{cor}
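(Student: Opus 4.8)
The plan is to repeat the argument of Theorem~\ref{egalite_energies} with the spacelike slice $\mathcal{S}$ playing the role of $\Sigma_0$ and the null boundary restricted to the future part $\scri^{+,Q}$ of $Q$. The engine is again the conservation law $\widetilde{\nabla}^a\widetilde{J}_a = 0$, which holds because $\partial_t = \partial_u$ is Killing for $\widetilde{g}$ and $\widetilde{J}_a = \widetilde{\phi}_{AB...F}\bar{\widetilde{\phi}}_{A'B'...F'}\tau^b\tau^c...\tau^f$ is the associated conserved current of the rescaled field.

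First I would close the region by reusing the auxiliary hypersurfaces $S_T$ from Section~\ref{S5}. Let $\mathcal{R}_T$ be the region bounded below by $\mathcal{S}$, on the null side by the part $\scri^{+,Q}_T$ of $\scri^{+,Q}$ lying to the past of its intersection with $S_T$, and above by $S_T$, which tends to $i^+$ as $T\to+\infty$. The compact support of the initial data together with finite propagation speed forces $\widetilde{\phi}_{AB...F}$ to vanish near $i_0$, so $\mathcal{R}_T$ is a genuinely compact domain with piecewise-smooth boundary and the spacelike-infinity end contributes nothing.

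Integrating $\widetilde{\nabla}^a\widetilde{J}_a = 0$ over $\mathcal{R}_T$ and applying the divergence theorem, with the normals oriented so that every face yields a flux of the form \eqref{formula-energy}, I expect to obtain
$$\widetilde{{\mathcal E}}_{\scri^{+,Q}_T}(\widetilde{\phi}_{AB...F}) + \widetilde{{\mathcal E}}_{S_T}(\widetilde{\phi}_{AB...F}) = \widetilde{{\mathcal E}}_{\mathcal{S}}(\widetilde{\phi}_{AB...F}).$$
Letting $T\to+\infty$ sends $\scri^{+,Q}_T$ to $\scri^{+,Q}$, so the first term converges to $\widetilde{{\mathcal E}}_{\scri^{+,Q}}(\widetilde{\phi}_{AB...F})$, and it remains only to kill the $S_T$ term.

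The crux is thus the limit $\widetilde{{\mathcal E}}_{S_T}(\widetilde{\phi}_{AB...F})\to 0$, which I would establish exactly as in the proof of Theorem~\ref{egalite_energies}: substituting the pointwise decay $\widetilde{\phi}_{n-k}\sim r^{k+1}/t(r)^{n+2}$ from Theorem~\ref{decay} into the flux \eqref{energy_on_St} through $S_T$ bounds every term by $C\int_0^{\infty} t(r)^{-4}\,\d r$ with $t(r) = T + \sqrt{1+r^2}$, which decays like $1/T^3$ and hence vanishes. The only genuinely new point is the geometric bookkeeping around the corner $Q$ and near $i^+$---checking that $S_T$ meets $\scri^{+,Q}$ transversally and that no corner term survives---but this is the same configuration already handled for $\Sigma_0$, so it poses no additional analytic difficulty.
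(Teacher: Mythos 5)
Your proposal is correct and matches the paper's intent exactly: the paper states this corollary without a separate proof, precisely because it follows by repeating the argument of Theorem~\ref{egalite_energies} with $\mathcal{S}$ in place of $\Sigma_0$ and $\scri^{+,Q}$ in place of $\scri^+$ --- integrating $\widetilde{\nabla}^a\widetilde{J}_a=0$ over the region closed off by $S_T$, and killing the $S_T$ flux via the pointwise decay of Theorem~\ref{decay} applied to \eqref{energy_on_St}. Your write-up is exactly that adaptation, including the correct treatment of the corner at $Q$ and the limit $T\to+\infty$.
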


Another consequence of the energy equalities and the wellposedness of the Cauchy problem is that we can define the trace operator in the full and partial conformal compactification spacetimes due to the energy on the null infinity $\scri^+$ of the rescaled spin-$n/2$ zero rest-mass field being finite. The definition in the partial conformal compactification spacetime $\widetilde{\mathbb{M}}$ is as follows, the one in the full conformal compactification is similar.
\begin{defn}
The trace operator ${\mathcal T}^+: \mathcal{C}_0^{\infty}(\Sigma_0,\mathbb{S}_{(AB...F)}) \to \mathcal{C}_0^{\infty}(\scri^+,\mathbb{C})$ is given by
\begin{align*}
\mathcal{T}^+: \mathcal{C}_0^{\infty}(\Sigma_0,\mathbb{S}_{(AB...F)}) &\longrightarrow \mathcal{C}_0^{\infty}(\scri^+,\mathbb{C})\cr
\widetilde{{\psi}}_{AB...F}  &\longmapsto \widetilde{{\phi}}_n|_{\scri^+}.
\end{align*}
\end{defn}
Using again the energy equality we can extend the domain of the trace operator $\mathcal{T}^+$, where the extended operator is injective and has a closed range.
\begin{cor}\label{trace-one-one} 
We extend the trace operator 
\begin{align*}
\mathcal{T}^+: \mathcal{H}_0 = L^2(\Sigma_0, \mathbb{S}_{(AB...F)}) &\longrightarrow  \mathcal{H}^+ = L^2(\scri^+,\mathbb{C})\cr
\widetilde{{\phi}}_{AB...F}|_{\Sigma_0} &\longmapsto \widetilde{{\phi}}_n|_{\scri^+}
\end{align*}
where $\mathcal{H}_0 = L^2(\Sigma_0,\mathbb{S}_{(AB...F)})$ is the closed space of $\mathcal{C}_0^\infty(\Sigma_0,\mathbb{S}_{AB...F})$ in the energy norm 
\begin{align*}
\left\|\widetilde{{\phi}}_{AB...F}\right\|_{\Sigma_0}^2 &= \widetilde{{\mathcal E}}_{\Sigma_0} (\widetilde{\phi}_{AB...F}) \cr
&= \left( \frac{1}{\sqrt 2} \right)^n\int_{\Sigma_0} \left( \sum_{k=0}^{n-1} R^{2k} C^k_{n-1} |\widetilde{ \phi}_{n-k}|^2  + \sum_{k=1}^{n} R^{2k} C^{k-1}_{n-1} |\widetilde{\phi}_{n-k}|^2  \right) \d r \d^2\omega,
\end{align*}
and similarly $\mathcal{H}^+ = L^2(\scri^+,\mathbb{C})$ is the closed space of $\mathcal{C}_0^\infty(\scri^+,\mathbb{C})$ in the energy norm 
$$\left\|\widetilde{{\phi}}_n|_{\scri^+}\right\|_{\mathcal{H}^+}^2 = \widetilde{{\mathcal E}}_{\scri^+} (\widetilde{\phi}_{AB...F}) = \left( \frac{1}{\sqrt 2} \right)^{n} \int_{\scri^+} |\widetilde{\phi}_n|^2 \d u \d^2\omega.$$
The trace operator in the new domains is injective and has a closed range.
\end{cor}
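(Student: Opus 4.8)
The plan is to reinterpret the energy equality \eqref{eqen2} as the statement that $\mathcal{T}^+$ is an isometry for the two energy norms, and then to deduce injectivity and closedness of the range from the elementary functional analysis of isometries between complete spaces. First I would record that, for every $\widetilde{\psi}_{AB...F} \in \mathcal{C}_0^\infty(\Sigma_0, \mathbb{S}_{(AB...F)}) \cap \mathcal{D}$, if $\widetilde{\phi}_{AB...F}$ denotes the solution of the Cauchy problem with this datum, then by definition of the two norms together with \eqref{eqen2},
$$\left\| \mathcal{T}^+ \widetilde{\psi}_{AB...F} \right\|^2 = \widetilde{\mathcal E}_{\scri^+}(\widetilde{\phi}_{AB...F}) = \widetilde{\mathcal E}_{\Sigma_0}(\widetilde{\phi}_{AB...F}) = \left\| \widetilde{\psi}_{AB...F} \right\|_{\Sigma_0}^2.$$
Thus $\mathcal{T}^+$ preserves the energy norm on the dense subspace $\mathcal{C}_0^\infty(\Sigma_0, \mathbb{S}_{(AB...F)})$, i.e. it is an isometry of operator norm one onto its image.

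Next I would extend and analyse this isometry. Since $\mathcal{H}_0$ is by construction the completion of $\mathcal{C}_0^\infty(\Sigma_0, \mathbb{S}_{(AB...F)})$ in the energy norm, the bounded (indeed norm-preserving) operator $\mathcal{T}^+$ extends uniquely by density and continuity to a continuous linear map $\mathcal{H}_0 \to \mathcal{H}^+$, and letting the norm identity pass to the limit shows the extension is again an isometry. Injectivity is then immediate: if $\mathcal{T}^+ f = 0$ then $\| f \|_{\Sigma_0} = \| \mathcal{T}^+ f \| = 0$, so $f = 0$. For the closed range, I would take any sequence with $\mathcal{T}^+ f_j \to \chi$ in $\mathcal{H}^+$; being convergent it is Cauchy, so by the isometry the $f_j$ form a Cauchy sequence in the complete space $\mathcal{H}_0$, hence $f_j \to f$ for some $f \in \mathcal{H}_0$, and continuity of $\mathcal{T}^+$ gives $\mathcal{T}^+ f = \chi$; thus $\chi$ lies in the range, which is therefore closed.

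The point that deserves the most care --- and where the energy equality is genuinely doing the work rather than being a formality --- is injectivity. The trace operator retains only the leading component $\widetilde{\phi}_n|_{\scri^+}$ and discards all the lower components $\widetilde{\phi}_{n-k}$ ($k \geq 1$), so on the face of it $\mathcal{T}^+$ could have a large kernel; it is precisely \eqref{eqen2}, which ties the full $\Sigma_0$-energy (carrying every $|\widetilde{\phi}_{n-k}|^2$) to the single quantity $|\widetilde{\phi}_n|^2$ integrated over $\scri^+$, that rules this out. I would also check the one positivity hypothesis implicit in calling $\|\cdot\|_{\Sigma_0}$ a norm: from \eqref{energy-sigma-2} each component $|\widetilde{\phi}_{n-k}|^2$ enters with a weight that is a strictly positive multiple of $R^{2k}$, finite and nonvanishing for $0 < r < \infty$, so the quadratic form is positive definite on compactly supported data and genuinely defines a norm whose completion is $\mathcal{H}_0$. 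No further estimate is needed, since surjectivity of $\mathcal{T}^+$ is not asserted here (it is the content of the Goursat problem treated later).
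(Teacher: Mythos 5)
Your proposal is correct and follows essentially the same route as the paper's own (very terse) proof: the energy equality \eqref{eqen2} makes $\mathcal{T}^+$ an isometry on the dense constrained subspace, and injectivity plus closed range then follow from the standard Cauchy-sequence argument for isometries into and out of complete spaces. In fact your write-up is the careful version of what the paper compresses into two sentences --- in particular you correctly use the isometry in the reverse direction (image Cauchy implies preimage Cauchy, then completeness of $\mathcal{H}_0$ and continuity) for closedness, whereas the paper's phrasing ``transforms a Cauchy sequence to another one'' states only the trivial forward direction.
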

\begin{proof}
We observe that $\mathcal{T}^+: \mathcal{H}_0 \to \mathcal{H}^+$ is injective from the energy equality in $\widetilde{\mathbb{M}}$. By using again the energy equality we have $\mathcal{T}^+$ transforms a Cauchy sequence to another one. Hence, the domain image $\mathcal{T}^+(L^2(\Sigma_0,\mathbb{S}_{(AB...F)}))$ is closed.
\end{proof}

\section{Goursat problem}\label{S6}
Since the scalar curvature is $\mathrm{Scal}_{\hat{g}}=6$ in $\hat{\mathbb{M}}$, the curvature spinors do not vanish in $\hat{\mathbb{M}}$. For convenience, we consider the Gousat problem in the partial conformal compactification spacetime $\widetilde{{\mathbb{M}}}$ which has $\mathrm{Scal}_{\widetilde{g}}=0$:
\begin{align}\label{Goursat1}
\begin{cases}
\widetilde{\nabla}^{AA'} \widetilde{\phi}_{AB...F} &= 0,\cr
\widetilde{\phi}_n|_{\scri^+} &= \widetilde{\psi}_n \in \mathcal{C}_0^\infty(\scri^+, {\mathbb C}),\cr
\widetilde{\phi}_{AB...F}|_{\scri^+} = \widetilde{\psi}_{AB...F} \in \mathcal{D}_{\scri^+}
\end{cases}
\end{align}
here $\mathcal{D}_{\scri^+}$ is the constraint space on $\scri^+$.

We recall the expression of the massless equation $\widetilde{\nabla}^{AA'} \widetilde{\phi}_{AB...F} = 0$ in $\widetilde{{\mathbb M}}$ (see Equation \eqref{dirac-part}):
\begin{align*}
\begin{cases}
-\frac{1}{\sqrt 2}\partial_R \widetilde{\phi}_k - \frac{1}{\sqrt 2}\left( \partial_\theta - \frac{i}{\sin\theta}\partial_\varphi + (n-2k+2)\frac{\cot\theta}{2}  \right) \widetilde{\phi}_{k-1} &= 0,\cr
\left(\sqrt 2 \partial_u + \frac{R^2}{\sqrt 2}\partial_R - (n-2k)\frac{R}{\sqrt 2} \right) \widetilde{\phi}_k - \frac{1}{\sqrt 2}\left( \partial_\theta + \frac{i}{\sin\theta}\partial_\varphi  - (n-2k-2)\frac{\cot\theta}{2}  \right) \widetilde{\phi}_{k+1} &= 0, 
\end{cases}
\end{align*}
where $k=1,2...n$ in the first equation and $k=0,1...n-1$ in the second one. Since the constraint system on $\scri^+$ is the projection of the equation $\widetilde{\nabla}^{AA'}\widetilde{{\phi}}_{AB...F} = 0$ on the null normal vector $\widetilde{n}^a$, the constraint on the null infinity hypersurface $\scri^+$ is that of the second equation of the system above on $\scri^+$:
\begin{equation*}
\sqrt 2 \partial_u \widetilde{\phi}_k|_{\scri^+} - \frac{1}{\sqrt 2}\left( \partial_\theta + \frac{i}{\sin\theta}\partial_\varphi  - (n-2k-2)\frac{\cot\theta}{2}  \right) \widetilde{\phi}_{k+1}|_{\scri^+} = 0.
\end{equation*}
Therefore on $\scri^+$, we have
\begin{equation*}
\widetilde{\phi}_k|_{\scri^+}(u) = \widetilde{\phi}_k|_{\scri^+}(-\infty) + \frac{1}{2}\int^{u}_{-\infty}\left( \partial_\theta + \frac{i}{\sin\theta}\partial_\varphi  - (n-2k-2)\frac{\cot\theta}{2}  \right)\widetilde{\phi}_{k+1}|_{\scri^+}(s) \d s,  
\end{equation*}
where $k=0,1...n-1$. This shows that from the initial data $\widetilde{\psi}_n \in \mathcal{C}^\infty_0({\scri^+}, \mathbb C)$ (its support away from $i^+$ and $i^0$), we can find the other components to obtain the full spinor field $\widetilde{\psi}_{AB...F}:=\widetilde{\phi}_{AB...F}|_{\scri^+}$. And we can think that its support is far away from $i^+$.

Let ${\mathcal {S}}$ be a spacelike hypersurface in $\widetilde{{\mathbb M}}$ such that it crosses $\scri^+$ strictly in the past of the support of the initial data $\widetilde{\psi}_n$. We denote the point of intersection of $\mathcal{S}$ and $\scri^+$ by $Q$. 

\begin{figure}[H]
\begin{center}
\includegraphics[scale=0.5]{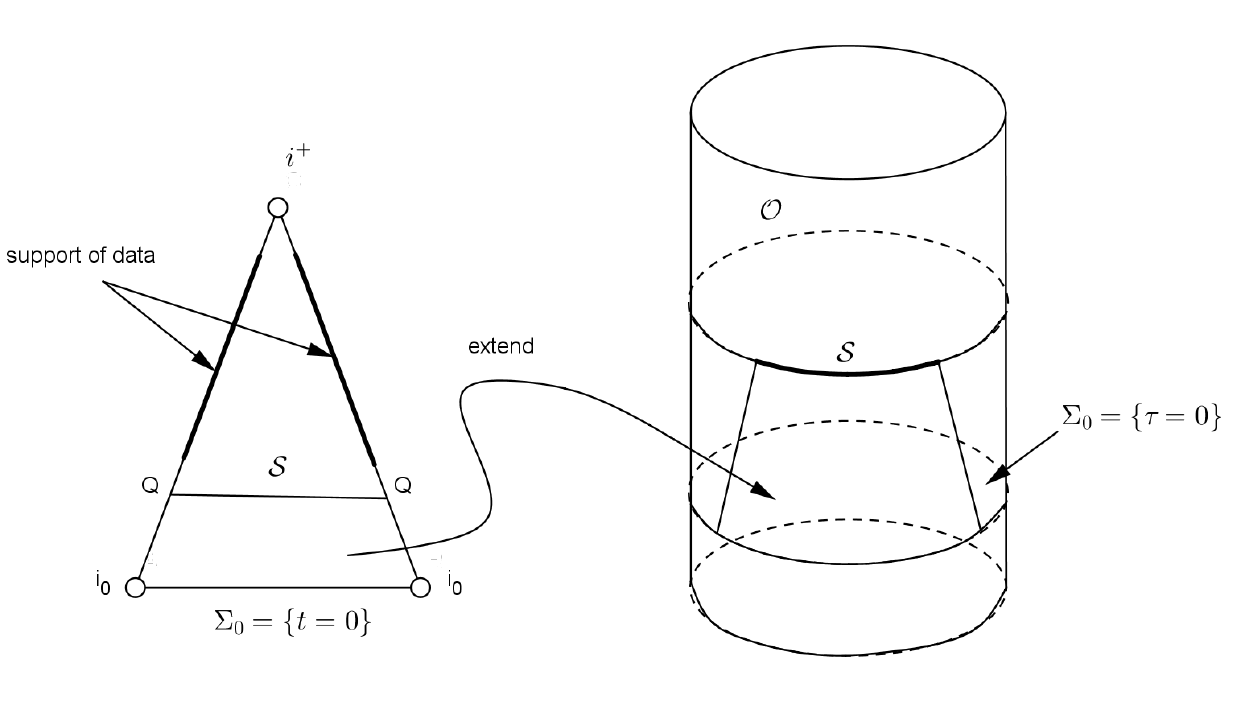}
\caption{Embedding the domain $\mathcal{I}^-(\mathcal{S})$ into Einstein's cylinder.}
\end{center}
\end{figure}

The Goursat problem will be established by the following two steps:

{\bf Step one:} We establish the wellposedness of the Goursat problem in the future $\mathcal {I}^+(\mathcal{S})$ of ${\mathcal S}$. On the partial compactification we have (see equation \eqref{app_3_wave0} in Appendix \ref{app_3_Commutator})
$$2\widetilde{\nabla}_{ZA'}\widetilde{\nabla}^{AA'} \widetilde{\phi}_{AB...F} = \widetilde{{\Box}}\widetilde{\phi}_{ZB...F} = 0 .$$
Therefore, the Goursat problem on the future $\mathcal{I}^+(\mathcal{S})$ has a problem consequence as follows
\begin{align}\label{Goursat2}
\begin{cases}
\widetilde{\Box} \widetilde{\phi}_{AB...F} &= 0,\cr
\widetilde{\phi}_{AB...F}|_{\scri^{+,Q}} &= \widetilde{\psi}_{AB...F}|_{\scri^{+,Q}} \in \mathcal{C}^\infty_0(\scri^{+,Q}, {\mathbb S}_{(AB...F)}),\cr
\widetilde{\nabla}^{AA'}\widetilde{\phi}_{AB...F}|_{\scri^{+,Q}} & = 0.
\end{cases}
\end{align}
where $\scri^{+,Q}$ is the future part of $Q$ in the null infinity hypersurface $\scri^+$. Here, we apply the generalisation of H\"ormander's result (see Appendix \ref{app_3_Goursat}), the system \eqref{Goursat2} has a unique solution.

Now we show that the solution of \eqref{Goursat2} is also a solution of the system \eqref{Goursat1} by proving that $\widetilde{\nabla}^{AA'} \widetilde{\psi}_{AB...F} = 0$ and using again the generalisation of H\"ormander's result. First, the components of $\widetilde{\nabla}^{AA'} \widetilde{\psi}_{AB...F}$ i.e the restrictions of the components of $\widetilde{\nabla}^{AA'} \widetilde{\phi}_{AB...F}$ on the hypersurface $\scri^{+,Q}$ are both zero. Indeed, if we set
$$\Xi^{A'}{_{B...F}}:= \widetilde{\nabla}^{AA'} \widetilde{\phi}_{AB...F},$$
then $\Xi^{A'}{_{B...F}}$ is symmetric in the indicies $B...F$, and we have
$$\Xi^{1'}{_{B...F}}|_{\scri^{+,Q}} = \widetilde{\iota}_{A'}\Xi^{A'}{_{B...F}}|_{\scri^{+,Q}} = \widetilde{\nabla}^{AA'} \widetilde{\phi}_{AB...F}|_{\scri^{+,Q}} = 0,$$
hence all the components of $\Xi^{1'}{_{B...F}}|_{\scri^{+,Q}}$ on $\scri^{+,Q}$ are zero. For the components of $\Xi^{0'}{_{B...F}}|_{\scri^{+,Q}}$, by the equation 
$$\widetilde{\Box} \widetilde{\phi}_{AB...F} = \frac{1}{2} \widetilde{\nabla}_{AK'}\widetilde{\nabla}^{KK'} \widetilde{\phi}_{KB...F} = \frac{1}{2} \widetilde{\nabla}_{AK'}\Xi^{K'}{_{B...F}} = \frac{1}{2} \Theta_{AB...F} =  0.$$
We have
$$\Theta_{1B...F} = \Theta_{0B...F} = 0,$$
where $\Theta_{1B...F}$ and $\Theta_{0B...F}$ are obtained by the differential equations which are of order one in the components of $\Xi^{1'}{_{B...F}}$ and $\Xi^{0'}{_{B...F}}$ (for detail see Appendix \ref{app_3_express}). Taking the constraint of these equations on $\scri^{+,Q}$ we obtain the restrictive equations of the components of $\Xi^{1'}{_{B...F}}$ and $\Xi^{0'}{_{B...F}}$ on $\scri^{+,Q}$. Since all the components of $\Xi^{1'}{_{B...F}}$ are zero on $\scri^{+,Q}$, we can obtain the Cauchy problem of the system of differential equations of order one, where the unknowns are only the restrictions of the components of $\Xi^{0'}{_{B...F}}$ on $\scri^{+,Q}$:
\begin{align}
\begin{cases}
\Theta_{1B...F}|_{\scri^+} &= 0,\cr
\Xi^{0'}{_{B...F}}|_{{\mathcal V}(P)} &= 0 
\end{cases}
\end{align}
where ${\mathcal V}(P)$ is the neighborhood of the point $P$ chosen to belong to $\scri^{+,Q}$, near $i^+$ and not belonging to the support of $\widetilde{\psi}_{AB...F}$. Since the Cauchy problem has a unique solution, we have the components of $\Xi^{0'}{_{B...F}}|_{\scri^{+,Q}}$ are also both zero. Therefore we have that the restrictions of the components of $\widetilde{\nabla}^{AA'} \widetilde{\phi}_{AB...F}$ on $\scri^{+,Q}$ are both zero (see Appendix \ref{app_3_express}).

Now we have (see Equation \eqref{app_3_wave1} in Appendix)
$$0 = \widetilde{\nabla}^{AA'}\widetilde{\Box}\widetilde{\phi}_{AB...F} = \frac{1}{2} \widetilde{\nabla}^{AA'}\widetilde{\nabla}_{AK'}\Xi^{K'}{_{B...F}} = \frac{1}{4} \widetilde{\Box}\Xi^{A'}{_{B...F}} + \frac{1}{2} \widetilde{\Box}^{A'}{_{K'}} \Xi^{K'}{_{B...F}},$$
raising the indicies $B...F$, we obtain the system
\begin{align}
\begin{cases}
\widetilde{\Box}\Xi^{A'B...F} + \widetilde{\Box}^{A'}{_{K'}} \Xi^{A'B...F}&= 0,\cr
\mbox{The restrictions of all the components of} \; \Xi^{A'B...F} \; \mbox{on} \; \scri^{+,Q} &= 0 
\end{cases}
\end{align}
with
$$\widetilde{\Box}^{A'}{_{K'}} \Xi^{K'B...F}= \widetilde{{\bar X}}^{A'}{_{K'Q'}}{^{K'}} \Xi^{Q'B...F} + \widetilde{\Phi}^{A'}{_{K'Q}}{^B} \Xi^{K'Q...F} + ... + \widetilde{\Phi}^{A'}{_{K'Q}}{^F} \Xi^{K'B...Q},$$
where $\widetilde{X}_{ABCD}$ and $\widetilde{\Phi}_{ABC'D'}$ are the curvature spinor. We have the rescaled scalar curvature $\mathrm{Scal}_{\tilde{g}} = \widetilde{\Lambda} = 0$, the Weyl spinor is conformal invariant $\widetilde{\Psi}_{ABCD} = \Psi_{ABCD}$ (see Appendix \ref{app_1}) and $\Psi_{ABCD} = 0$ in Minkowski spacetime $\mathbb{M}$. Therefore
$$\widetilde{X}_{ABCD} = \widetilde{\Psi}_{ABCD} + \widetilde{\Lambda} (\widetilde{\varepsilon}_{AC}\widetilde{\varepsilon}_{BD} + \widetilde{\varepsilon}_{AD}\widetilde{\varepsilon}_{BC})= {\Psi}_{ABCD} + \widetilde{\Lambda} (\widetilde{\varepsilon}_{AC}\widetilde{\varepsilon}_{BD} + \widetilde{\varepsilon}_{AD}\widetilde{\varepsilon}_{BC}) = 0.$$
The components of $\widetilde{\Phi}_{ABC'D'}$ are $\mathcal {C}^{\infty}$ due to the following formula (see \cite[Lemma A.1]{MaNi2012} for the generalized formula in Schwarzschild spacetime)
$$\widetilde{\Phi}_{ab} \d x^a \d x^b = \frac{1}{2} \left( R^2 \d u^2 - 2 \d u \d R + \d \omega^2 \right).$$
Using the generalisation of H\"ormander's result (see Appendix \ref{app_3_Goursat}), we get $\Xi^{A'B...F} = 0$ and then $\Xi^{A'}{_{B...F}} = \widetilde{\nabla}^{AA'} \widetilde{\phi}_{AB...F} = 0$. So the solution of the system \eqref{Goursat2} is a solution of the system \eqref{Goursat1}. For convenience, we denote by $\widetilde{\phi}^1_{AB...F}$ the solution of this step.

{\bf Step two:} We need to extend the solution of the Goursat problem on future $\mathcal{I}^+(\mathcal{S})$ down to $\Sigma_0$. This is equivalent to prove the wellposedness of the Cauchy problem in the past $\mathcal{I}^-(\mathcal{S})$ of $\mathcal{S}$: 
\begin{align}\label{Cauchy2}
\begin{cases}
\widetilde{\nabla}^{AA'} \widetilde{\phi}_{AB...F} &= 0,\cr
\widetilde{\phi}_{AB...F}|_{\mathcal{S}} &= \widetilde{\phi}^1_{AB...F}|_{\mathcal{S}}.
\end{cases}
\end{align}

Since the conformal transformations, the domain $\mathcal{I}^-(\mathcal{S})$ can be embedded into Einstein's cylinder. We extend $\mathcal{S}$ to the spacelike hypersurface $\mathcal{O}$ and the initial data $\widetilde{\phi}_{AB...F}|_{\mathcal{S}}$ is zero in the rest of the support. Now we can consider the equivalent Cauchy problem
\begin{align}\label{Cauchy3}
\begin{cases}
{\hnabla}^{AA'} {\hat\phi}_{AB...F} &= 0,\cr
{\hat\phi}_{AB...F}|_{\mathcal{S}} &= {\hat \phi}^1_{AB...F}|_{\mathcal{S}},\cr
{\hat\phi}_{AB...F}|_{\mathcal{O}/\mathcal{S}} &= 0.
\end{cases}
\end{align}
As a consequence of Theorem \ref{cauchyproblem}, this Cauchy problem is well-posed, we denote its solution by $\hat{\phi}^2_{AB...F}$ and the solution of this step by $\widetilde{\phi}^2_{AB...F}$. Clearly, we can obtain by using the divergence theorem that
\begin{equation}
\hat{\mathcal{E}}_{\mathcal{S}}(\hat\phi_{AB...F}) = \hat{\mathcal{E}}_{\Sigma_0}(\hat\phi_{AB...F}),
\end{equation}
and since the energy  of the spin-$n/2$ zero rest-mass field is invariant under the conformal transformations, we have
$$\widetilde{{\mathcal{E}}}_{\mathcal{S}}(\widetilde{{\phi}}_{AB...F}) = \widetilde{{\mathcal{E}}}_{\Sigma_0}(\widetilde{\phi}_{AB...F}).$$
Using the energy equality (see Theorem \ref{egalite_energies} and Corollary \ref{equality_energy_2}),  we obtain that
$$\widetilde{{\mathcal{E}}}_{\scri^{+,Q}}(\widetilde{{\phi}}_{AB...F})= \widetilde{{\mathcal{E}}}_{\mathcal{S}}(\widetilde{{\phi}}_{AB...F}) = \widetilde{{\mathcal{E}}}_{\Sigma_0}(\widetilde{\phi}_{AB...F}).$$
Therefore the energy of the solution through the hypersurface $\Sigma_0$ is finite and we can define the trace operator on $\Sigma_0$ as the constraint of solution of Cauchy problem \eqref{Cauchy2} on $\Sigma_0$.

Finally, the solution of the Goursat problem is the union of the solutions of two step above
$$\widetilde{\phi}_{AB...F}= \begin{cases}\widetilde{\phi}^1_{AB...F} \mbox{ in the domain $\mathcal{I}^+(S)$},\\ 
\widetilde{\phi}^2_{AB...F}\mbox{ in the domain $\mathcal{I}^-(S)$}.
\end{cases}$$

We summarize everything that we have just done by the following theorem
\begin{theorem}(Goursat problem)\label{Goursatprob}
The Goursat problem for the rescaled massless equation $\widetilde{\nabla}^{AA'}\widetilde{\phi}_{AB...F} = 0$ in $\widetilde{{\mathbb M}}$ is well-posed i.e for any $\widetilde{\psi}_n \in \mathcal{C}_0^\infty(\scri^+)$ and $\widetilde{\psi}_{AB...F}\in \mathcal{D}_{\scri^+}$ there exists a unique $\widetilde{\phi}_{AB...F}$ solution of $\widetilde{\nabla}^{AA'}\widetilde{\phi}_{AB...F} = 0$ such that   
$$\widetilde{\phi}_{AB...F} \in {\mathcal C}^\infty(\widetilde{{\mathbb{M}}},\mathbb{S}_{(AB...F)}) \; ; \; \widetilde{\phi}_n|_{\scri^+} = \widetilde{\psi}_n \, \mbox{and} \, \widetilde{\phi}_{AB...F}|_{\scri^+} = \widetilde{\psi}_{AB...F}.$$
Furthermore, the energy norm of the constraint of the solution $\widetilde{\phi}_{AB...F}|_{\Sigma_0}$ on $\Sigma_0$ is finite.
\end{theorem}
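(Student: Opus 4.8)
The plan is to solve the Goursat problem in the partial conformal compactification $\widetilde{\mathbb{M}}$ in two stages, separated by an auxiliary spacelike hypersurface $\mathcal{S}$ chosen so that it crosses $\scri^+$ at a point $Q$ lying strictly in the past of the support of the datum $\widetilde{\psi}_n$. Before that, I would complete the characteristic data on $\scri^+$: projecting $\widetilde{\nabla}^{AA'}\widetilde{\phi}_{AB...F}=0$ onto the null normal $\widetilde{n}^a$ yields a transport equation along $\scri^+$, so integrating in the retarded variable $u$ from $-\infty$ reconstructs every component $\widetilde{\phi}_k|_{\scri^+}$ out of $\widetilde{\psi}_n$, producing the full characteristic spinor $\widetilde{\psi}_{AB...F}\in\mathcal{D}_{\scri^+}$ whose support stays away from $i^+$.

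For the future region $\mathcal{I}^+(\mathcal{S})$, the idea is to pass from the first-order massless system to a second-order wave system. By the commutation identity $2\widetilde{\nabla}_{ZA'}\widetilde{\nabla}^{AA'}\widetilde{\phi}_{AB...F}=\widetilde{\Box}\widetilde{\phi}_{ZB...F}$, any massless field solves $\widetilde{\Box}\widetilde{\phi}_{AB...F}=0$, and I would invoke the generalized Hörmander result of the appendix to obtain the unique solution of the Goursat problem for $\widetilde{\Box}$ carrying the datum $\widetilde{\psi}_{AB...F}$ on $\scri^{+,Q}$ with vanishing normal derivative. The delicate point is to check that this wave solution genuinely satisfies the first-order equation. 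Setting $\Xi^{A'}{_{B...F}}:=\widetilde{\nabla}^{AA'}\widetilde{\phi}_{AB...F}$, the $\Xi^{1'}$ components vanish on $\scri^{+,Q}$ directly from the characteristic constraint, while the $\Xi^{0'}$ components vanish by solving an auxiliary first-order Cauchy problem transported along $\scri^{+,Q}$ with trivial data near $i^+$. Since $\Xi^{A'}{_{B...F}}$ itself obeys a wave system whose zero-order coefficients are built from the curvature spinors, and since $\Psi_{ABCD}=0$ in Minkowski while $\widetilde{\Phi}_{ABC'D'}$ is smooth, a second application of Hörmander forces $\Xi\equiv0$; thus the wave solution is a true massless field $\widetilde{\phi}^1_{AB...F}$, and its uniqueness descends from the uniqueness of the underlying wave Goursat problem.

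For the past region $\mathcal{I}^-(\mathcal{S})$, the plan is to extend $\widetilde{\phi}^1_{AB...F}$ down to $\Sigma_0$ by solving the Cauchy problem with data $\widetilde{\phi}^1_{AB...F}|_{\mathcal{S}}$. Embedding $\mathcal{I}^-(\mathcal{S})$ into the Einstein cylinder through the full conformal map, where $i^\pm$ and $i_0$ become smooth finite points, I would extend $\mathcal{S}$ to a global spacelike hypersurface $\mathcal{O}$, extend the data by zero, and apply Theorem \ref{cauchyproblem} to get a unique smooth solution $\widetilde{\phi}^2_{AB...F}$. The Goursat solution is then the union of $\widetilde{\phi}^1$ on $\mathcal{I}^+(\mathcal{S})$ and $\widetilde{\phi}^2$ on $\mathcal{I}^-(\mathcal{S})$, which is smooth across $\mathcal{S}$ by construction. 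Finiteness of the energy on $\Sigma_0$ follows from conformal invariance of the energy together with the chain of equalities $\widetilde{\mathcal{E}}_{\scri^{+,Q}}=\widetilde{\mathcal{E}}_{\mathcal{S}}=\widetilde{\mathcal{E}}_{\Sigma_0}$ furnished by Theorem \ref{egalite_energies} and Corollary \ref{equality_energy_2}.

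The hardest part will be the constraint-propagation argument of Step one: showing that the solution of the decoupled wave system actually solves the original first-order spinor system. This is precisely where the Minkowski geometry is indispensable, since the argument hinges on the vanishing of the Weyl spinor and the smoothness of the Ricci spinor $\widetilde{\Phi}_{ABC'D'}$, which together guarantee that $\Xi^{A'}{_{B...F}}$ satisfies a well-behaved wave system to which Hörmander's uniqueness can be applied.
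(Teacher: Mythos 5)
Your proposal follows essentially the same route as the paper's proof: completing the characteristic data by integrating the transport equation along $\scri^+$, reducing to the wave equation $\widetilde{\Box}\widetilde{\phi}_{AB...F}=0$ and applying the generalized H\"ormander result in $\mathcal{I}^+(\mathcal{S})$, propagating the constraint via the vanishing of $\Xi^{1'}$ and $\Xi^{0'}$ on $\scri^{+,Q}$ and a second application of H\"ormander (using $\Psi_{ABCD}=0$ and smoothness of $\widetilde{\Phi}_{ABC'D'}$), then extending to $\Sigma_0$ through the Einstein-cylinder Cauchy problem and concluding finiteness of the energy from the chain $\widetilde{\mathcal{E}}_{\scri^{+,Q}}=\widetilde{\mathcal{E}}_{\mathcal{S}}=\widetilde{\mathcal{E}}_{\Sigma_0}$. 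This matches the paper's two-step argument in both structure and the key technical points, so your proposal is correct.
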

As a direct consequence of the wellposedness of the Goursat problem the extension of trace operator ${\mathcal T}^+$ given in Corollary \ref{trace-one-one} is surjective. Combining with Corollary \ref{trace-one-one} we have that the extension of trace operator is an isometric operator from ${\mathcal H}_0 = L^2(\Sigma_0,\mathbb{S}_{(AB...F)})$ onto ${\mathcal H}^+ = L^2(\scri^+, \mathbb C)$. Therefore, we obtain a conformal scattering operator of the generalized spin-$n/2$ zero rest-mass equation which is an isometric operator in the following corollary
\begin{cor}\label{Con}
We define similarly ${\mathcal T}^+$ the isometric operators 
$${\mathcal T}^-: {\mathcal H}_0 = L^2(\Sigma_0,\mathbb{S}_{(AB...F)}) \longrightarrow {\mathcal H}^-=L^2(\scri^-, \mathbb C),$$
given by
$${\mathcal T}^-(\widetilde{\phi}_{AB...F}|_{\Sigma_0}) = \widetilde{\phi}_0|_{\scri^-},$$
where the definition of ${\mathcal H}^-=L^2(\scri^-, \mathbb C)$ is the same $\mathcal H^+$.
Then the scattering operator ${\mathcal W}: = {\mathcal T}^+\circ({\mathcal T}^{-})^{-1}$ is an isometric operator that associates the past scattering ${\mathcal H}^-$ data to the future scattering ${\mathcal H}^+$ data.
\end{cor}
\begin{proof}
The proof is evidently due to the isometric property of $\mathcal{T}^\pm$.
\end{proof}

\section{Appendix}\label{A}

\subsection{Curvature spinors}\label{app_1}
Let $({\cal M},g)$ be a spacetime which has a spin structure and is equipped with the Levi-Civitta connection, the Riemann tensor $R_{abcd}$ can be decomposed as follows (see Penrose and Rindler \cite[eq. (4.6.1), pp. 231, Vol. 1]{PeRi}): 
\begin{equation}\label{decomposition_Riemann}
 R_{abcd} = X_{ABCD} \, \varepsilon_{A'B'} \varepsilon_{C'D'} + \Phi_{ABC'D'} \, \varepsilon_{A'B'} \varepsilon_{CD} + \bar{\Phi}_{A'B'CD} \, \varepsilon_{AB} \varepsilon_{C'D'} + \bar{X}_{A'B'C'D'} \, \varepsilon_{AB} \varepsilon_{CD},
\end{equation} 
where $X_{ABCD}$ is a complete contraction of the Riemann tensor in its primed spinor indices
$$ X_{ABCD} = \frac{1}{4} R_{abcd} {\varepsilon}^{A'B'} {\varepsilon}^{C'D'},$$
and ${\Phi}_{ab} = {\Phi}_{(ab)}$ is the trace-free part of the Ricci tensor multiplied by $-1/2$~:
$$2{\Phi}_{ab} = 6 {\Lambda} {g}_{ab} - {R}_{ab} \, ,~ {\Lambda} = \frac{1}{24} \mathrm{Scal}_{g}.$$

We set
$$P_{ab} = \Phi_{ab} - \Lambda g_{ab},$$
$$ X_{ABCD} = \Psi_{ABCD} + \Lambda \left( \varepsilon_{AC} \varepsilon_{BD} + \varepsilon_{AD} \varepsilon_{BC} \right),\, \Psi_{ABCD} = X_{(ABCD)} = X_{A(BCD)}.$$

Under a conformal rescaling $\hat g=\Omega^2 g$ we have (see Penrose and Rindler \cite[pp. 123, Vol. 2]{PeRi}):
\begin{gather*}
\hat{\Psi}_{ABCD} = \Psi_{ABCD}, \cr
\hat{\Lambda} = \Omega^{-2} \Lambda + \frac{1}{4} \Omega^{-3} \square \Omega, ~\square = \nabla^a \nabla_a, \cr
\hat{P}_{ab} = P_{ab} - \nabla_b \Upsilon_a  + \Upsilon_{AB'} \Upsilon_{BA'},~ \mbox{with } \Upsilon_a = \Omega^{-1} \nabla_a \Omega = \nabla_a \log \Omega.
\end{gather*}

\subsection{Spinor form of commutators}\label{app_3_Commutator}
In this section, we will give the spinor form of the commutators $\Delta^{ab} = \nabla^{[a}\nabla^{b]}$ (see \cite[pp. 242-244, Vol. 1]{PeRi} for the spinor form of $\Delta_{ab} = \nabla_{[a}\nabla_{b]}$) and establish some formulas used to prove the wellposedness of Cauchy and Goursat problems.

Since the anti-symmetric property of $\Delta^{ab}$, we have
$$\Delta^{ab} = 2\nabla^{[a}\nabla^{b]} = \varepsilon^{A'B'} \Box^{AB} + \varepsilon^{AB} \Box^{A'B'},$$
where
$$\Box^{AB} = \nabla^{X'(A}\nabla^{B)}{_{X'}}, \; \Box^{A'B'} = \nabla^{X(A'}\nabla^{B')}{_{X}}.$$
Now we have
$$\Delta^{ab} = g^{ac}g^{bd} \Delta_{cd},$$
and $\Delta_{ab}$ acts on the spinor form $\kappa^C $ as
$$\Delta_{ab}\kappa^C = \left\{ \varepsilon_{A'B'} X_{ABE}{^C} + \varepsilon_{AB} \Phi_{A'B'E}{^{C}} \right\} \kappa^E,$$
where $X_{ABCD}$ and $\Phi_{ABC'D'}$ are the curvature spinors in the expression of the Riemann tensor $R_{abcd}$~:
$$R_{abcd} = X_{ABCD} \, \varepsilon_{A'B'} \varepsilon_{C'D'} + \Phi_{ABC'D'} \, \varepsilon_{A'B'} \varepsilon_{CD} + \bar{\Phi}_{A'B'CD} \, \varepsilon_{AB} \varepsilon_{C'D'} + \bar{X}_{A'B'C'D'} \, \varepsilon_{AB} \varepsilon_{CD}.$$
Hence, we obtain
$$\Delta^{ab} \kappa^C = \varepsilon^{AC}\varepsilon^{A'C'} \varepsilon^{BD} \varepsilon^{B'D'}\Delta_{cd}\kappa^C = \left\{ \varepsilon^{A'B'} X^{AB}{_E}{^C} + \varepsilon^{AB} \Phi^{A'B'}{_E}{^{C}} \right\} \kappa^E.$$
By symmetrizing and skew-symmetrizing over $AB$, we get
$$\Box^{AB} \kappa^C =  X^{AB}{_E}{^C}  \kappa^E, \; \Box^{A'B'}\kappa^{C} =  \Phi^{A'B'}{_E}{^{C}}\kappa^E. $$
Similarly, we can obtain the formula of the primed spin-vectors
$$\Delta^{ab} \tau^{C '}= \left\{ \varepsilon^{AB} {\bar X}^{A'B'}{_{E'}}{^{C'}} + \varepsilon^{A'B'} \Phi^{AB}{_{E'}}{^{C'}} \right\} \tau{^{E'}},$$
$$\Box^{AB} \tau^{C'} =  \Phi^{AB}{_{E'}}{^{C'}}  \tau^{E'}, \; \Box^{A'B'}\tau^{C'} =  {\bar X}^{A'B'}{_{E'}}{^{C'}}\tau^{E'}. $$
Lowering the index $C$ (or $C'$), we also get
$$\Box^{AB} \kappa_C =   X^{ABE}{_C}  \kappa_E, \; \Box^{A'B'}\kappa_{C} =  \Phi^{A'B'E}{_{C}}\kappa^E, $$
$$\Box^{AB} \tau_{C'} =  \Phi^{AB{E'}}{_{C'}}  \tau_{E'}, \; \Box^{A'B'}\tau_{C'} =  {\bar X}^{A'B'{E'}}{_{C'}}\tau_{E'}. $$
For the actions of $\Box^{AB}$ and $\Box^{A'B'}$ on higher spin fields, we expand them by a sum of outer products of spin vectors and use the properties above. 

Now we establish the formulas which were used in the proofs of the Cauchy and Goursat problems. Frist, for the formulas in the Goursat problem we have
\begin{eqnarray}\label{app_3_wave0}
\widetilde{\nabla}_{ZA'}\widetilde{\nabla}^{AA'} \widetilde{\phi}_{AB...F} &=& \widetilde{\varepsilon}^{AM}\widetilde{\nabla}_{ZA'}\widetilde{\nabla}_M^{A'} \widetilde{\phi}_{AB...F} = \widetilde{\varepsilon}^{AM} \left( \widetilde{\nabla}_{A'[Z}\widetilde{\nabla}{_{M]}}{^{A'}} + \widetilde{\nabla}_{A'(Z}\widetilde{\nabla}{_{M)}}{^{A'}}  \right) \widetilde{\phi}_{AB...F} \cr
&=& \widetilde{\varepsilon}^{AM} \left( \frac{1}{2}\widetilde{\varepsilon}_{ZM} \widetilde{\Box} + \widetilde{\Box}_{ZM}  \right) \widetilde{\phi}_{AB...F} = \frac{1}{2} \widetilde{\varepsilon}_Z{^A}\widetilde{\Box} \widetilde{\phi}_{AB...F} + \widetilde{\Box}_{Z}{^A} \widetilde{\phi}_{AB...F} \cr
&=& \frac{1}{2}\widetilde{\Box} \widetilde{\phi}_{ZB...F} + \widetilde{X}_{ZA}{^{NA}} \widetilde{\phi}_{NB...F} - \widetilde{X}_Z {^A}{_B}{^N}\widetilde{\phi}_{AN...F} - ...\cr
&&- \widetilde{X}_Z {^A}{_F}{^N}\widetilde{\phi}_{AB...N} \cr
&=& \frac{1}{2}\widetilde{\Box}\widetilde{\phi}_{ZB...F},
\end{eqnarray}
due to the curvature spinors $\widetilde{X}_{ABCD}$ vanish in $\widetilde{\mathbb{M}}$:
$$\widetilde{X}_{ABCD} = \widetilde{\Psi}_{ABCD} + \widetilde{\Lambda} (\widetilde{\varepsilon}_{AC} \widetilde{\varepsilon}_{BD} + \widetilde{\varepsilon}_{AD}\widetilde{\varepsilon}_{BC}) = \Psi_{ABCD}+ \widetilde{\Lambda} (\widetilde{\varepsilon}_{AC} \widetilde{\varepsilon}_{BD} + \widetilde{\varepsilon}_{AD}\widetilde{\varepsilon}_{BC}) = 0.$$
We have also
\begin{eqnarray}\label{app_3_wave1}
&&\widetilde{\nabla}^{AA'}\widetilde{\nabla}_{AK'}\Xi^{K'}{_{B...F}} = - \widetilde{\varepsilon}_{K'M'}\widetilde{\nabla}^{AA'}\widetilde{\nabla}^{M'}_{A}\Xi^{K'}{_{B...F}} \cr
&=& - \widetilde{\varepsilon}_{K'M'} \left( \widetilde{\nabla}^{A[A'}\widetilde{\nabla}^{M']}_{A} + \widetilde{\nabla}^{A(A'}\widetilde{\nabla}^{M')}_{A} \right)\Xi^{K'}{_{B...F}} \cr
&=& - \widetilde{\varepsilon}_{K'M'} \left(  \frac{1}{2} \widetilde{\varepsilon}^{A'M'} \widetilde{\Box} + \widetilde{\Box}^{A'M'} \right)\Xi^{K'}{_{B...F}} \cr
&=& \frac{1}{2}\widetilde{\varepsilon}^{A'} {_{K'}}\widetilde{\Box}\Xi^{K'}{_{B...F}} + \widetilde{\Box}^{A'}{_{K'}}\Xi^{K'}{_{B...F}} \cr
&=& \frac{1}{2}\widetilde{\Box}\Xi^{A'}{_{B...F}} + \bar{{\widetilde X}}^{A'}{_{K'Q'}}{^{K'}} \Xi^{Q'}{_{B...F}} + \widetilde{\Phi}^{A'}{_{K'}}{^Q}{_B} \Xi^{K'}{_{Q...F}} + ... + \widetilde{\Phi}^{A'}{_{K'}}{^Q}{_F} {\Xi}^{K'}{_{B...Q}} \cr
&=& \frac{1}{2}\widetilde{\Box}\Xi^{A'}{_{B...F}} + \widetilde{\Phi}^{A'}{_{K'}}{^Q}{_B} \Xi^{K'}{_{Q...F}} + ... + \widetilde{\Phi}^{A'}{_{K'}}{^Q}{_F} {\Xi}^{K'}{_{B...Q}}.
\end{eqnarray}

The formula in the proof of the Cauchy problem is
\begin{eqnarray}\label{app_3_wave2}
\hnabla^{AA'}\hnabla_{A'}^Z {\hat\phi}_{ZAC...F} &=& \hnabla^{A'(A}\hnabla_{A'}^{Z)} {\hat\phi}_{ZAC...F} \cr
&=& \hat{\Box}^{AZ}\hat{\phi}_{AZC...F} \cr
&=& - \hat{X}^{AZM}{_A}\hat{\phi}_{MZC...F} - \hat{X}^{AZM}{_Z}\hat{\phi}_{AMC...F} - \hat{X}^{AZM}{_C}\hat{\phi}_{AZM...F}-... - \cr
&&-\hat{X}^{AZM}{_F}\hat{\phi}_{AZC...M}\cr
&=& -(n-1)\hat{\phi}_{AZM(C...K}\hat{\Psi}_{F)}{^{AZM}},
\end{eqnarray}
due to $\hat{X}^{A(ZM)}{_A} = 0$ and $\hat{X}^{(AZM)}{_C} = \hat{\Psi}^{AZM}{_C}$.

Note that if we define the wave operator by using the spinor form as follows
\begin{equation}\label{WaveOperator1}
\Box = \varepsilon^{MN}\varepsilon_{M'N'}\nabla_M^{M'}\nabla_N^{N'} = \nabla_a\nabla^a,
\end{equation}
then we can obtain
\begin{eqnarray}\label{WaveOperator2}
\Box &=& \varepsilon^{MN}\nabla_{N'M}\nabla_N^{N'} = \varepsilon^{MN} \left( \nabla_{N'[M}\nabla_{N]}^{N'} + \nabla_{N'(M}\nabla_{N)}^{N'} \right) \cr
&=& \varepsilon^{MN} \left( \frac{1}{2}\varepsilon_{MN}\check{\Box} + \check{\Box}_{MN} \right) \cr
&=& \check{\Box} - \check{\Box}_{M}^M.
\end{eqnarray}
Similarly
\begin{equation}\label{WaveOperator3}
\Box =  \varepsilon_{M'N'} \left( - \frac{1}{2}\varepsilon^{M'N'}{\breve{\Box}} + {\breve{\Box}}^{M'N'} \right)  = -{\breve{\Box}} + {\breve{\Box}}^{M'}_{M'}.
\end{equation}
\begin{remark}
We notice that the spin-wave operators $\Box$, $\check{\Box}$ and $\breve{\Box}$ (which act on the full spinor field $\phi_{AB...F}$) and the scalar wave operator $\Box_g$ (which acts on the scalar spin components $(\phi_0,\phi_1,\phi_2...\phi_n)$ of the spinor field $\phi_{AB...F}$), that are of the same operator module the derivation terms of the order less than or equal one. The module's terms are regular and can be calculated by using the derivations of spin-frame \eqref{Modu1} and \eqref{Modu2}.
\end{remark}

\subsection{The constraint system and non-trivial solutions}\label{app_3_existe_constraint}
In order to establish the constraint system of spin-$n/2$ zero rest-mass equation we recall notations and formulas of the intrinsic space spinor derivatives in Einstein's cylinder $(\mathfrak{C},\hat{g})$. We refer \cite[Section 2.1]{ABJ} for the formulas on Minkowski spacetime and \cite[Section 2.2, page 5]{MaNi2004} for the ones on asymptotic simple spacetimes. We consider $\mathcal{T}^a = \sqrt{2}\partial_\tau$ the future-oriented normal vector field to the foliation $\left\{\tau = constant \right\}_{\tau\in \mathbb{R}}$ of $\mathfrak{C}$. We have
$$\hat{g} = \frac{1}{2}N^2 \d \tau^2 - h,$$
where $N=\sqrt{2}$ and $h = \sigma_{S^3}^2 = \d \zeta^2 + \sin^2\zeta\d\omega^2$.

The connection $\nabla_a = \nabla_{AA'}$ can be decomposed along $\mathcal{T}^a$ and $\left(\mathcal{T}^a\right)^\perp$ as follows
\begin{equation}\label{decomeq}
\hnabla_{AA'} = \frac{1}{2}\mathcal{T}_{AA'}\hnabla_{\mathcal{T}} + D_{AA'} = \frac{1}{2}\mathcal{T}_{AA'}\hnabla_{\mathcal{T}} + \mathcal{T}_{A'}^B D_{AB},
\end{equation}
where $\hnabla_{\mathcal{T}}= \mathcal{T}^a\hnabla_a$ is the covariant derivative along $\mathcal{T}^a$
and $D_{AA'} = -{h_{AA'}}^{BB'}\hnabla_{BB'}$ is the part of $\hnabla_{AA'}$ orthogonal to $\mathcal{T}^a$, i.e, $\mathcal{T}^{AA'}D_{AA'}=0$.

By taking projection of \eqref{decomeq} on $\mathcal{T}^a$ we get 
$$D_{AB} = \mathcal{T}^{A'}_BD_{AA'} = \mathcal{T}^{A'}_{(B}\hnabla_{A)A'}$$
which does not contain the time derivative. We have also that
\begin{equation}\label{InDer}
D^{AB} = \mathcal{T}^B_{A'}D^{AA'} = \mathcal{T}^{(B}_{A'}\hnabla^{A)A'} = \mathcal{T}^{BA'}\hnabla^A_{A'} = \mathcal{T}^{AA'}\hnabla^B_{A'} 
\end{equation}
is the projection of $\hnabla^{AA'}$ on $\mathcal{T}^a$. The operators $D_{AB}$ and $D^{AB}$ are intrinsic space spinor derivatives on $\left\{ \tau = constant \right\}$.

The constraint equation of $\hnabla^{AA'}\hat{\phi}_{ABC...F} =0$ on $\left\{ \tau = constant \right\}$ that does not constain the time derivative. Hence, it is the projection of $\hnabla^{AA'}\hat{\phi}_{ABC...F} =0$ on $\mathcal{T}^a$ and thus is
\begin{equation}\label{constrain}
D^{AB}\left( \hat{\phi}_{ABC...F}|_{\tau=constant} \right) = \left(\mathcal{T}^{AA'}\hnabla^B_{A'}\hat{\phi}_{ABC...F} \right)|_{\tau=constant}=0.
\end{equation}
This equation corresponds to system \eqref{CONE} in the term of spin components. Indeed, we decompose $\hnabla^{AA'}\hat{\phi}_{ABC...F}=0$ on the rescaled spin-frame $\left\{\hat{o}_A,\hat{\iota}_A\right\}$ and its dual conjugation $\left\{\bar{\hat{o}}^{A'},\bar{\hat{\iota}}^{A'}\right\}$ to obtain that (see \cite[pp. 23-26, Vol. 2]{PeRi}):
\begin{eqnarray}\label{P}
0 &=& \hnabla^{AA'} \hat{\phi}_{ABC...F}\cr
&=& -\left(\hat{l}^a\hnabla_a {\hat\phi}_n - \bar{\hat{m}}^a\hnabla_a \hat{\phi}_{n-1} + n\hat{\varepsilon}\hat{\phi}_n - (n-2)\hat{\alpha}\hat{\phi}_{n-1}  \right.\cr
&&\left. \hspace{6cm}+ (n-1)\hat{\lambda}\hat{\phi}_{n-2} - n\hat{\pi}\hat{\phi}_{n-1}+ \hat{\rho}\hat{\phi}_n \right)\bar{\hat\iota}^{A'}\hat{o}_B \hat{o}_C...\hat{o}_F\cr
&&+(-1)^{n-k-1}\left[ \hat{n}^a\hnabla_a {\hat\phi}_k - \hat{m}^a\hnabla_a{\hat\phi}_{k+1} -(n-2k)\hat{\gamma}\hat{\phi}_k + (n-2k-2)\hat{\beta}\hat{\phi}_{k+1}\right.\cr
&&\left. - (n-k-1)\hat{\sigma}\hat{\phi}_{k+2} + (n-k)\hat{\tau}\hat{\phi}_{k+1} + (k+1)\hat{\mu}\hat{\phi}_k - k\hat{\nu}\hat{\phi}_{k-1} \right]\bar{\hat o}^{A'}\hat{o}_B \underbrace{\hat{o}_C...\hat{o}_M}_{k-1\, terms} \underbrace{\hat{\iota}_N...\hat{\iota}_F}_{n-k-1 \, terms}\cr
&&-(-1)^{n-k-1}\left[ \hat{l}^a\hnabla_a {\hat\phi}_k - \bar{\hat m}^a\hnabla_a\hat{\phi}_{k-1} - (n-2k)\hat{\varepsilon}\hat{\phi}_k + (n-2k+2)\hat{\alpha}\hat{\phi}_{k-1}\right.\cr
&&\left. + (k-1)\hat{\lambda}\hat{\phi}_{k-2} - k\hat{\pi}\hat{\phi}_{k-1} - (n-k+1)\hat{\rho}\hat{\phi}_k + (n-k)\hat{\kappa}\hat{\phi}_{k+1} \right] \bar{\hat \iota}^{A'}\hat{\iota}_B \underbrace{\hat{o}_C...\hat{o}_M}_{k-1\, terms}\underbrace{\hat{\iota}_N...\hat{\iota}_F}_{n-k-1 \, terms}\cr
&&+(-1)^n\left( \hat{n}^a\hnabla_a {\hat\phi}_0 - \hat{m}^a\hnabla_a {\hat\phi}_1 -n\hat{\gamma}\hat{\phi}_0 + (n-2)\hat{\beta}\hat{\phi}_1 \right.\cr
&&\left. \hspace{6cm}- (n-1)\hat{\sigma}\hat{\phi}_2 + n\hat{\tau}\hat{\phi}_1 + \hat{\mu}\hat{\phi}_0 \right) \bar{\hat o}^{A'}\hat{\iota}_B\hat{\iota}_C...\hat{\iota}_F\cr
&=& -\left(\hat{l}^a\hnabla_a {\hat\phi}_n - \bar{\hat m}^a\hnabla_a \hat{\phi}_{n-1} + \frac{\cot\theta}{2\sqrt 2\sin\zeta} (n-2) {\hat\phi}_{n-1} - \frac{\cos\zeta}{2} {\hat\phi}_n\right)\bar{\hat\iota}^{A'}\hat{o}_B \hat{o}_C...\hat{o}_F\cr
&&+(-1)^{n-k-1}\left( \hat{n}^a\hnabla_a {\hat\phi}_k - \hat{m}^a\hnabla_a{\hat\phi}_{k+1} + \frac{\cot\theta}{2\sqrt 2\sin\zeta}(n-2k-2) {\hat\phi}_{k+1}\right.\cr
&&\hspace{5cm}\left. + (k+1)\frac{\cos\zeta}{2} {\hat\phi}_k\right)\bar{\hat o}^{A'}\hat{o}_B \underbrace{\hat{o}_C...\hat{o}_M}_{k-1\, terms} \underbrace{\hat{\iota}_N...\hat{\iota}_F}_{n-k-1 \, terms}\cr
&&-(-1)^{n-k-1}\left( \hat{l}^a\hnabla_a {\hat\phi}_k - \bar{\hat m}^a\hnabla_a\hat{\phi}_{k-1} - \frac{\cot\theta}{2\sqrt 2\sin\zeta}(n-2k+2){\hat\phi}_{k-1}\right.\cr
&&\hspace{5cm}\left. - (n-k+1)\frac{\cos\zeta}{2} {\hat\phi}_k \right) \bar{\hat \iota}^{A'}\hat{\iota}_B \underbrace{\hat{o}_C...\hat{o}_M}_{k-1\, terms}\underbrace{\hat{\iota}_N...\hat{\iota}_F}_{n-k-1 \, terms}\cr
&&+ (-1)^n\left( \hat{n}^a\hnabla_a {\hat\phi}_0 - \hat{m}^a\hnabla_a {\hat\phi}_1 + \frac{\cot\theta}{2\sqrt 2\sin\zeta}(n-2)\frac{\cot\theta}{2} {\hat\phi}_1 + \frac{\cos\zeta}{2} {\hat\phi}_0 \right) \bar{\hat o}^{A'}\hat{\iota}_B\hat{\iota}_C...\hat{\iota}_F,
\end{eqnarray}
where $\underbrace{\hat{o}_C...\hat{o}_M}_{k-1\, terms} \underbrace{\hat{\iota}_N...\hat{\iota}_F}_{n-k-1 \, terms}\, (1\leq k \leq n-1)$ denotes the sum of $C^{k-1}_{n-2}$ components, where each component consists $(k-1)$  omicrons $\hat{o}_M$ and $(n-k-1)$ iotas $\hat{\iota}_N$ with $M,N\in \left\{C,D...F\right\}$. Multiplying \eqref{P} by $\bar{\hat\iota}_{A'}\hat{\iota}^B + \bar{\hat o}_{A'}\hat{o}^B$ and noting that $\bar{\hat o}_{A'}\bar{\hat o}^{A'} = \bar{\hat \iota}_{A'}\bar{\hat \iota}^{A'} = 0$, $\bar{\hat \iota}_{A'}\bar{\hat{o}}^{A'}= \hat{\iota}_B\hat{o}^B = -1$, $\bar{\hat o}_{A'}\bar{\hat{\iota}}^{A'} = \hat{o}_B\hat{\iota}^B=1$ we get
\begin{eqnarray}\label{PROEQ}
0 &=& \left(\bar{\hat\iota}_{A'}\hat{\iota}^B + \bar{\hat o}_{A'}\hat{o}^B \right)\hnabla^{AA'}\hat{\phi}_{AB...F} \cr
&=&  (-1)^{n-k-1}\left( (\hat{l}^a - \hat{n}^a)\hnabla_a {\hat\phi}_k + \hat{m}^a\hnabla_a{\hat\phi}_{k+1} - \bar{\hat m}^a\hnabla_a\hat{\phi}_{k-1} \right.\cr
&&\left.- \frac{\cot\theta}{2\sqrt 2\sin\zeta}(n-2k-2) {\hat\phi}_{k+1} - (k+1)\frac{\cos\zeta}{2} {\hat\phi}_k \right.\cr
&&\left.- \frac{\cot\theta}{2\sqrt 2\sin\zeta} (n-2k+2) {\hat\phi}_{k-1} - (n-k+1)\frac{\cos\zeta}{2} {\hat\phi}_k \right) \underbrace{\hat{o}_C...\hat{o}_M}_{k-1\, terms} \underbrace{\hat{\iota}_N...\hat{\iota}_F}_{n-k-1 \, terms}\cr
&=&(-1)^{n-k-1}\left[ {\sqrt 2}\partial_\zeta {\hat\phi}_k - \frac{1}{\sqrt 2\sin\zeta}\left( \partial_\theta - \frac{i}{\sin\theta}\partial_\varphi + (n-2k+2)\frac{\cot\theta}{2}  \right) {\hat\phi}_{k-1} \right.\cr 
&&\left. + \frac{1}{\sqrt 2\sin\zeta}\left( \partial_\theta + \frac{i}{\sin\theta}\partial_\varphi  - (n-2k-2)\frac{\cot\theta}{2}  \right) {\hat\phi}_{k+1} - (n+2)\frac{\cos\zeta}{2} {\hat\phi}_k \right]\underbrace{\hat{o}_C...\hat{o}_M}_{k-1\, terms} \underbrace{\hat{\iota}_N...\hat{\iota}_F}_{n-k-1 \, terms} .
\end{eqnarray}
Recall that $\hat{l}^a = \frac{1}{\sqrt{2}}(\partial_\tau + \partial_\zeta)$ and $\hat{n}^a = \frac{1}{\sqrt{2}}(\partial_\tau - \partial_\zeta)$, hence
$\hat{l}^a+\hat{n}^a = \sqrt{2}\partial_\tau = \mathcal{T}^a$. This leads to
$$\hat{o}^B\bar{\hat{o}}^{B'} + \hat{\iota}^B\bar{\hat\iota}^{B'} = \mathcal{T}^{BB'}.$$
Lowering two sides of this equality by $\bar{\hat{\varepsilon}}_{B'A'} = -\bar{\hat{o}}_{A'}\bar{\hat\iota}_{B'} + \bar{\hat\iota}_{A'}\bar{\hat o}_{B'}$ we get
$$\bar{\hat{o}}_{A'}\hat{o}^B + \bar{\hat\iota}_{A'}\hat{\iota}^B = \mathcal{T}^{B}_{A'}.$$
Plugging this into \eqref{PROEQ} we obtain that
\begin{eqnarray*}
0 &=& \mathcal{T}^{B}_{A'}\hnabla^{AA'}\hat{\phi}_{AB...F} \cr
&=&(-1)^{n-k-1}\left[ {\sqrt 2}\partial_\zeta {\hat\phi}_k - \frac{1}{\sqrt 2\sin\zeta}\left( \partial_\theta - \frac{i}{\sin\theta}\partial_\varphi + (n-2k+2)\frac{\cot\theta}{2}  \right) {\hat\phi}_{k-1} \right.\cr 
&&\left. + \frac{1}{\sqrt 2\sin\zeta}\left( \partial_\theta + \frac{i}{\sin\theta}\partial_\varphi  - (n-2k-2)\frac{\cot\theta}{2}  \right) {\hat\phi}_{k+1} - (n+2)\frac{\cos\zeta}{2} {\hat\phi}_k \right]\underbrace{\hat{o}_C...\hat{o}_M}_{k-1\, terms} \underbrace{\hat{\iota}_N...\hat{\iota}_F}_{n-k-1 \, terms}.
\end{eqnarray*}
From \eqref{InDer} we have that $\mathcal{T}^{B}_{A'}\hnabla^{AA'}\hat{\phi}_{AB...F} = \mathcal{T}^{AA'}\hnabla^B_{A'}\hat{\phi}_{AB...F}$. Therefore, the constraint system on the hypersurface $\left\{ \tau=constant \right\}$ in the term of spin components \eqref{CONE}:
\begin{gather*}
{\sqrt 2}\partial_\zeta {\hat\phi}_k - \frac{1}{\sqrt 2\sin\zeta}\left( \partial_\theta - \frac{i}{\sin\theta}\partial_\varphi + (n-2k+2)\frac{\cot\theta}{2}  \right) {\hat\phi}_{k-1} \cr 
+ \frac{1}{\sqrt 2\sin\zeta}\left( \partial_\theta + \frac{i}{\sin\theta}\partial_\varphi  - (n-2k-2)\frac{\cot\theta}{2}  \right) {\hat\phi}_{k+1} = (n+2)\frac{\cos\zeta}{2} {\hat\phi}_k \,\,(k=1,2...n-1)
\end{gather*}
corresponds to equation \eqref{con}:
$\left(\mathcal{T}^{AA'}\hnabla^B_{A'}\hat{\phi}_{AB...F}\right)|_{\tau=constant} =0$.

Now we prove that the constraint system has non-trivial solutions. As we have seen, the spin-$n/2$ zero rest-mass field equations 
\begin{equation}\label{zero_rest_mass}
\nabla^{AA'}\phi_{AB...F} = 0
\end{equation}
are an overdetermined system, which can be split into an evolution part and a spacelike constraint  that is preserved by the evolution. This constraint system is analogous to an elliptic equation, it is therefore not clear that it admits smooth compactly supported solutions. Penrose \cite{Pe1965} (see also recent \cite{ABJ}) shows that in Minkowski spacetime any solution of the spin-$n/2$ zero rest-mass field, at least locally can be obtained from a scalar potential (also called Hertz-type potential) $\chi$ satisfying the wave equation $\Box\chi = 0$. The construction is as follows : let $\phi_{(AB...F)}$ be a solution of the spin-$n/2$ zero rest-mass field equation and $\alpha_{A'}$ a spinor which is chosen to constant throughout Minkowski spacetime. Then at least locally we can find a spin-$(n-1)/2$ zero rest-mass field $\psi_{(B...F)}$ satisfying the spin-$(n-1)/2$ zero rest-mass equation $\nabla^{BB'}\psi_{B...F} = 0$ and
$$\phi_{AB...F} = \alpha_{A'}\nabla_A^{A'}\psi_{B...F} \, .$$
Continuing this process we get the scalar potential $\chi$ as follows
$$\phi_{AB...F} = \alpha_{A'}\beta_{B'}...\gamma_{F'}\nabla_{A}^{A'}\nabla_B^{B'}...\nabla_F^{F'}\chi$$
where $\chi$ satisfying $\Box\chi = 0$ and $\alpha_{A'},\beta_{B'}...\gamma_{F'}$ are given constant spinor.

Conversely we show that any solution $\chi$ satisfying the wave equation gives rise to a solution $\phi_{AB...F}$ of equation \eqref{zero_rest_mass} via a choice of constant spinors $\alpha_{A'},\beta_{B'}...\gamma_{F'}$. Indeed, we have
\begin{eqnarray*}
\nabla^{Z'A} \alpha_{A'}\beta_{B'}...\gamma_{F'}\nabla_{A}^{A'}\nabla_B^{B'}...\nabla_F^{F'}\chi &=& \alpha_{A'}\beta_{B'}...\gamma_{F'} \nabla^{Z'A}\nabla_{A}^{A'}\nabla_B^{B'}...\nabla_F^{F'}\chi \cr
&=& \alpha_{A'}\beta_{B'}...\gamma_{F'} \left( \frac{1}{2}\varepsilon^{Z'A'}\Box + \Box^{Z'A'}  \right) \nabla_B^{B'}...\nabla_F^{F'}\chi \cr
&=& \alpha_{A'}\beta_{B'}...\gamma_{F'}\frac{1}{2}\varepsilon^{Z'A'}  \nabla_B^{B'}...\nabla_F^{F'} \Box\chi \\
&=& 0.
\end{eqnarray*}
Here, due to our work on Minkowski spacetime, all the curvatures disappear, then $\Box^{Z'A'}  = 0$ and the wave operator can be commuted with the dervatives, for instance $[\Box,\nabla_B^{B'}] = 0.$

Therefore $\phi_{(AB...F)}: = \alpha_{A'}\beta_{B'}...\gamma_{F'}\nabla_{A}^{A'}\nabla_B^{B'}...\nabla_F^{F'}\chi$ is a solution of the equation \eqref{zero_rest_mass}. This shows that in Minkowski spacetime:
\begin{itemize}
\item[a)] If the energy of the initial data on $\Sigma_0$ is finite, then the energy of the solution on the space slices $\Sigma_T = \left\{t = T \right\}$ are also finite (by energy estimate). Since $\overline{\mathcal{C}_0^\infty(\Sigma_T)} = L^2(\Sigma_T)$, we can consider the wave equation $\Box \chi = 0$ with the compactly supported initial data, then we get a unique solution which is smooth compactly supported in space due to the finite propagation speed property. So that, there are solutions of spin $n/2$ zero rest-mass equation \eqref{zero_rest_mass} that are smooth and compactly supported in space.
\item[b)] As a consequence of $a)$, the constraint equations admit smooth compactly supported solutions and non-trivial finite energy solutions. It is not completely clear that all finite energy solutions can be obtained from a Hertz-type potential since the construction is merely local. Hence we shall work on the constrained subspace $\mathcal{H}_0 = \overline{\mathcal{C}_0^\infty (\Sigma_0,\mathbb{S}_{(AB...F)}) \cap \mathcal{D}}^{{L^2(\Sigma_0,\mathbb{S}_{AB...F})}}$.
\end{itemize}
Since the equation \eqref{zero_rest_mass} is conformally invariant, the same property is valid on the full and partial conformal compactification spacetimes $\hat{\mathbb M}$ and $\widetilde{\mathbb{M}}$.

\subsection{Generalisation of H\"ormander's result}\label{app_3_Goursat}
In this part we extend the result of H\"ormander \cite{Ho1990} for the spin-wave equations. The result of H\"ormander was extended for the scalar wave equation by Nicolas \cite{Ni2006} with the following minor modifications: the $\mathcal{C}^1$-metric, the continuous coefficients of the derivatives of the first order and the terms of order zero have locally $L^\infty$-coefficients. We refer \cite{Jo2020,Mo2019,Ni2016,Xuan2021} for the applications of the generalisation of H\"ormander's result to establish the wellposedness of the Goursat problem for the Dirac, Maxwell, linear and semilinear wave equations in the asymptotic simple and flat spacetimes. Here we will show that how the Goursat problem is valid for the spin-wave equations in the future $\mathcal{I}^+(\mathcal{S})$ of $\mathcal{S}$ in $\widetilde{{\mathbb M}}$ (recall that $\mathcal{S}$ is the spacelike hypersurface in $\widetilde{\mathbb M}$ such that it pass $\scri^+$ strictly in the past of the support of the initial data). 

\begin{figure}[H]
\begin{center}
\includegraphics[scale=0.5]{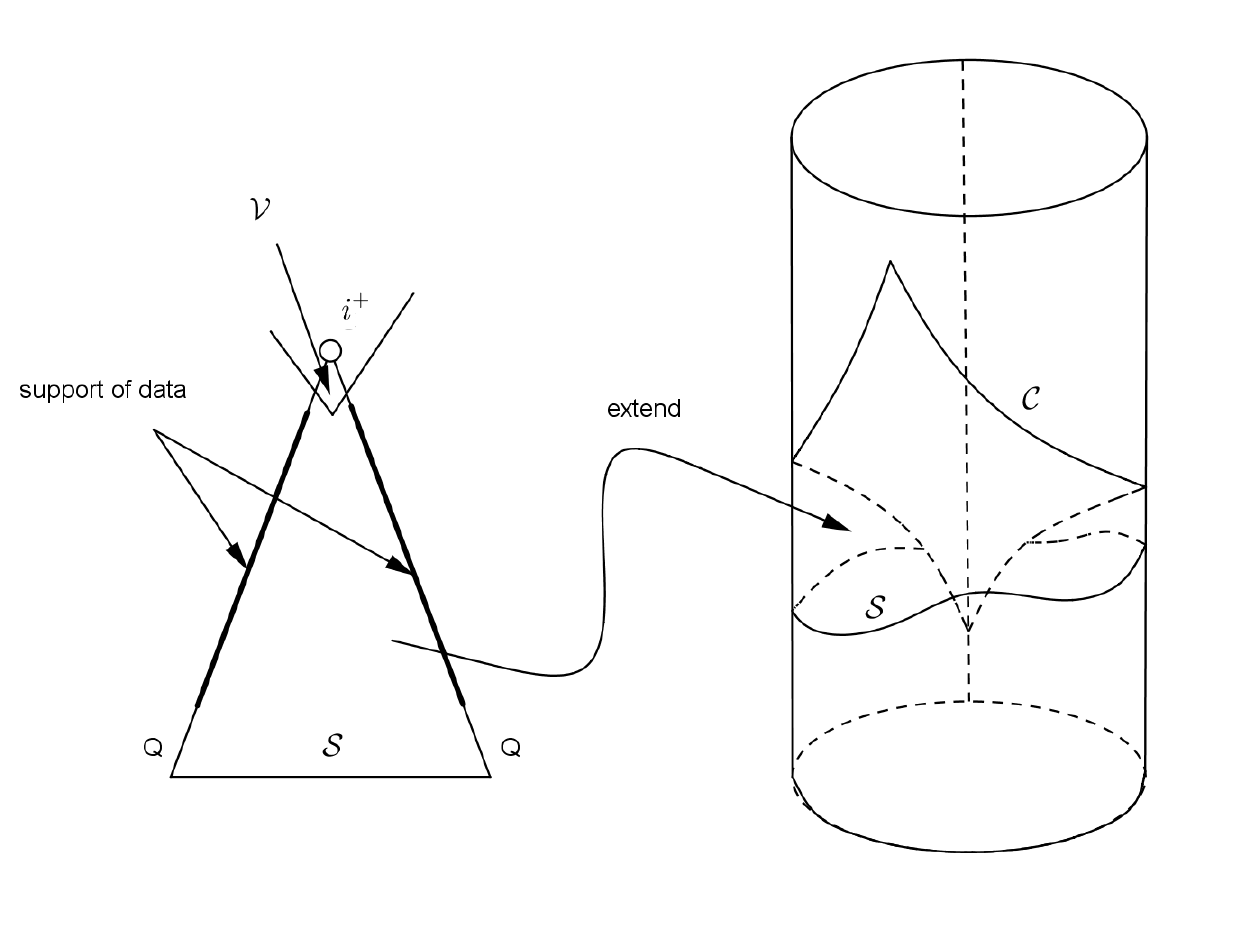}
\caption{The cutting off and extension of the future $\mathcal{I}^+(\mathcal{S})$.}
\end{center}
\end{figure}

Let $P$ be a point in the future $\mathcal{I}^+(\mathcal{S})$ in $\widetilde{\mathbb M}$, we cut off the future $\mathcal{V}$ of $P$ such that $\mathcal{V}$ does not contain the support of Goursat data and get the spacetime $\mathfrak{M}$. Then we extend $\mathfrak{M}$ onto a cylindrical globally hyperbolic spacetime $(\mathbb{R}_t\times S^3, \mathfrak{g})$, where $\mathfrak{g}|_{\hat{\mathfrak{M}}} = \tilde{g}|_{\mathfrak{M}}$ and the part of $\scri^+$ inside $\mathcal{I}^+(\mathcal{S})/\mathcal{V}$ is extended as a null hypersurface $\mathcal{C}$ (that is the graph of a Lipschitz function over $S^3$ and the data by zero on the rest of the extended hypersurface).

We consider the Goursat problem of the following spin wave equation in the spacetime $(\mathcal{M} = \mathbb{R}_t\times S^3, \mathfrak{g})$:
\begin{align}\label{spinwaveeq}
\begin{cases}
\Box_{\mathfrak{g}} \phi_{AB...F} &= 0,\cr
\phi_{AB...F}|_{\mathcal{C}} &= \psi_{AB...F}|_{\mathcal{C}} \in \mathcal{C}^\infty_0(\mathcal{C}, {\mathbb S}_{(AB...F)}),\cr
\nabla^{AA'}\phi_{AB...F}|_{\mathcal{C}} & = \zeta_{AB...F}|_{\mathcal{C}} \in \mathcal{C}^\infty_0(\mathcal{C}, {\mathbb S}_{(AB...F)}).
\end{cases}
\end{align}

Following \cite{Sti1936}, the spacetime $(\mathcal{M} = \mathbb{R}_t\times S^3, \mathfrak{g})$ is parallelizable, i.e, it admit a continuous global frame in the sense that the tangent space at each point has a basis. Therefore, we can chose a global spin-frame $\left\{o,\iota \right\}$ for $\mathcal{M}$ such that in this spin-frame the Newman-Penrose tetrad is $\mathcal {C}^{\infty}$.
Projecting \eqref{spinwaveeq} on $\left\{o,\iota \right\}$ (see the last of Appendix \ref{app_3_express} for the projection of the constrain equation $\nabla^{AA'}\phi_{AB...F}|_{\mathcal{C}}$) we get the scalar matrix form as follows
\begin{align}\label{system-wave}
\begin{cases}
P\Phi + L_1 \Phi &= 0,\cr
(\Phi, \,\partial_t\Phi)|_{t=0} &= (\Psi, \, \partial_t\Psi) \in \mathcal{C}^\infty_0(\mathcal{C})\times \mathcal{C}^\infty_0(\mathcal{C}), 
\end{cases}
\end{align}
where 
$$P= \left(\begin{matrix}
\Box&&0&&...&&0\\
0&&\Box&&...&&0\\
...&&...&&...&&...\\
0&&0&&...&&\Box
\end{matrix} \right)$$
is the $(n+1)\times (n+1)-$matrix diagram,
$$\Phi = \left( \begin{matrix}\phi_0\\ \phi_1\\...\\ \phi_n \end{matrix} \right), \, \Psi = \left( \begin{matrix}\psi_0\\ \psi_1\\...\\ \psi_n \end{matrix} \right)$$
is the components of $\phi_{AB...F}$ and $\Psi_{AB...F}$ respectively on the spin-frame $\left\{o,\iota \right\}$ and 
$$L_1 = \left( \begin{matrix} L_1^{00}&&L_1^{01}&&...&& L_1^{0n}\\ L_1^{10}&& L_1^{11}&&...&& L_1^{1n}\\...&&...&&...&&...\\ L_1^{n0}&& L_1^{n1}&&...&& L_1^{nn}  \end{matrix}  \right)$$
is the $(n+1)\times (n+1)-$ matrix where the components are the operators that have the coefficients $\mathcal {C}^\infty$:
$$L_1^{ij} = b_0^{ij}\partial_t + b_\alpha^{ij}\partial_\alpha + c^{ij}.$$

Since $\mathfrak{g}$ is a $\mathcal{C}^1$-metric, the first order terms in $L_1$ have continuous coefficients and the terms of order $0$ have locally $L^\infty$-coefficients, the Goursat problem for the $(n+1)\times (n+1)$-matrix wave equation \eqref{system-wave} is well-posed in $(\mathbb{R}_t\times S^3,\mathfrak{g})$ by applying theorems 3 and 4 in \cite{Ni2006}.
\begin{thm}
For the initial data $(\psi_i,\partial_t\psi_i) \in {\mathcal C}_0^{\infty}(\mathcal{C})\times {\mathcal{C}}_0^\infty(\mathcal{C}) $ for all $i=1,2...n$, the $(n+1)\times (n+1)$-matrix equation \eqref{system-wave}, hence the spin wave equation \eqref{spinwaveeq} has a unique solution $\Phi = (\phi_1,\phi_2...,\phi_n)$ satisfies
$$\phi_i \in {\mathcal {C}}(\mathbb{R};H^1(S^3))\cap \mathcal{C}^1(\mathbb{R};L^2(S^3)) \; \mbox{for all} \; i = 1,2...n.$$
\end{thm}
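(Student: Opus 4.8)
The plan is to deduce the matrix statement from the scalar Goursat theory of Nicolas \cite{Ni2006} (his Theorems 3 and 4, themselves a low-regularity generalisation of H\"ormander \cite{Ho1990}), exploiting the fact that the principal part $P$ of the system \eqref{system-wave} is the diagonal matrix $\mathrm{diag}(\Box,\dots,\Box)$, so that all coupling between the components $\phi_0,\dots,\phi_n$ is concentrated in the genuinely lower-order operator $L_1$. Before anything else I would verify that the hypotheses of \cite{Ni2006} are met: the extended spacetime $(\mathbb{R}_t\times S^3,\mathfrak{g})$ is globally hyperbolic with a $\mathcal{C}^1$ Lorentzian metric, the initial hypersurface $\mathcal{C}$ is the graph of a Lipschitz function over $S^3$, and the entries $L_1^{ij}=b_0^{ij}\partial_t+b_\alpha^{ij}\partial_\alpha+c^{ij}$ have continuous first-order coefficients and locally bounded zeroth-order coefficients — exactly the regularity class in which the generalized results hold. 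Finite propagation speed together with the compact support of $(\psi_i,\partial_t\psi_i)$ confines the entire problem to a fixed compact region, which makes all the constants below uniform.

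For uniqueness and continuous dependence I would run the energy method for the whole system at once. Because $P$ is diagonal with each diagonal entry the single scalar wave operator $\Box$, the natural energy of $\Phi$ is the sum $\sum_i E[\phi_i]$ of the scalar wave energies, and its variation between two level sets of $t$ is, by the scalar energy identities, controlled by the flux through the characteristic boundary plus the spacetime integral of $\langle \Phi,L_1\Phi\rangle$. Since $L_1$ is of order one with coefficients bounded on the relevant compact set, this last term is dominated by a constant times the energy, so a Gr\"onwall argument yields the energy inequality and hence uniqueness of the solution in $\mathcal{C}(\mathbb{R};H^1(S^3))\cap\mathcal{C}^1(\mathbb{R};L^2(S^3))$. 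This is the system analogue of the estimate underlying the scalar theorem and is insensitive to the coupling precisely because $L_1$ is subprincipal.

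For existence I would combine the scalar existence theorem with this a priori estimate through a Picard-type iteration: set $\Phi^{(0)}=0$ and define $\Phi^{(m+1)}$, component by component via \cite{Ni2006}, as the solution of the decoupled inhomogeneous scalar Goursat problems $\Box\,\phi_i^{(m+1)}=-\big(L_1\Phi^{(m)}\big)_i$ carrying the prescribed data on $\mathcal{C}$. Applying the uniform energy estimate to the differences $\Phi^{(m+1)}-\Phi^{(m)}$ (which solve the same problem with vanishing data and source $-L_1(\Phi^{(m)}-\Phi^{(m-1)})$) shows, after splitting into short time-slabs if needed, that the iteration is Cauchy in the energy norm and converges to a solution of \eqref{system-wave} of the stated regularity; its trace on $\mathcal{C}$ matches the data since every iterate does and the convergence controls that trace. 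Undoing the projection onto the spin-frame $\{o,\iota\}$ then yields the unique solution $\phi_{AB\dots F}$ of \eqref{spinwaveeq}.

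I expect the main obstacle to be not the algebra of the reduction but the low-regularity functional analysis already packaged into \cite{Ni2006}: making sense of, and actually attaining, the characteristic data on the merely Lipschitz hypersurface $\mathcal{C}$ when the metric is only $\mathcal{C}^1$. The delicate point is to approximate $\mathcal{C}$ and its data by smooth spacelike surfaces, solve the resulting standard Cauchy problems, and pass to the limit with energy bounds that stay uniform as the approximating surfaces degenerate to the null surface $\mathcal{C}$ — which is exactly where the hypotheses checked in the first step are used and why classical smooth Goursat theory does not suffice. The diagonal structure of $P$ is precisely what allows this limiting scheme, designed for a single scalar equation, to be carried out in parallel for the $(n+1)$ components with only a Gr\"onwall coupling.
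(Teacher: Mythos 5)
Your proposal is correct, and its skeleton coincides with the paper's: both of you verify that the extended cylindrical spacetime $(\mathbb{R}_t\times S^3,\mathfrak{g})$, the Lipschitz-graph null hypersurface $\mathcal{C}$, and the coefficients of $L_1$ satisfy exactly the hypotheses of the low-regularity H\"ormander theory, and both of you use \cite[Theorems 3 and 4]{Ni2006} as the analytic engine. The difference lies in how the matrix structure is handled. The paper applies the theorems of \cite{Ni2006} \emph{directly} to the $(n+1)\times(n+1)$ system \eqref{system-wave}, in one sentence, implicitly taking for granted that the scalar proofs (energy estimates up to the characteristic surface, approximation of $\mathcal{C}$ by spacelike hypersurfaces, duality) carry over verbatim when the principal part is $\Box$ times the identity and all coupling sits in the subprincipal operator $L_1$. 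You instead use the scalar results only as stated, as black boxes for decoupled equations, and supply the scalar-to-system passage yourself: a system energy identity plus Gr\"onwall for uniqueness, and for existence a Picard iteration in which each step solves the decoupled \emph{inhomogeneous} scalar Goursat problems $\Box\,\phi_i^{(m+1)}=-\bigl(L_1\Phi^{(m)}\bigr)_i$ with the prescribed trace on $\mathcal{C}$. Your route is more self-contained and in fact substantiates precisely the step the paper leaves implicit; what it buys the paper is brevity, and what yours buys is an actual justification that a diagonal-principal-part system falls within the scope of the scalar theory. The one point you should make explicit is that your scheme requires the source-term version of the scalar characteristic problem, together with energy estimates on thin slabs abutting the characteristic hypersurface for the contraction argument; neither is literally the statement quoted from \cite{Ni2006}, but both follow from the energy estimates established there and in \cite{Ho1990}, so this is a presentational caveat rather than a gap. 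The concluding localization step --- finite propagation speed and the compact support of the data forcing the solution to vanish outside $\mathfrak{M}$, so that its restriction solves the original problem --- is the same in both treatments.
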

By using local uniqueness, causality and the finite propagation speed we have that the solution $\Phi$ vanishes in $\mathcal{I}^+(\mathcal{S})/\mathfrak{M}$. Therefore, the Goursat problem has a unique smooth solution in the future of $\mathcal{S}$, that is the restriction of $\Phi$ to $\mathfrak{M}$.

\subsection{Detailed calculations for the Goursat problem}\label{app_3_express}
We have the expression of the spinor field $\Xi^{A'}{_{\underbrace{(B...F)}_{n-1 \; indexs}}}$ on the rescaled spin-frame $\left\{\widetilde{o}_A,\,\widetilde{\iota}_A \right\}$ and its dual conjugation $\left\{\bar{\widetilde{o}}^{A'},\,\bar{\widetilde{\iota}}^{A'} \right\}$:
\begin{eqnarray*}
\Xi^{K'}{_{(B...F)}} &=& \sum_{k=0}^{n-1}(-1)^k \Xi^{1'}{_{n-1-k}} \bar{\widetilde{o}}^{K'} \underbrace{\widetilde{ \iota}_B...\widetilde{\iota}_c}_{k \; terms} \underbrace{\widetilde{o}_D...\widetilde{o}_F}_{n-1-k \; terms} \cr
&&- \sum_{k=0}^{n-1}(-1)^k \Xi^{0'}{_{n-1-k}} \bar{\widetilde{\iota}}^{K'} \underbrace{\widetilde{\iota}_{B}...\widetilde{\iota}_C}_{k \; terms} \underbrace{\widetilde{o}_D...\widetilde{o}_F}_{n-1-k \; terms}.
\end{eqnarray*}
The covariant derivative $\widetilde{\nabla}_{AK'}$ acts on the full spinor field can be decomposed as follows
\begin{eqnarray}\label{decompSpin}
\widetilde{\nabla}_{AK'} \Xi^{K'}{_{(B...F)}} &=& -(\widetilde{D}\Xi^{K'}{_{(B...F)}})\widetilde{\iota}_A\bar{\widetilde{\iota}}_{K'} + (\widetilde{D}'\Xi^{K'}{_{(B...F)}})\widetilde{o}_A\bar{\widetilde{o}}_{K'} \cr
&&- (\widetilde{\delta}\Xi^{K'}{_{(B...F)}})\widetilde{\iota}_A\bar{\widetilde{o}}_{K'} + (\widetilde{\delta}'\Xi^{K'}{_{(B...F)}})\widetilde{o}_A\bar{\widetilde{\iota}}_{K'}.
\end{eqnarray}
In the partial conformal compactification $\widetilde{M}$, we have the twelve values of the spin coefficients which are
$$\widetilde{\kappa} = \widetilde{\sigma} = \widetilde{\lambda} = \widetilde{\tau} = \widetilde{\nu} = \widetilde{\pi} = \widetilde{\rho} = \widetilde{\mu} = \widetilde{\epsilon} = 0,$$
$$\widetilde{\gamma} = -\frac{R}{\sqrt 2} \; , \; \widetilde{\beta} = - \widetilde{\alpha} = \frac{\cot\theta}{2\sqrt 2}.$$
The covariant derivatives act on the spin-frame $\left\{ \widetilde{o}_A,\, \widetilde{\iota}_A \right\}$ as (see \cite[eq. (4.5.26), pp. 227, Vol. 1]{PeRi}):
\begin{equation}\label{Modu1}
\begin{gathered}
\widetilde{D}\widetilde{o}_A = \widetilde{\varepsilon} \widetilde{o}_A - \widetilde{\kappa}\widetilde{\iota}_A = 0, \; \widetilde{D}\widetilde{\iota}_A = -\widetilde{\varepsilon}\widetilde{\iota}_A + \widetilde{\pi}\widetilde{o}_A = 0,\cr
\widetilde{\delta}'\widetilde{o}_A = \widetilde{\alpha} \widetilde{o}_A - \widetilde{\rho}\widetilde{\iota}_A   = -\frac{\cot\theta}{2\sqrt 2} \widetilde{o}_A, \; \widetilde{\delta}'\widetilde{\iota}_A = -\widetilde{\alpha}\widetilde{\iota}_A + \widetilde{\lambda}\widetilde{o}_A  = \frac{\cot\theta}{2\sqrt 2} \widetilde{\iota}_A,\cr
\widetilde{\delta}\widetilde{o}_A = \widetilde{\beta} \widetilde{o}_A - \widetilde{\sigma}\widetilde{\iota}_A   = \frac{\cot\theta}{2\sqrt 2} \widetilde{o}_A, \; \widetilde{\delta}'\widetilde{\iota}_A = -\widetilde{\beta}\widetilde{\iota}_A + \widetilde{\mu}\tilde{o}_A = -\frac{\cot\theta}{2\sqrt 2} \widetilde{\iota}_A,\cr
\widetilde{D}'\widetilde{o}_A = \widetilde{\gamma} \widetilde{o}_A - \widetilde{\tau}\widetilde{\iota}_A = -\frac{R}{\sqrt 2}\widetilde{o}_A, \; \widetilde{D}'\widetilde{\iota}_A = -\widetilde{\gamma}\widetilde{\iota}_A + \widetilde{\nu}\hat{\hat o}_A  = \frac{R}{\sqrt 2} \widetilde{\iota}_A.
\end{gathered}
\end{equation}
Similarly, on the dual conjugation spin-frame $\left\{ \bar{\widetilde{o}}^{A'}, \, \bar{\widetilde{\iota}}^{A'} \right\}$ we have
\begin{equation}\label{Modu2}
\begin{gathered}
\widetilde{D}\bar{\widetilde{o}}^{A'} = 0, \; \widetilde{D}\bar{\widetilde{\iota}}^{A'} = 0,\cr
\widetilde{\delta}'\bar{\widetilde{o}}^{A'} = -\frac{\cot\theta}{2\sqrt 2} \bar{\widetilde{o}}^{A'}, \; \widetilde{\delta}'\bar{\widetilde{\iota}}^{A'} = \frac{\cot\theta}{2\sqrt 2} \bar{\widetilde{\iota}}^{A'},\cr
\widetilde{\delta}\bar{\widetilde{o}}^{A'} = \frac{\cot\theta}{2\sqrt 2} \bar{\widetilde{o}}^{A'}, \; \widetilde{\delta}\bar{\widetilde{\iota}}^{A'} = -\frac{\cot\theta}{2\sqrt 2} \bar{\widetilde{\iota}}^{A'},\cr
\widetilde{D}'\bar{\widetilde{o}}^{A'} = -\frac{R}{\sqrt 2}\bar{\widetilde{o}}^{A'}, \; \widetilde{D}'\bar{\widetilde{\iota}}^{A'} = \frac{R}{\sqrt 2} \bar{\widetilde{\iota}}^{A'}.
\end{gathered}
\end{equation}
Therefore, since \eqref{decompSpin}, \eqref{Modu1} and \eqref{Modu2}, we obtain the detailed expression of $\widetilde{\nabla}_{AK'}\Xi^{K'}{_{(B...F)}}$ as 
\begin{eqnarray*}
&&\widetilde{\nabla}_{AK'}\Xi^{K'}{_{B...F}} = (\widetilde{D}\Xi^{K'}{_{B...F}})\widetilde{\iota}_A\bar{\widetilde{\iota}}_{K'} + (\widetilde{D}'\Xi^{K'}{_{B...F}})\widetilde{o}_A \bar{\widetilde{o}}_{K'} - (\widetilde{\delta}\Xi^{K'}{_{B...F}})\widetilde{\iota}_{A} \bar{\widetilde{o}}_{K'} - (\widetilde{\delta}'\Xi^{K'}{_{B...F}})\widetilde{o}_A \bar{\widetilde{\iota}}_{K'} \cr
&=& -\widetilde{D} \left( \sum_{k=0}^{n-1}(-1)^k \Xi^{1'}{_{n-1-k}} \bar{\widetilde{o}}^{K'} \underbrace{\widetilde{\iota}_B...\widetilde{\iota}_C}_{k \; terms} \underbrace{\widetilde{o}_D...\widetilde{o}_F}_{n-1-k \; terms} \right) \widetilde{\iota}_A \bar{\widetilde{\iota}}_{K'} \cr
&& + \widetilde{D}' \left( \sum_{k=0}^{n-1}(-1)^k \Xi^{0'}{_{n-1-k}} \bar{\widetilde{\iota}}^{K'} \underbrace{\widetilde{\iota}_{B}...\widetilde{\iota}_C}_{k \; terms} \underbrace{\widetilde{o}_D...\widetilde{o}_F}_{n-1-k \; terms} \right) \widetilde{o}_A \bar{\widetilde{o}}_{K'}\cr
&& - \widetilde{\delta} \left(\sum_{k=0}^{n-1}(-1)^k \Xi^{0'}{_{n-1-k}} \bar{\widetilde{\iota}}^{K'} \underbrace{\widetilde{\iota}_{B}...\widetilde{\iota}_C}_{k \; terms} \underbrace{\widetilde{o}_D...\widetilde{o}_F}_{n-1-k \; terms}  \right) \widetilde{\iota}_{A} \bar{\widetilde{o}}_{K'} \cr
&& + \widetilde{\delta}' \left( \sum_{k=0}^{n-1}(-1)^k \Xi^{1'}{_{n-1-k}} \bar{\widetilde{o}}^{K'} \underbrace{\widetilde{ \iota}_B...\widetilde{\iota}_C}_{k \; terms} \underbrace{\widetilde{o}_D...\widetilde{o}_F}_{n-1-k \; terms}  \right) \widetilde{o}_A \bar{\widetilde{\iota}}_{K'} \cr
&=& \sum_{k=0}^{n-1} (-1)^k \left( \widetilde{D} \Xi^{1'}{_{n-1-k}} - \widetilde{\delta} \Xi^{0'}{_{n-1-k}} - \frac{n-2-2k}{2\sqrt 2}\cot\theta \Xi^{0'}{_{n-1-k}}\right) \widetilde{\iota}_A\underbrace{\widetilde{\iota}_B...\widetilde{\iota}_C}_{k \; terms} \underbrace{\widetilde{o}_D...\widetilde{o}_F}_{n-1-k \; terms} \cr
&&+ \sum_{k=0}^{n-1} (-1)^k \left\{ \left( \widetilde{D}' - \frac{(2k+2-n)R}{\sqrt 2} \right)\Xi^{0'}{_{n-1-k}} \right.\cr
&&\hspace{6cm}\left. - \left( \widetilde{\delta}' + \frac{2k - n}{2\sqrt 2}\cot\theta \right)\Xi^{1'}{_{n-1-k}}  \right\}  \widetilde{o}_A \underbrace{\widetilde{\iota}_B...\widetilde{\iota}_C}_{k \; terms} \underbrace{\widetilde{o}_D...\widetilde{o}_F}_{n-1-k \; terms} .
\end{eqnarray*}
Therefore, the equation $\widetilde{\nabla}_{AK'}\Xi^{K'}{_{B...F}} = 0$ is equivalent to the following system
\begin{align*}
\begin{cases}
\widetilde{D} \Xi^{1'}{_{n-1-k}} - \widetilde{\delta} \Xi^{0'}{_{n-1-k}} - \frac{n-2-2k}{2\sqrt 2}\cot\theta \Xi^{0'}{_{n-1-k}} &= 0 \, ,\cr
\left(\widetilde{D}' - \frac{(2k+2-n)R}{\sqrt 2} \right)\Xi^{0'}{_{n-1-k}} - \left( \widetilde{\delta}' + \frac{2k - n}{2\sqrt 2}\cot\theta \right)\Xi^{1'}{_{n-1-k}} &= 0  
\end{cases}
\end{align*}
for all $k=0,1...n-1$. Taking the constrain of this system on $\scri^+$, we get only the constraint of the second equations
$$\sqrt 2 \partial_u \Xi^{0'}{_{n-1-k}}|_{\scri^+} = 0,$$
for all $k=0,1...n-1$. Integrating these equations along $\scri^+$, we get $\Xi^{0'}{_{n-1-k}}|_{\scri^+} = \; constant$. This leads to a fact that the Cauchy problem with the initial condition $\Xi^{0'}{_{n-1-k}}|_{{\mathcal V}(P)} = 0$ has a unique solution and it equals to zero.

\end{document}